\numberwithin{equation}{section}
\definecolor{rouge}{rgb}{0.7,0.00,0.00}
\definecolor{vert}{rgb}{0.00,0.5,0.00}
\definecolor{bleu}{rgb}{0.00,0.00,0.8}
\newtheorem{theorem}{Theorem}[section]
\newtheorem{lemma}[theorem]{Lemma}
\theoremstyle{definition}
\newtheorem{conditionC}{C\kern-0.mm}
\newcommand*{\given}[2]{\left.#1\,\middle|\,#2\right.}
\begin{document}

\title{Estimation of extreme survival probabilities with Cox model}

\author{Ion Grama and K\'evin Jaun\^atre}
\email{Contact: K\'evin Jaun\^atre. Email: kevin.jaunatre@hotmail.fr}
\date{LMBA, University of South Brittany, Vannes, France}

\maketitle

\begin{abstract}
We propose an extension of the regular Cox's proportional hazards model which allows the estimation of the probabilities of rare events. It is known that when the data are heavily censored, the estimation of the tail of the survival distribution is not reliable.
To improve the estimate of the baseline survival function in the range of the largest observed data and to extend it outside, we 
adjust the tail of the baseline distribution  beyond some threshold 
by an extreme value model under appropriate assumptions. 
The survival distributions conditioned to the covariates are easily computed from the baseline. 
A procedure allowing an automatic choice of the threshold and an
aggregated estimate of the survival probabilities are also proposed. 
The performance is studied by simulations and
an application on two data sets is given. 

Keywords: Survival probabilities, Extreme value theory, Adaptive estimation, Aggregation.
\end{abstract}

\section{Introduction}

The proportional hazards model introduced by \cite{Cox1972} has been largely 
studied over the years and multiple extensions have been made to the original model. 
These developments 
often aim to make inference on the regression parameters in various setting including
censoring, time-dependent coefficients, stratified and multistates models, missing and incomplete data etc. 
The references \cite{fleming2011}, \cite{andersen2012}, \cite{therneau2013} and \cite{klein2005} give an overall view of this subject. 
The estimation of the underlying hazard functions is an important ingredient which mostly follows
the interrogation of knowing the effect of a treatment.
We refer to \cite{crowley1977}, \cite{lee1992} among others for an illustration. 
If we are in presence of a significant amount of censored data it is well known that 
we cannot predict with sufficient precision how the tail of the estimated survival distribution behaves
near or beyond the last observed value.
This is illustrated in the Figure \ref{IntBlad} (top) using a real data example,
where we can observe that the estimated baseline survival probability becomes constant in the long run.

The present paper aims to estimate the values of the
survival distributions conditionally to a covariate in the Cox's proportional hazards model
in the case when the estimated probabilities are out of the range of the observed data
by using extreme value modeling.     
Our analysis is based on the Peak-Over-Threshold method (see \cite{embrechts2013}
), 
which allows to estimate the tail of a distribution beyond a threshold. 
Applications of this method can be seen in various domain such as insurance, biology, ecology, whether forecast etc. 
For insurance and financial applications, we can refer to \cite{mcneil1997} and \cite{danielsson1997} among many others. 
A rainfall data study can be found in \cite{Gardes2010} and high-frequency oyster data are studied in \cite{DGPT2015}.
For  some results related to the estimation of extreme values under random censoring without covariates 
we refer to \cite{Beirlant2007}, \cite{Beirlant2010} and \cite{Einmahl2008}.
The case with covariates has been considered in \cite{Stupfler2016}, \cite{Ndao2014} and  \cite{Ndao2016}.
However, to the best of our knowledge, this approach was not used so far in the context of the Cox model with covariates. 


Our idea is simple: we adjust a Pareto distribution for observations beyond a threshold, while 
the remaining part is estimated nonparametrically by the Nelson-Aalen estimator.  
The main difficulty is the appropriate choice of the threshold, which can be problematic 
as a large value will lead to an important variability and small value will increase the bias. 
This quandary
is well known in theory of extreme values. 
To choose the threshold we make use of the adaptive procedure based on consecutive 
tests developed in \cite{GramaSpokoiny2008}.  
In addition to this, we propose an aggregation procedure which allows to improve significantly the stability of the estimation. 
The performance of the proposed estimators is demonstrated by a simulation study and some applications are given. 
As an illustration in Figure \ref{IntBlad} (bottom) we show the estimated baseline survival function by
the proposed method.

The paper \cite{grama2014} deals also with the Cox model but in a very different way.
In \cite{grama2014}, the Cox model with qualitative covariates, say $z_i$ with a finite number of modalities is considered.
Each modality has its own choice of the threshold $\widehat\tau$, which in principle are different.
In the present paper we estimate only one threshold $\widehat\tau$ under the baseline distribution, which is automatically translated to the other modalities of the covariates. Thus we can deal with any type of covariates, and we have only one choice of the threshold.
We also note that the paper \cite{DGPT2015} deals with a non stationary time series setting, where the goal is to estimate high quantiles driven by a non-parametricaly changing distribution function. 

The paper is organized as follows.
In Section \ref{modEst}, we introduce the notations, formulate the  model and we state the main results.
An explicit computation of the convergence rate using the Hall model is 
given as an example  in Section \ref{Hall}.
An automatic selection procedure of the threshold is stated in Section \ref{autsel}.
In Section \ref{Aggreg} we formulate our procedure for the aggregation of the estimated survival probabilities.
A simulation study is done in Section \ref{sim} and an application on two data sets is given in Section \ref{app}.

\section{Main results}\label{modEst}

\subsection{Notations and model}

Denote by $X$ a random variable representing the failure time, 
by $C$ a random variable representing the censoring time 
and  
by $Z$ a random covariate vector.  
We assume that $X$ and $C$ admit positive densities on $[x_0,\infty)$, with $x_0\geq 0$ and that $X$ and $C$ are independent conditionally to $Z$. 
The observation time and failure indicator are respectively
$$
T = \min\{X,C \} \quad \text{and} \quad \Delta = \mathbb{1}_{X\leq C},
$$
where $\mathbb{1}$ is the indicator function.
The Cox model (introduced by \cite{Cox1972}) specifies that the hazard function of the failure time $X$ depends on the value $z$ of covariate vector $Z$  as follows: 
\begin{equation*}
h(\given{x}{z}) = \exp(\beta \cdot z) h_0(x), \ x \geq x_0,
\end{equation*}
where $x_0\geq 0,$ $\beta$ is a vector of parameters, 
$h_0$ is an unknown baseline hazard function 
and $ \beta \cdot z$ denotes the scalar product between $\beta$ and $z$. 
We denote by $f(\given{x}{z})$ and $S(\given{x}{z}) = 1 - F(\given{x}{z})$, respectively the density and survival functions of the failure time $X$ given $Z=z$. 
The hazard function $h(\given{x}{z})$ is related to the functions $f(\given{x}{z})$ and $S(\given{x}{z})$ 
by the expressions 
$$h(\given{x}{z})=f(\given{x}{z})/S(\given{x}{z})$$ 
and 
$$S(\given{x}{z}) = \exp(-\int_{x_0}^{x}h(\given{u}{z}) du).$$ 
Similarly, the hazard function of the censoring time is denoted by $h_C(\cdot)$, 
the density and survival function are respectively denoted $f_C(\cdot)$ and $S_C(\cdot)= 1 - F_C(\cdot)$.
Let 
\begin{align} \label{defS_0}
S_0(x) = S(\given{x}{0}) = \exp(-\int_{x_0}^{x}h_0(u) du)
\end{align}
be the baseline survival function. 
The survival function $S(\given{\cdot}{z})$ is related to the baseline survival function $S_0(\cdot)$ by the expression 
$$S(\given{x}{z}) = S_0(x)^{\exp(\beta \cdot z)}.$$

Assume that we observe a sequence of independent triples $(t_i,\delta_i,z_i)$, $i=1,...,n$, where all $z_i$'s are nonrandom and each pair $(t_i,\delta_i)$ has the law of $(T,\Delta)$ given $Z=z_i$. 
In this paper we address the question of estimating the survival function $S(\given{x}{z}) $ for large values of $x$.

Let us explain the difficulties related to this problem using the classical Nelson-Aalen estimator.
In the case when $x$ is larger than the last observed time $t_{\max}=\max\left\{ t_1,...,t_n \right\}$,  the Nelson-Aalen estimator $\widehat S_{NA}(\given{\cdot}{z})$
of $S(\given{\cdot}{z})$ takes two positive constant values depending on the fact that the last observed time is censored or not. 
In Figure \ref{IntBlad} (top) we plot the estimated baseline survival function $\widehat S_{NA}(\given{\cdot}{0})$
for the commonly accessible \verb+bladder+ data set from \verb+R+ package \verb+survival+ (see Section \ref{appBlad} for details).  
\begin{figure}[h]
	\begin{center} 
		\begin{tabular}{c}
			\includegraphics[width=120mm,height=60mm]{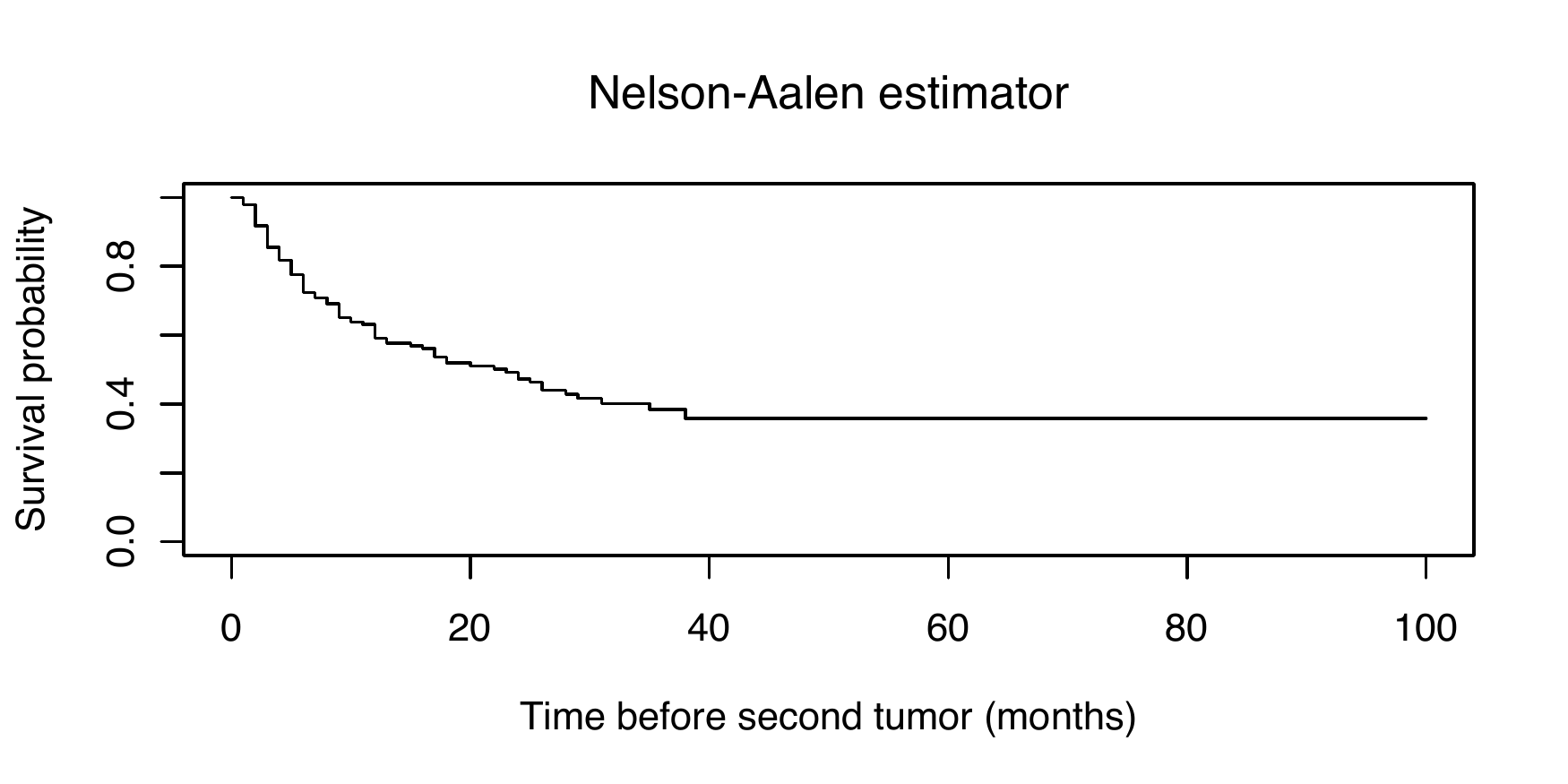}\\
			\includegraphics[width=120mm,height=60mm]{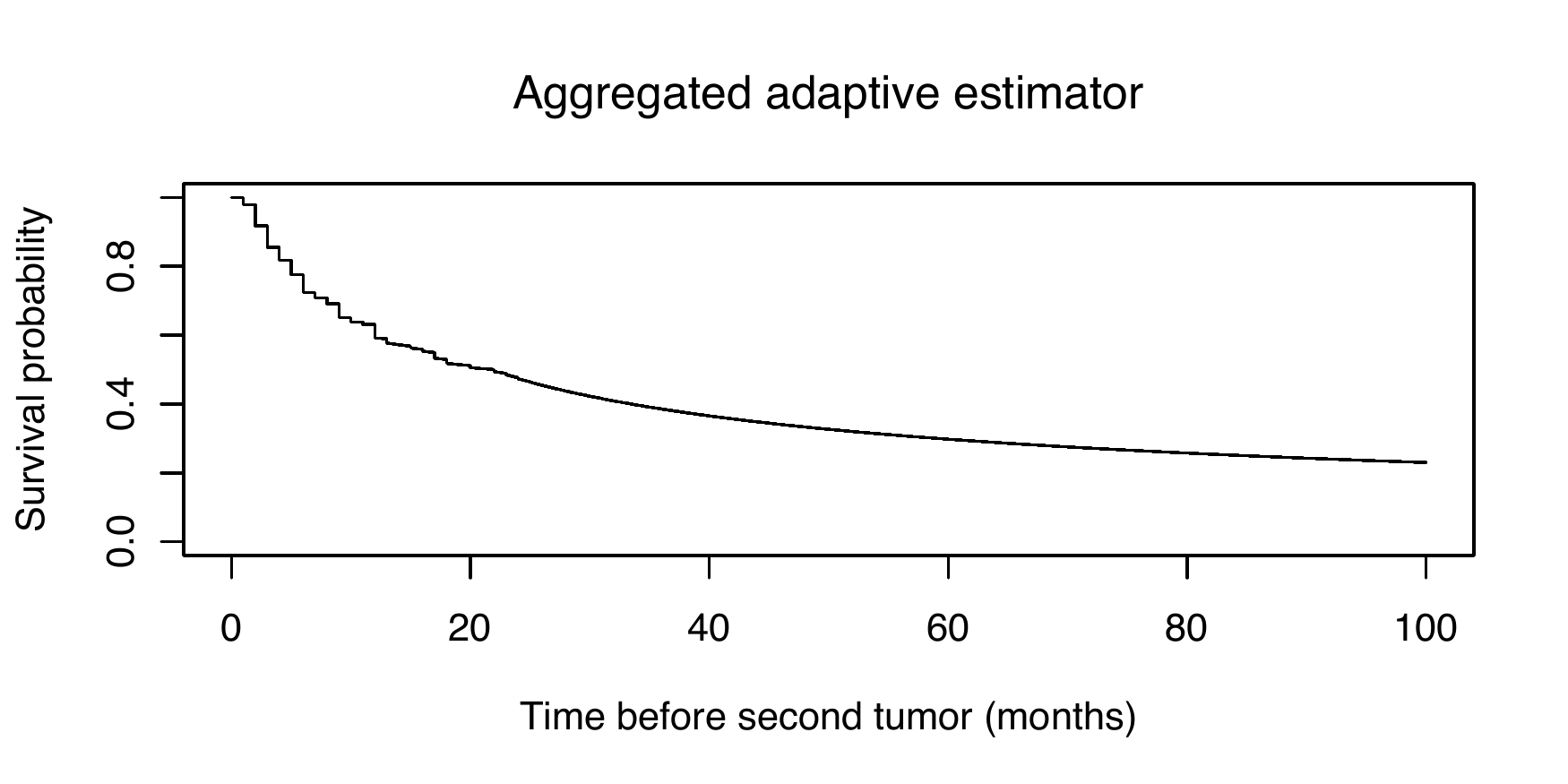}
		\end{tabular}
		\caption{
		        Estimated baseline survival probability (no-treatment) of the time of having the first recurrence of the bladder tumor:
			using the Nelson-Aalen estimator defined by \eqref{NAestim} (top) and the adaptive 
			aggregation defined by \eqref{Aae} (bottom).
			}
		\label{IntBlad}
	\end{center}
\end{figure}
Note that  the Nelson-Aalen curve $\widehat S_{NA}(\given{t}{0})$ becomes constant for $t \geq t_{\max} $. 
Moreover, as the survival times are heavily censored, the estimated survival probability 
$\widehat S_{NA}(\given{t}{0})$ is far above $0$ for all $t \geq t_{\max} $.
To overcome this effect, we assume some additional constraints on the survival function, which allow us to 
extrapolate it outside the available data range.
Specifically, we assume the following condition:
\begin{conditionC} \label{CFrech}
	We assume that $F_0$ belongs to the maximal domain of attraction of the Fr\'echet law with extreme value index $\theta>0$ 
	which means that
	there exists two sequences $a_n>0$ and  $b_n$ such that, for any $x\geq 0$,
	$$
	F_0^n(a_nx+b_n) \to \Phi_{\theta}(x) \ \mbox{as} \ n\to\infty,
	$$
	where 
	$\Phi_{\theta}(x)= e^{-x^{-\theta}},$ $x\geq 0$ is the Fr\'echet law 
	and
	$F_0^n(x) = \mathbb P ( \max_{1\leq i \leq n}{x_i} \leq x)$, 
	with $x_1,x_2,\dots$ an i.i.d.\ sequence of random variables of common distribution $F_0$.
\end{conditionC}
By the Fisher-Tippett-Gnedenko theorem (see Theorem 2.1 page 75 in \cite{BeirlantAl2004}), condition \eqref{CFrech} is equivalent to the property that 
for each $x\geq1$, 
\begin{equation}\label{FTP theorem}
\frac{S_0(\tau x)}{S_0(\tau)} \to \left(x\right)^{-1/\theta}\ \text{as} \ \tau\to\infty. 
\end{equation}
As a consequence of \eqref{FTP theorem} the following semi-parametric model is considered for the baseline survival function: 
\begin{equation}\label{sem-mod}
S_{0,\tau,\theta}(x) =
\left\{
\begin{array}{ll}
S_{0}\left( x\right) & \text{if\ \ }x\in \lbrack 0,\tau ], \\
S_{0}\left( \tau \right) \left( \frac{x}{\tau} \right)^{-1/\theta} & \text{if\ \ }x>\tau ,%
\end{array}\right.
\end{equation}
where the function $S_0$ is fully non-parametric,
$\tau$ is a threshold parameter and the parametric part is completely described by the Pareto model with parameter $\theta.$
We denote the baseline hazard function of the previous model by 
\begin{equation} \label{hazard h0 001}
h_{0,\tau,\theta} (x) = \left\{
\begin{array}{ll}
h_{0}\left( x\right) & \text{if\ \ }x\in \lbrack 0,\tau ], \\
\frac{1}{\theta x} 
& \text{if\ \ }x>\tau .%
\end{array}\right.
\end{equation}
The corresponding cumulative hazard and survival function under the covariate constraint $Z=z$ are then respectively given by $H_{z,\tau,\theta} (x) = e^{\beta \cdot z}\int_{x_0}^{x}h_{0}\left( x\right) $ and
\begin{align} \label{themodel001}
S_{z,\tau,\theta} (x) = S(\given{x}{z,\tau,\theta})= S_{0,\tau,\theta}(x)^{e^{\beta \cdot z}}.
\end{align} 
For illustration purposes the estimated baseline survival function $S_0$ by
the proposed model is given in Figure \ref{IntBlad} (bottom), 
where for the estimation we have used the aggregation procedure with the adaptive choice of the threshold $\tau$
described in Section \ref{AddAggreg}.  

\subsection{Estimators}
In this section, we aim to provide the estimators necessary to deal with the model \eqref{sem-mod}. 
To this end, we suppose the regression parameter $\beta$ of the Cox model known,
for example, estimated by the standard procedure described in \cite{Cox1972}.
As to the threshold $\tau$, it is considered to be fixed  
(a selection procedure is presented latter on in Section \ref{autsel}).
To estimate \eqref{sem-mod}, we will combine  the extreme value Hill type estimator of the parameter $\theta$ of the tail
of the distribution function $F_0$ and 
the Nelson-Aalen non-parametric estimator of the baseline survival function $S_0.$

The joint density of the vector $(T,\Delta)$, given $Z=z$, is computed as 
\begin{equation}\label{dens}
p_{S_0}(\given{t,\delta}{z}) = (e^{\beta \cdot z} h_0(t))^\delta S_0(t)^{e^{\beta \cdot z}} f_C(t)^{1-\delta}S_C(t)^\delta,
\end{equation}
where $t \in [ x_0, \infty )$ and $\delta = \{0,1 \}$. 
Denote by $p_{S_{0,\tau,\theta}}(\given{t,\delta}{z}) $
the joint density of the vector $(T,\Delta)$, given $Z=z$, when 
the survival function $S(\given{t}{z})$ obeys the model 
\eqref{themodel001}:
$$
p_{S_{0,\tau,\theta}}(\given{t,\delta}{z}) = (e^{\beta \cdot z}h_{0,\tau,\theta} (t))^\delta S_{0,\tau,\theta}(t)^{e^{\beta \cdot z}}f_C(t)^{1-\delta}S_C(t)^\delta.
$$
Then the quasi-log-likelihood of the model is 
\begin{eqnarray*}
\mathcal{L}(\given{\theta}{\mathbf{z}})=\mathcal{L}(\given{\theta}{z_1,...,z_n}) 
= \sum_{i=1}^n \ln p_{S_{0,\tau,\theta}}(\given{t_i,\delta_i}{z_i}).
\end{eqnarray*}
Removing the terms related to the censoring, the partial quasi-log-likelihood is 
\begin{equation*}
\mathcal{L}^{part}(\given{\theta}{\mathbf{z}}) = \sum_{i=1}^n \left( \delta_i \ln(h_{0,\tau,\theta} (t_i)) + (\beta \cdot z_i )\delta_i - e^{\beta \cdot z_i} \int_{x_0}^{t_i} h_{0,\tau,\theta} (u)du \right), 
\end{equation*}
where the baseline hazard function $h_{0,\tau,\theta} (.)$ 
is defined by \eqref{hazard h0 001} and $\mathbf{z}= z_1, \cdots, z_n$.
Now, maximizing $\mathcal{L}^{part}(\given{\theta}{\mathbf{z}})$ with respect to $\theta$, yields the estimator 
\begin{equation}
\label{estTheta}
\widehat\theta_\tau = \frac{\sum_{t_i>\tau}e^{\beta \cdot z_i}\ln\left( \frac{t_i}{\tau}\right) }{\sum_{t_i>\tau}\delta_i}.
\end{equation}
One can see that, the estimator is a transformation of the estimator introduced in \cite{hill1975}.

The Nelson-Aalen estimator, suggested by \cite{nelson1972} and rediscovered by \cite{aalen1978}, focus on estimating the cumulative 
hazard function $H_0(x) = \int_0^x h_0(t)dt$ by
\begin{equation*}
\widehat{H}_0(t)= \sum_{t_i\leq t} \widehat h_0(t_i),
\end{equation*}
where, 
by maximizing the partial quasi-log-likelihood, we have 
\begin{equation*}
\widehat{h}_{0} (t_i) = \frac{\delta_i}{\sum_{j=1}^n e^{\beta \cdot z_j}\mathds{1}_{t_j \geq t_i} }.
\end{equation*}
This estimator was suggested by \cite{breslow1972}.
The estimator of the survival function (\ref{sem-mod}) is then given by
\begin{equation}
\label{EMSP}
\widehat{S}_{0,\tau,\widehat\theta_\tau}(t) = \left\{
\begin{array}{cc}
\widehat{S}_{0}\left( t\right) & \text{if\ \ }t\in \lbrack 0,\tau ], \\
\widehat{S}_{0}\left( \tau \right) \left( \frac{t}{\tau} \right)^{-1/\widehat{\theta}_\tau} & \text{if\ \ }t>\tau ,%
\end{array}\right.
\end{equation}
where the Nelson-Aalen non-parametric estimator of $S_{0}$ is defined by
\begin{align} \label{NAestim}
\widehat{S_{0}}\left( t\right) = e^{-\widehat H_0(t)}. 
\end{align}
The estimator of the survival function \eqref{themodel001} is then given by $\widehat{S}_{z,\tau,\theta} (t)  = \widehat{S}_{0,\tau,\widehat\theta_\tau}(t)^{e^{\beta \cdot z}}$.

\subsection{Consistency of $\widehat\theta_\tau$}\label{ConsTheta}
In this section we state a general consistency result for the estimator $\widehat\theta_\tau$ of $\theta$, 
which we apply in Section \ref{Hall} to obtain the rate of convergence under the Hall model. 
To state it, we need some notations.

The Kullback-Leibler divergence 
between two equivalent distributions, say $P$ and $Q$, is denoted
$\mathcal{K}(P,Q)=\int \ln(dP/dQ)dP.$  
This divergence, between two Pareto distributions with parameters $\theta'$ and $\theta$, can be written as 
\begin{equation}\label{KL entropy}
\mathcal{K}(\theta',\theta) = \frac{\theta'}{\theta} - 1 - \ln(\frac{\theta'}{\theta})
\sim \left(  \frac{\theta'}{\theta}-1\right)^2
\quad\text{as} \quad \frac{\theta'}{\theta}\to1.
\end{equation}
The $\chi^2$-entropy between the probability measures $P$ and $Q$ is defined by 
\begin{equation}\label{Xi2 entropy}
\chi^2 (P,Q) = \int dP/dQdP -1.
\end{equation}
The following theorem gives an estimate of the Kullback-Leibler entropy between $\widehat \theta_{\tau}$ and  $\theta$ which is expressed in terms 
of the $\chi^2$-entropy between the two laws $P_{S_0}(\given{.}{z_i})$ and $P_{S_{0,\tau,\theta}}(\given{.}{z_i})$. 
The notation $a_n = O(b_n)$ means that there is a positive constant $c$ such that $\mathbb{P}(a_n>cb_n,b_n<\infty)\rightarrow 0$ as $n \rightarrow \infty$. In the sequel we denote by $\mathbb P$ the probability measure corresponding to the "true" model which has 
the baseline survival function $S_0$.

\begin{theorem}
	\label{thp1} 
	Assume that $S_0$ is an arbitrary survival function as defined by \eqref{defS_0} and that $S_{0,\tau,\theta}$ satisfies
	the model \eqref{sem-mod}. Then for any $\theta>0$, and $\tau\geq x_0$, we have 
	$$
	\mathcal{K}(\widehat \theta_\tau,\theta) = O_{\mathbb{P}}\left( \frac{1}{\widehat n_\tau} \sum_{i=1}^n\chi^2(P_{S_0}(\given{.}{z_i}),P_{S_{0,\tau,\theta}}(\given{.}{z_i})) + \frac{4\ln(n)}{\widehat n_\tau} \right),
	$$
	where $\widehat n_\tau = \sum_{t_i>\tau}\delta_i$.
\end{theorem}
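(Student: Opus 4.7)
By the expansion \eqref{KL entropy}, there exists $c>0$ such that $\mathcal{K}(\widehat\theta_\tau,\theta)\le c(\widehat\theta_\tau/\theta-1)^2$ on the event $\{|\widehat\theta_\tau/\theta-1|\le 1/2\}$, so it is enough to show that $(\widehat\theta_\tau-\theta)^2/\theta^2=O_{\mathbb{P}}((\sum_i\chi^2_i+\ln n)/\widehat n_\tau)$, where I abbreviate $Q_i:=P_{S_{0,\tau,\theta}}(\given{.}{z_i})$ and $\chi^2_i:=\chi^2(P_{S_0}(\given{.}{z_i}),Q_i)$; the confinement of $\widehat\theta_\tau/\theta$ to $[1/2,3/2]$ is self-verifying once the target estimate is shown on a high-probability event. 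From \eqref{estTheta},
\[
\widehat\theta_\tau-\theta=\frac{1}{\widehat n_\tau}\sum_{i=1}^n Y_i,\qquad Y_i:=\mathbb{1}_{t_i>\tau}\bigl(e^{\beta\cdot z_i}\ln(t_i/\tau)-\theta\,\delta_i\bigr),
\]
and $Y_i/\theta^2=\partial_\theta\ln p_{S_{0,\tau,\theta}}(t_i,\delta_i|z_i)$, so $\mathbb{E}_{Q_i}Y_i=0$; moreover $Y_i$ vanishes on $\{t_i\le\tau\}$, a set on which $P_{S_0}(\given{.}{z_i})$ and $Q_i$ coincide by construction of \eqref{sem-mod}, so all subsequent expectations can be reduced to $\{t>\tau\}$.

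I decompose $\sum_iY_i=B_n+V_n$ with $B_n:=\sum_i\mathbb{E}_{\mathbb{P}}Y_i$ and $V_n:=\sum_i(Y_i-\mathbb{E}_{\mathbb{P}}Y_i)$. For the bias $B_n$, the identity $\mathbb{E}_{Q_i}Y_i=0$ combined with a Cauchy-Schwarz application to the signed change-of-measure $dP_{S_0}(\given{.}{z_i})/dQ_i-1$ gives
\[
|\mathbb{E}_{\mathbb{P}}Y_i|=\left|\int Y_i\left(\frac{dP_{S_0}(\given{.}{z_i})}{dQ_i}-1\right)dQ_i\right|\le\sqrt{\mathbb{E}_{Q_i}[Y_i^2]}\,\sqrt{\chi^2_i}.
\]
Under $Q_i$, conditional on $X_i>\tau$, the variable $e^{\beta\cdot z_i}\ln(X_i/\tau)$ is exponential with mean $\theta$ (by the Pareto tail in \eqref{themodel001}), and censoring only decreases $|Y_i|$, so a direct calculation yields $\mathbb{E}_{Q_i}[Y_i^2]\le C\theta^2\mathbb{P}_{Q_i}(T_i>\tau,\Delta_i=1)$. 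A second Cauchy-Schwarz in $i$ then delivers
\[
B_n^2\le\bigl(\sum_i\mathbb{E}_{Q_i}[Y_i^2]\bigr)\bigl(\sum_i\chi^2_i\bigr)\le C\theta^2\,\widehat n_\tau\,\sum_i\chi^2_i,
\]
where I have replaced $\sum_i\mathbb{P}_{Q_i}(T_i>\tau,\Delta_i=1)$ by $\widehat n_\tau$ up to a constant on the high-probability event where $\widehat n_\tau$ concentrates around its mean. Dividing by $(\theta\widehat n_\tau)^2$ produces the $\sum_i\chi^2_i/\widehat n_\tau$ contribution.

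For the stochastic fluctuation $V_n$, the summands are independent and centered under $\mathbb{P}$, and are sub-exponential with scale of order $\theta$ under $Q_i$; this sub-exponentiality transfers to $\mathbb{P}$ up to the $\chi^2$-controlled density ratio, so a deterministic truncation at level $M=C\theta\ln n$ makes the discarded mass polynomially small by a union bound. Bernstein's inequality applied to the truncated centered sum, combined with $\sum_i\mathrm{Var}_{\mathbb{P}}(Y_i)\le C\theta^2\widehat n_\tau$ (same bound as for $\mathbb{E}_{Q_i}[Y_i^2]$, transferred by $\mathbb{E}_{\mathbb{P}}[Y_i^2]\le\sqrt{\mathbb{E}_{Q_i}[Y_i^4]}\sqrt{1+\chi^2_i}$), gives, with probability at least $1-n^{-c}$,
\[
|V_n|\le C\theta\sqrt{\widehat n_\tau\,\ln n}+C\theta\,(\ln n)^2,
\]
so that $V_n^2/(\theta\widehat n_\tau)^2$ is of order $\ln n/\widehat n_\tau$. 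Adding this to the bias bound and invoking the KL-to-squared-error reduction from the first paragraph yields the statement.

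The main technical obstacle is the transfer of tail and variance control for $Y_i$ from $Q_i$ to the true measure $\mathbb{P}$ using only $\chi^2$-closeness, together with the fact that $\widehat n_\tau$ is itself random; logarithmic truncation and the Cauchy-Schwarz moment-transfer identities absorb this and are precisely what produces the $4\ln n/\widehat n_\tau$ term. A secondary minor point is to justify the replacement of $\sum_i\mathbb{P}_{Q_i}(T_i>\tau,\Delta_i=1)$ by $\widehat n_\tau$, which follows from Bernstein applied to $\widehat n_\tau$ itself.
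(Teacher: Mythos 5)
Your route (Taylor expansion of the Kullback--Leibler divergence via \eqref{KL entropy}, then a bias/variance decomposition of $\widehat n_\tau(\widehat\theta_\tau-\theta)$ with Cauchy--Schwarz against the $\chi^2$ divergences and a truncated Bernstein bound) is genuinely different from the paper's argument, but it has a gap that cannot be waved away. The reduction $\mathcal{K}(\widehat\theta_\tau,\theta)\le c(\widehat\theta_\tau/\theta-1)^2$ is valid only on the event $\{|\widehat\theta_\tau/\theta-1|\le 1/2\}$, and Theorem \ref{thp1} contains no hypothesis forcing that event to have probability tending to one: there is no analogue of \eqref{condMises00A} or \eqref{denominator001} here (those appear only in Theorem \ref{th2}), so the bias term $\frac{1}{\widehat n_\tau}\sum_i\chi^2(P_{S_0}(\given{.}{z_i}),P_{S_{0,\tau,\theta}}(\given{.}{z_i}))$ may be of constant order or diverge, $\widehat\theta_\tau$ may then stabilize far from $\theta$, and your ``self-verifying confinement'' becomes circular. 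The failure is quantitative: when $\widehat\theta_\tau\ll\theta$ one has $\mathcal{K}(\widehat\theta_\tau,\theta)\approx\ln(\theta/\widehat\theta_\tau)$, which is unbounded, while $(\widehat\theta_\tau/\theta-1)^2\le 1$, so a bound on the squared relative error does not imply the claimed bound on the Kullback--Leibler divergence. Related unsupported steps: replacing $\sum_i\mathbb{P}_{Q_i}(T_i>\tau,\Delta_i=1)$ by $\widehat n_\tau$ ``by concentration'' requires $\mathbb{E}_{\mathbb P}\widehat n_\tau\to\infty$ (indeed at least of order $\ln n$ for Bernstein) and a transfer from the $Q_i$'s to $\mathbb P$, neither of which is granted by the hypotheses; and your variance transfer carries factors $\sqrt{1+\chi^2_i}$ that you silently treat as $O(1)$, again an implicit smallness assumption on the very quantity the theorem leaves arbitrary.

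The paper's proof avoids all of this and is worth internalizing: it never expands $\mathcal{K}$, but uses the exact algebraic identity that the partial quasi-log-likelihood ratio evaluated at the maximizer equals $\widehat n_\tau\mathcal{K}(\widehat\theta_\tau,\theta)$ (see \eqref{LLL001} and \eqref{Kullback001}), a Chernoff-type bound at exponent $1/2$ on the likelihood ratio for a \emph{fixed} alternative $\theta'$, which is where the $\chi^2$ terms enter (Lemma \ref{L1}), and then handles the randomness of $\widehat\theta_\tau$ and of $\widehat n_\tau$ simultaneously by a union bound over the $n$ possible values $k$ of $\widehat n_\tau$ through the deterministic extremizers $\theta^{\pm}(k)$ (Lemma \ref{L2}); the factor $2n$ from this union bound together with the choice $v=2\ln n$ is precisely the source of the $4\ln n$ term, and no smallness of the bias, no concentration of $\widehat n_\tau$, and no boundedness of the covariates are needed. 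To salvage your approach you would at minimum need a separate high-probability control of $\widehat\theta_\tau/\theta$ away from $0$ that holds without smallness of the bias, and additional conditions of the type \eqref{denominator001} for the concentration of $\widehat n_\tau$ --- at which point you would be proving a weaker statement than Theorem \ref{thp1}.
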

Theorem \ref{thp1} gives an upper bound of the Kullback-Leibler entropy which can be read in two parts, the bias term 
$\frac{1}{\widehat n_\tau}\sum_{i=1}^n \chi^2(P_{S_0}(\given{.}{z_i}),P_{S_{0,\tau,\theta}}(\given{.}{z_i}))$ 
and the variance term $\frac{4\ln(n)}{\widehat n_\tau}$. 

Now, we formulate some sufficient conditions for the consistence of the estimator $\widehat \theta_\tau$.
In order to estimate the bias term, we need to introduce a quantity which show how the censoring rate evolves with the threshold $\tau$, 
when we consider only the observations 
exceeding $\tau.$  
For this, we introduce the following conditioned mean censoring rate function given $Z=z$: 
\begin{equation*}
\tau \mapsto q_{F}(\given{\tau}{z}) = \int_\tau^{\infty} \frac{S(\given{t}{z})}{S(\given{\tau}{z})} \frac{f_C(\given{t}{z})}{S_C(\given{\tau}{z})}dt \in [0,1],
\quad \tau \geq x_0.
\end{equation*}
The value $q_{F}(\given{\tau}{z})$  gives the rate of censored observations above the threshold $\tau$.
In order to estimate the extreme survival probabilities it is natural to require that the
rate of censored observations above the threshold $\tau$ is strictly less than $1$.  
We shall impose on $q_{F}(\given{\tau}{z})$ the following condition: 
\begin{conditionC} \label{Crate}
	There is a constant $q_0<1$ such that, for $\tau \geq x_0$ large enough,
	\begin{equation*} 
	q_{F}(\given{\tau}{z_i}) \leq q_0, \quad i=1,...,n.
	\end{equation*}
\end{conditionC}
Condition \eqref{Crate} is easily verified, for instance, when both the  
distribution function of the survival time and that of the censoring time follow the Cox model 
and are in the maximal domain of attraction of the Fr\'echet law with parameters $\theta(z)$ and $\theta_C(z)$ respectively.
Indeed, we show in the Lemma \ref{LCrate} that, in this case, for any $z$,
\begin{align*} 
q_{F}(\given{\tau}{z}) \to \frac{\theta(z)}{\theta(z) + \theta_C(z)} \ \mbox{as} \   \tau \to\infty.
\end{align*}
where under some mild assumptions one can verify that 
\begin{align} \label{censoringcond001}
\frac{\theta(z)}{\theta(z) + \theta_C(z)} \leq q_0<1.
\end{align}

In addition we introduce the following conditions: 
\begin{conditionC} \label{CBound}
	There exists $z_{\min}$ and $z_{\max}$ 
	such that
	\begin{equation*}
	z \in \mathbb{Z} := [z_{\min}; z_{\max}]. 
	\end{equation*}
\end{conditionC}
\begin{conditionC}  Von-Mises condition:
	\label{CvonMises}
	\begin{equation*} 
	th_0(t) \rightarrow \frac{1}{\theta} \quad \text{as}  \quad t \rightarrow \infty.
	\end{equation*}
\end{conditionC}
It is well known that \eqref{CvonMises} implies condition \eqref{CFrech}, 
see \cite{BeirlantAl2004}.
For any $\tau>0$, set
$$
\rho_{\tau} = \sup_{t > \tau} \left| th_0(t) - \frac{1}{\theta} \right|.
$$
It is easy to see that the Von-Mises condition \eqref{CvonMises} is equivalent to the fact 
that there exists a sequence of thresholds $(\tau_n)$ satisfying $x_0 \leq \tau_n\to \infty$ as $n\to\infty$ such that
$n\rho_{\tau_n}^2 \rightarrow 0$ as $n \rightarrow \infty$. 

The following result shows the consistency of the estimated parameter $\widehat \theta_{\tau_n}$: 
\begin{theorem}
	\label{th2}
	Assume conditions \eqref{Crate}, \eqref{CBound} and \eqref{CvonMises}. 
	For any sequence $(\tau_n)$ satisfying 	
	\begin{equation} \label{condMises00A}
	x_0 \leq \tau_n\to \infty, \quad n\rho_{\tau_n}^2 \rightarrow 0 \quad \text{as} \quad n \rightarrow \infty
	\end{equation}
	and 	\begin{equation} \label{denominator001}
	\sum_{i=1}^nS(\given{\tau_n}{z_i}) S_C(\given{\tau_n}{z_i}) \rightarrow \infty \quad \text{as} \quad n \rightarrow \infty, 
	\end{equation}
	it holds
	$$
	\widehat \theta_{\tau_n} \xrightarrow[n\rightarrow \infty]{\mathbb{P}} \theta .
	$$
\end{theorem}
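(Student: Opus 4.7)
The plan is to apply Theorem \ref{thp1}, reducing the problem to showing that both the bias term and the variance term on its right-hand side vanish in probability; combined with the Kullback--Leibler asymptotics \eqref{KL entropy}, which force $\theta'/\theta \to 1$ whenever $\mathcal{K}(\theta',\theta) \to 0$ (since $x-1-\ln x$ tends to $+\infty$ at both $0$ and $\infty$), this will yield $\widehat\theta_{\tau_n} \xrightarrow{\mathbb{P}} \theta$.

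First I would control $\widehat n_{\tau_n}$ from below. Setting $D_n = \sum_{i=1}^n S(\tau_n|z_i) S_C(\tau_n|z_i)$, each $\delta_i \mathbb{1}_{t_i>\tau_n}$ is Bernoulli with mean $(1-q_F(\tau_n|z_i))\,S(\tau_n|z_i) S_C(\tau_n|z_i)$. Condition \eqref{Crate} then gives $\mathbb{E}[\widehat n_{\tau_n}] \geq (1-q_0) D_n$, which tends to infinity by \eqref{denominator001}, and a Chebyshev concentration bound for sums of independent Bernoullis yields $\widehat n_{\tau_n} \geq \tfrac{1-q_0}{2} D_n$ with probability tending to $1$. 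In the regime intended by \eqref{denominator001}, this also gives $4\ln(n)/\widehat n_{\tau_n} \to 0$ in probability, disposing of the variance term of Theorem \ref{thp1}.

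Next, for the bias term I would use the identity $\chi^2(P,Q) = \int (p-q)^2/q\,d\mu$. By \eqref{dens}, the two densities $p_{S_0}(\cdot|z_i)$ and $p_{S_{0,\tau_n,\theta}}(\cdot|z_i)$ coincide on $\{t \leq \tau_n\}$ (since \eqref{sem-mod} and \eqref{hazard h0 001} agree with $S_0$ and $h_0$ there), so the divergence reduces to an integral on $\{t > \tau_n\}$. On this set, \eqref{CvonMises} gives $|t h_0(t) - 1/\theta| \leq \rho_{\tau_n}$; writing $h_0(t) = (1+r(t))/(\theta t)$ with $|r(t)|\leq \theta\rho_{\tau_n}$ and using \eqref{CBound} to bound $e^{\beta z_i}$ uniformly in $i$, a Taylor expansion of $p/q-1$ around the Pareto model yields, for some constant $C$ depending only on $\theta$, $q_0$, and $\sup_{z\in\mathbb{Z}}e^{\beta z}$,
$$
\chi^2\!\left(P_{S_0}(\cdot|z_i),\,P_{S_{0,\tau_n,\theta}}(\cdot|z_i)\right) \leq C\,\rho_{\tau_n}^2\, S(\tau_n|z_i) S_C(\tau_n|z_i).
$$
Summing over $i$ and dividing by $\widehat n_{\tau_n}$, of order at least $D_n$ by the previous step, bounds the bias term in Theorem \ref{thp1} by $O_{\mathbb{P}}(\rho_{\tau_n}^2)$, which vanishes by \eqref{condMises00A}.

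The main obstacle is this $\chi^2$-estimate: the tail integral couples the hazard perturbation $r(t)$ with the survival perturbation $\exp\!\bigl(-e^{\beta z_i}\!\int_{\tau_n}^t r(u)/(\theta u)\,du\bigr)$, and both factors must be expanded to second order to extract the $\rho_{\tau_n}^2$ prefactor while keeping the weight $S(\tau_n|z_i) S_C(\tau_n|z_i)$ out front. Once that bound is in hand, both terms in Theorem \ref{thp1} are $o_{\mathbb{P}}(1)$, and the Kullback--Leibler asymptotics \eqref{KL entropy} convert this into the claimed consistency $\widehat\theta_{\tau_n} \xrightarrow{\mathbb{P}} \theta$.
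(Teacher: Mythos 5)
Your proposal is correct and follows essentially the same route as the paper: reduce to Theorem \ref{thp1}, lower-bound $\widehat n_{\tau_n}$ by its expectation $\sum_i S(\given{\tau_n}{z_i})S_C(\given{\tau_n}{z_i})(1-q_F)$ plus a concentration bound (the paper uses a Chernoff-type bound, you use Chebyshev, which is immaterial), and then establish the key estimate $\chi^2(P_{S_0}(\given{\cdot}{z}),P_{S_{0,\tau_n,\theta}}(\given{\cdot}{z})) = O(\rho_{\tau_n}^2\, S(\given{\tau_n}{z})S_C(\given{\tau_n}{z}))$, which is exactly the bound the paper proves (via its Lemma \ref{L6} and a second-order expansion of the log-likelihood ratio on $\{t>\tau_n\}$, where the two densities indeed coincide below $\tau_n$). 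The technical obstacle you flag — coupling the hazard and survival perturbations while keeping the factor $S\,S_C$ — is precisely what the paper's integration-by-parts computation handles, so your sketch matches the paper's argument in substance.
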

Condition \eqref{condMises00A} gives a control on the bias
$\frac{1}{\widehat n_\tau}\sum_{i=1}^n \chi^2(P_{S_0}(\given{.}{z_i}),P_{S_{0,\tau,\theta}}(\given{.}{z_i}))$
of the model which should be small as $n\to \infty$, 
while condition \eqref{denominator001} is responsible for the control of the variance $\frac{4\ln(n)}{\widehat n_\tau}$ 
of the model.  
We will show in the next section how to verify conditions \eqref{Crate},  \eqref{condMises00A} and \eqref{denominator001} with a large class of models.

\section{Computation of the explicit rate of convergence for the Hall model}\label{Hall}

In this section we consider a model which is related to the families of distributions 
in \cite{hall1982}, \cite{hall1984} and \cite{GramaSpokoiny2008} for the extreme value estimation. 
The result of the Theorem \ref{th1} in Section \ref{ConsTheta} shows that the rate of convergence of the estimator $\widehat\theta_{\tau_n}$ depends on the threshold $\tau_n$ and the survival functions of the survival and censoring times. 
To express the rate of convergence in terms of the sample size, some assumptions must be made on the survival functions $S$ and $S_C$. 
\begin{conditionC}\label{cHall1}
	The baseline hazard function $h_0$ is such that for some unknown parameter $\theta \in (\theta_{min},\theta_{max})$ and any $t>1,$
	\begin{equation*}
	|t h_0(t) - \frac{1}{\theta}|\leq c_1t^{-\frac{\alpha}{\theta}},
	\end{equation*}
	where $\alpha$, $c_1$, $\theta_{min}$ and $\theta_{max}$ are some positive constants. 
\end{conditionC}
Condition \eqref{cHall1} means that $th_0(t)$ converges to $\frac{1}{\theta}$ polynomially fast as $t \rightarrow \infty$. 
Similarly we assume:
\begin{conditionC}\label{cHall2}
	The hazard function $h_C$ of the censoring time $C$ is such that for any $z \in \mathbb{Z} := [z_{\min}; z_{\max}]$ and any $t>1,$ 
	\begin{equation*}
	|th_C(\given{t}{z}) - \frac{1}{\theta_C(z)}|\leq c_2t^{-\mu},
	\end{equation*}
	where $\theta_C(z)= \frac{\theta}{\gamma}  e^{-\beta \cdot z}$ and $\gamma >0$, $c_2>0$ and $\mu>1$ are some constants.
\end{conditionC}
\begin{theorem}
	\label{Th5}
	Assume condition \eqref{CBound}, \eqref{cHall1} and \eqref{cHall2}. 
	Suppose in addition that $\frac{\varrho_2}{\varrho_1}  \frac{ 1 + \gamma} { 1+\gamma+2\alpha}  \leq 1,$
	where 
	$\varrho_1 = \min_{\beta \cdot z} (e^{\beta \cdot z})$ and $\varrho_2 = \max_{\beta \cdot z} (e^{\beta \cdot z})$.
	Then, there exists a constant $c$ such that, 
	$$
	\lim_{n\to \infty}  \mathbb{P}  \left(\mathcal{K}(\widehat \theta_{\tau_n},\theta)  
	\leq c \left( \frac{\ln n}{n} \right)^{ 1-\frac{\varrho_2}{\varrho_1}  \frac{ 1 + \gamma} { 1+\gamma+2\alpha}} \right)  = 1,
	$$
	where
	$$
	\tau_n = n^{\frac{  \theta/\varrho_1}{1+\gamma+2\alpha}} \ln^{-\frac{\theta/\varrho_1}{1+\gamma +2\alpha}} n.
	$$
\end{theorem}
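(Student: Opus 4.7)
The plan is to apply Theorem~\ref{thp1} with $\tau=\tau_n$ and then estimate separately the bias term $\frac{1}{\widehat n_{\tau}}\sum_{i=1}^{n}\chi^{2}\bigl(P_{S_{0}}(\cdot|z_{i}),P_{S_{0,\tau,\theta}}(\cdot|z_{i})\bigr)$ and the variance term $\frac{\ln n}{\widehat n_{\tau}}$ as functions of $n$ and $\tau$, exploiting the Hall-type polynomial decay encoded in \eqref{cHall1} and \eqref{cHall2}. Substituting the prescribed $\tau_n$ then yields the stated rate.

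For the bias, the crucial observation is that the densities $p_{S_0}(\cdot|z_i)$ and $p_{S_{0,\tau,\theta}}(\cdot|z_i)$ coincide on $[x_0,\tau]$, so each $\chi^2_i$ reduces to an integral over $t>\tau$. On that region one can factor $p_{S_0}/p_{S_{0,\tau,\theta}}=(h_0(t)/h_{0,\tau,\theta}(t))^{\delta}\,(S_0(t)/S_{0,\tau,\theta}(t))^{e^{\beta\cdot z_i}}$. Condition \eqref{cHall1} shows that the first factor is $1+O(\tau^{-\alpha/\theta})$ uniformly in $t>\tau$; and because
$$\ln\!\Bigl(\frac{S_0(t)}{S_{0,\tau,\theta}(t)}\Bigr)=-\int_{\tau}^{t}\!\Bigl(h_0(u)-\frac{1}{\theta u}\Bigr)du,$$
condition \eqref{cHall1} together with \eqref{CBound} bounds the $z_i$-dependent factor by $1+O(\varrho_2\tau^{-\alpha/\theta})$ as well. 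Squaring the deviation and integrating against $p_{S_{0,\tau,\theta}}$ yields $\chi^2_i\le C\tau^{-2\alpha/\theta}\,S(\tau|z_i)\,S_C(\tau|z_i)$.

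Under \eqref{cHall2}, the analogous integration of $th_C(t|z)\to 1/\theta_C(z)$ gives $S_C(\tau|z)\asymp \tau^{-\gamma e^{\beta\cdot z}/\theta}$, and \eqref{cHall1} similarly yields $S(\tau|z)\asymp\tau^{-e^{\beta\cdot z}/\theta}$; hence $S(\tau|z_i)S_C(\tau|z_i)\asymp\tau^{-(1+\gamma)e^{\beta\cdot z_i}/\theta}$, which by \eqref{CBound} lies between $\tau^{-(1+\gamma)\varrho_2/\theta}$ and $\tau^{-(1+\gamma)\varrho_1/\theta}$. Summing the bound on $\chi^2_i$ gives $\sum_i\chi^2_i\le Cn\tau^{-2\alpha/\theta-(1+\gamma)\varrho_1/\theta}$. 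For the denominator I would compute $\mathbb{E}[\widehat n_\tau]=\sum_i\int_{\tau}^{\infty}f(t|z_i)S_C(t|z_i)\,dt\asymp(1+\gamma)^{-1}\sum_i\tau^{-(1+\gamma)e^{\beta\cdot z_i}/\theta}$, so that $\mathbb{E}[\widehat n_\tau]\ge Cn\tau^{-(1+\gamma)\varrho_2/\theta}$; the prescribed $\tau_n$ makes this tend to infinity, and a standard Chebyshev concentration bound on $\widehat n_\tau$ (a sum of independent Bernoulli-type indicators) then gives $\widehat n_\tau\ge\tfrac12\,\mathbb{E}[\widehat n_\tau]$ with probability $\to1$.

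Combining these estimates yields, with probability tending to one,
$$\mathcal{K}(\widehat\theta_{\tau_n},\theta)\le C\Bigl(\tau_n^{-2\alpha/\theta+(1+\gamma)(\varrho_2-\varrho_1)/\theta}+\tfrac{\ln n}{n}\,\tau_n^{(1+\gamma)\varrho_2/\theta}\Bigr),$$
and plugging in $\tau_n=(n/\ln n)^{\theta/(\varrho_1(1+\gamma+2\alpha))}$, the variance term reduces to $(\ln n/n)^{1-\varrho_2(1+\gamma)/(\varrho_1(1+\gamma+2\alpha))}$, matching the stated exponent. A short calculation using the standing condition $\varrho_2(1+\gamma)\le\varrho_1(1+\gamma+2\alpha)$ then shows that the bias term is of the same or smaller order. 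The main obstacle I anticipate is exactly this covariate-induced asymmetry: the upper bound on the numerator is driven by the smallest weight $\varrho_1$ while the lower bound on the denominator is driven by the largest weight $\varrho_2$, and the hypothesis $\varrho_2(1+\gamma)\le\varrho_1(1+\gamma+2\alpha)$ is precisely what keeps this mismatch from destroying the rate (and what ensures the claimed exponent is non-negative so that the statement is informative).
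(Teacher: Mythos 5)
You follow essentially the same route as the paper: apply Theorem \ref{thp1} (through Theorem \ref{th1}), use the auxiliary bound $\chi^2(P_{S_0}(\cdot\mid z),P_{S_{0,\tau,\theta}}(\cdot\mid z))\lesssim \rho_\tau^2\,S(\tau\mid z)S_C(\tau\mid z)$ established in the proof of Theorem \ref{th2}, convert \eqref{cHall1}--\eqref{cHall2} into the polynomial bounds $S(\tau\mid z)\asymp\tau^{-e^{\beta\cdot z}/\theta}$ and $S_C(\tau\mid z)\asymp\tau^{-\gamma e^{\beta\cdot z}/\theta}$, lower bound $\widehat n_{\tau_n}\gtrsim n\tau_n^{-(1+\gamma)\varrho_2/\theta}$ via its expectation plus concentration, and substitute $\tau_n$. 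The gap is in your last step, namely the claim that the hypothesis $\varrho_2(1+\gamma)\le\varrho_1(1+\gamma+2\alpha)$ forces the bias term to be of the same or smaller order than the variance term. With your (correct, baseline-only) bound $\rho_{\tau}\le c_1\tau^{-\alpha/\theta}$, the bias term after division by $\widehat n_{\tau_n}$ is of order $\tau_n^{-(2\alpha-(1+\gamma)(\varrho_2-\varrho_1))/\theta}=(\ln n/n)^{\frac{2\alpha-(1+\gamma)(\varrho_2-\varrho_1)}{\varrho_1(1+\gamma+2\alpha)}}$, whereas the claimed rate is $(\ln n/n)^{\frac{2\alpha\varrho_1-(1+\gamma)(\varrho_2-\varrho_1)}{\varrho_1(1+\gamma+2\alpha)}}$; the former is dominated by the latter if and only if $2\alpha\ge 2\alpha\varrho_1$, i.e.\ $\varrho_1\le1$, which the standing hypothesis does not imply. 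For instance, if $\beta\cdot z$ is constant with $e^{\beta\cdot z}=2$ (so $\varrho_1=\varrho_2=2$ and the hypothesis holds trivially), your bias bound is $(\ln n/n)^{\alpha/(1+\gamma+2\alpha)}$, strictly larger than the claimed $(\ln n/n)^{2\alpha/(1+\gamma+2\alpha)}$, so your argument only yields a weaker rate there.

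The paper avoids this by working with the covariate-dependent restatement of \eqref{cHall1}, $|t h(t\mid z)-1/\theta(z)|\le c_1' t^{-\alpha/\theta(z)}$ with $\theta(z)=\theta e^{-\beta\cdot z}$, so the deviation factor contributes $\tau^{-2\alpha e^{\beta\cdot z}/\theta}$ rather than your $\tau^{-2\alpha/\theta}$; the product in \eqref{proofTau} then decays like $\tau^{-(1+\gamma+2\alpha)e^{\beta\cdot z}/\theta}$, whose worst case over $z$ is attained at $\varrho_1$, and $\tau_n$ is chosen in \eqref{c1} precisely so that $n\max_z\chi^2\asymp\ln n$. With that calibration the whole numerator in Theorem \ref{th1} is $O(\ln n)$ and the stated exponent comes solely from the denominator's worst case $\varrho_2$. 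To close your proof you must either adopt this stronger conditional form of the Hall condition (note it is not literally implied by \eqref{cHall1} when $e^{\beta\cdot z}>1$, a point the paper itself glosses over), or add the assumption $\varrho_1\le1$ (e.g.\ $0\in\mathbb{Z}$ so the baseline is among the covariate values), or rebalance $\tau_n$ against your own bias bound, which produces a different threshold and a different exponent from those in the statement.
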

When the covariate $z$ is absent, say $z=0$, then $\varrho_2=\varrho_1 = 1$ and 
we recover the result of Theorem 4.2 of the paper \cite{grama2014},
where it is shown that when $\gamma$ goes to $0$ (no censoring) the rate becomes close to
the optimal rate of convergence $n^{\frac{ 2\alpha} { 1+2\alpha}}$ in the context of the extreme value estimation, 
see \cite{drees1998} and \cite{GramaSpokoiny2008}. 
In the case of a binary covariate (i.e. $z \in \{0,1\}$), if we assume that $\beta > 0$, the convergence speed becomes $\mathcal{K}(\widehat \theta_{\tau_n},\theta) = O_{\mathbb{P}}\left((\frac{\ln n}{n} )^{ 1- e^{\beta} \frac{ 1 + \gamma} { 1+\gamma+2\alpha}}\right)$ with $\tau_n = n^{\frac{  \theta}{1+\gamma+2\alpha}} \ln^{-\frac{\theta}{1+\gamma +2\alpha}} n.$

Condition \eqref{cHall2} may seem a bit cumbersome at first sight, however, with a closer look we see that it is quite natural
if we want to obtain a close rate of convergence.  
To see this we note that condition \eqref{cHall1} can be equivalently stated as
\begin{equation*}
|th(\given{t}{z}) - \frac{1}{\theta(z)}|\leq c_1t^{-\frac{\alpha}{\theta(z)}},
\end{equation*}
where the tail index  $\theta(z)=\theta e^{-\beta \cdot z}$ depends on the covariate $z$. 
Now conditions \eqref{censoringcond001} and \eqref{Crate} will be verified with $q_0=\frac{\gamma}{\gamma+1}$.

Condition \eqref{cHall2} can be replaced by the following condition:
\begin{conditionC}\label{cHall2bis}
	The hazard function $h_C$ of the censoring time $C$ is such that for any $z \in \mathbb{Z} := [z_{\min}; z_{\max}]$ and any $t>1,$ 
	\begin{equation*}
	|th_C(t) - \frac{\gamma}{\theta}|\leq c_2t^{-\mu},
	\end{equation*}
	where 
	$\gamma >0$, $c_2>0$ and $\mu>1$ are some constants.
\end{conditionC}
With \eqref{cHall2bis}  instead of \eqref{cHall2} we can obtain the following result:
\begin{theorem}
	\label{Th6}
	Assume condition \eqref{CBound}, \eqref{cHall1} and \eqref{cHall2bis}. 
	Suppose in addition that $\frac{ \varrho_2 + \gamma} { \varrho_1+\gamma+2\alpha\varrho_1}  \leq 1,$
	where 
	$\varrho_1 = \min_{\beta \cdot z} (e^{\beta \cdot z})$ and $\varrho_2 = \max_{\beta \cdot z} (e^{\beta \cdot z})$.
	Then, there exists a constant $c$ such that, 
	$$
	\lim_{n\to \infty}  \mathbb{P}  \left(\mathcal{K}(\widehat \theta_{\tau_n},\theta)  
	\leq c \left( \frac{\ln n}{n} \right)^{ 1- \frac{ \varrho_2 + \gamma} { \varrho_1+\gamma+2\alpha\varrho_1}} \right)  = 1,
	$$
	where
	$$
	\tau_n = n^{\frac{  \theta}{\varrho_1+\gamma+2\alpha\varrho_1}} \ln^{-\frac{\theta}{\varrho_1+\gamma +2\alpha\varrho_1}} n.
	$$
\end{theorem}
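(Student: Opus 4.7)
My plan is to mirror the proof of Theorem~\ref{Th5}, exchanging the role of condition~\eqref{cHall2} with that of~\eqref{cHall2bis}. The starting point is the oracle inequality of Theorem~\ref{thp1}, which gives
\[
\mathcal{K}(\widehat\theta_{\tau_n},\theta) = O_{\mathbb{P}}\!\left(\frac{B_n(\tau_n) + \ln n}{\widehat n_{\tau_n}}\right),\qquad B_n(\tau) = \sum_{i=1}^n \chi^2\!\bigl(P_{S_0}(\cdot|z_i),P_{S_{0,\tau,\theta}}(\cdot|z_i)\bigr).
\]
I would then separately bound the bias $B_n(\tau)/\widehat n_{\tau}$ and the variance $\ln n/\widehat n_{\tau}$ under the Hall-type conditions, and finally set $\tau=\tau_n$ and verify the two terms are of the advertised order.

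For the bias, since $p_{S_0}$ and $p_{S_{0,\tau,\theta}}$ coincide on $\{T\leq \tau\}$, only the tail contributes. Two applications of~\eqref{cHall1}---for the hazard ratio $\theta t h_0(t) = 1 + O(t^{-\alpha/\theta})$ and for the survival ratio $\ln(S_0(t)/S_{0,\tau,\theta}(t)) = \int_\tau^t (\tfrac{1}{\theta s} - h_0(s))\,ds = O(\tau^{-\alpha/\theta})$---show that the joint density ratio $dP_{S_0}/dP_{S_{0,\tau,\theta}}$ equals $1 + O(e^{\beta\cdot z_i}\tau^{-\alpha/\theta})$ uniformly on $\{T>\tau\}$. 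Integrating the square of this deviation against $P_{S_{0,\tau,\theta}}$ yields $\chi^2_i \leq C e^{2\beta\cdot z_i}\tau^{-2\alpha/\theta} P(T_i>\tau|z_i)$. Using the tail asymptotics $S(\tau|z_i)\asymp \tau^{-e^{\beta\cdot z_i}/\theta}$ and $S_C(\tau)\asymp \tau^{-\gamma/\theta}$ provided by \eqref{cHall1}--\eqref{cHall2bis}, together with $e^{\beta\cdot z_i}\geq \varrho_1$ from \eqref{CBound}, summation gives $B_n(\tau) \lesssim n\tau^{-(2\alpha+\varrho_1+\gamma)/\theta}$.

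For the variance I need a lower bound on $\widehat n_\tau$. Direct computation yields $\mathbb{E}\widehat n_\tau = \sum_i \int_\tau^\infty f(t|z_i)S_C(t)\,dt$, and the same tail asymptotics together with $e^{\beta\cdot z_i}\leq \varrho_2$ give $\mathbb{E}\widehat n_\tau \gtrsim n\tau^{-(\varrho_2+\gamma)/\theta}$. A Bernstein inequality for the sum of independent Bernoullis then gives $\widehat n_\tau \geq \tfrac12\mathbb{E}\widehat n_\tau$ with probability tending to $1$, provided $\mathbb{E}\widehat n_\tau\to\infty$; the stated $\tau_n$ ensures this under the hypothesis $(\varrho_2+\gamma)/(\varrho_1+\gamma+2\alpha\varrho_1)\leq 1$.

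Plugging the stated $\tau_n$ into both estimates produces a rate of the form $(\ln n/n)^{1-(\varrho_2+\gamma)/(\varrho_1+\gamma+2\alpha\varrho_1)}$ after simplification. The main obstacle is the careful tail bookkeeping: the point where Theorem~\ref{Th6} diverges from Theorem~\ref{Th5} is that \eqref{cHall2bis} makes $S_C$ covariate-free, so the effective tail index of the product $S(\tau|z_i)S_C(\tau)$ is the \emph{additive} $e^{\beta\cdot z_i}+\gamma$ rather than the \emph{multiplicative} $(1+\gamma)e^{\beta\cdot z_i}$ appearing under~\eqref{cHall2}. Propagating this single change through the integrals (both in the $\chi^2$ bound and in the expectation of $\widehat n_\tau$) is precisely what replaces the denominator $\varrho_1(1+\gamma+2\alpha)$ of Theorem~\ref{Th5} by $\varrho_1+\gamma+2\alpha\varrho_1$ here.
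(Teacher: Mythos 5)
Your overall architecture is the same as the paper's (the paper omits the proof of Theorem~\ref{Th6}, saying it mirrors Theorem~\ref{Th5}): start from the oracle bound of Theorem~\ref{thp1}, bound the $\chi^2$ bias via the Hall-type tail estimates, lower-bound $\widehat n_{\tau}$ through $\mathbb{E}\widehat n_\tau\gtrsim \sum_i S(\given{\tau}{z_i})S_C(\tau)$ plus a Chernoff-type deviation bound, and plug in $\tau_n$. Your variance side and your identification of the additive tail index $e^{\beta\cdot z}+\gamma$ for $S(\given{\tau}{z})S_C(\tau)$ are exactly what the paper's computation for Theorem~\ref{Th5} becomes under \eqref{cHall2bis}.

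The gap is in the bias bookkeeping. You bound the per-observation divergence by $\chi^2_i\leq C e^{2\beta\cdot z_i}\tau^{-2\alpha/\theta}S(\given{\tau}{z_i})S_C(\tau)$, i.e.\ you keep the \emph{baseline} exponent $2\alpha/\theta$ from \eqref{cHall1} and treat $e^{\beta\cdot z_i}$ only as a bounded constant, getting $B_n(\tau)\lesssim n\,\tau^{-(2\alpha+\varrho_1+\gamma)/\theta}$. But the stated threshold and rate encode the exponent $(\varrho_1+\gamma+2\alpha\varrho_1)/\theta$, i.e.\ the factor $2\alpha$ must appear multiplied by $e^{\beta\cdot z}$ (worst case $\varrho_1$). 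In the paper's proof of Theorem~\ref{Th5} this comes from restating \eqref{cHall1} in the covariate-indexed form $|th(\given{t}{z})-e^{\beta\cdot z}/\theta|\leq c_1' t^{-\alpha e^{\beta\cdot z}/\theta}$, so that the bound on $S(\given{\tau}{z})S_C(\tau)\rho_\tau^2$ carries the exponent $(e^{\beta\cdot z}+\gamma+2\alpha e^{\beta\cdot z})/\theta$, maximized over $z\in\mathbb{Z}$ at $e^{\beta\cdot z}=\varrho_1$; equating that bound to $c_0\ln n/n$ is precisely what yields $\tau_n=(n/\ln n)^{\theta/(\varrho_1+\gamma+2\alpha\varrho_1)}$. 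With your bound, plugging in the stated $\tau_n$ gives $B_n(\tau_n)\asymp n^{1-r}(\ln n)^{r}$ with $r=(2\alpha+\varrho_1+\gamma)/(2\alpha\varrho_1+\varrho_1+\gamma)$, which is $O(\ln n)$ only when $\varrho_1\leq 1$; for $\varrho_1>1$ the numerator is polynomially larger than $\ln n$ and the advertised rate does not follow from your estimates (and for $\varrho_1<1$ your own bias/variance balance would point to a different $\tau_n$ than the one stated). So "after simplification" hides the decisive step: you must propagate the covariate into the Hall-condition exponent governing $\rho_\tau$ (as the paper does), not only into the survival factors $S(\given{\tau}{z})$, in order to recover the stated $\tau_n$ and the exponent $1-\frac{\varrho_2+\gamma}{\varrho_1+\gamma+2\alpha\varrho_1}$.
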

The proof of Theorem \ref{Th6} is similar to that of Theorem \ref{Th5} and therefore will not be given in this paper.

\section{The selection of the threshold}\label{autsel}

It is well-known that the choice of the threshold $\tau$ has a major impact on the quality of the estimation in the extreme value modelling. 
We propose a data-driven choice of the threshold $\tau$ inspired by \cite{GramaSpokoiny2008}. The adaptive threshold $\widehat \tau$ is selected by a sequential testing procedure followed by a selection using a penalized maximum likelihood. 

Consider the following semi-parametric survival function consisting of three
parts: 
\begin{equation}
\label{cpmod}
S_{0,\tau,\theta,s,\lambda}(x) = \left\{
\begin{array}{cc}
S_{0}\left( x\right) & \text{if\ \ }x\in \lbrack 0,s ], \\
S_{0}\left( s \right)  \left( \frac{x}{s} \right)^{-\frac{1}{\lambda}} & \text{if\ \ }x \in (s,\tau]. \\
S_{0}\left( s \right)\left( \frac{\tau}{s} \right)^{-\frac{1}{\lambda}} \left( \frac{x}{\tau} \right)^{-\frac{1}{\theta}}& \text{if\ \ }x>\tau ,%
\end{array}%
\right.
\end{equation}
where $\lambda >0$, $\theta >0$ and $x_0<s<\tau$. 
The maximum quasi-likelihood estimators 
$\widehat \theta_\tau$ of $\theta$ and $\widehat \lambda_{s,\tau}$ of $\lambda$ 
are respectively given by (\ref{estTheta}) and
\[
\widehat \lambda_{s,\tau}= \frac{\widehat\theta_s\widehat n_s-\widehat\theta_\tau\widehat n_\tau}{\widehat n_{s,\tau}},
\]
where for brevity, we have denoted $\widehat n_s = \sum_{t_i>s}\delta_i$, $\widehat n_\tau = \sum_{t_i>\tau}\delta_i$ and $\widehat n_{s,\tau} = \sum_{s<t_i<\tau}\delta_i$. 

Assume that the observations $t_i$ are ordered in the decreasing order  such that $t_1>...>t_n$. 
We define a uniform grid $K$ in the subscripts $i=1,..,n$ of a size $n_{grid}$, 
say $K = \{k_1, k_2, ..., k_{n_{grid}} \}$.
The grid $K$ define the set of observations 
$\{t_{k_1}, t_{k_2}, ..., t_{k_{n_{grid}}} \}$
on which the testing procedure 
will be performed. 

We start with the first subscript $k=k_1$ on the grid $K$.
For the subscript $k$ we test the null hypothesis 
$$
\mathcal{H}_0(t_{k}) \quad \text{:} \quad S_0(x)=S_{0,t_{k},\theta}(x)
$$
against the alternative hypothesis
$$
\mathcal{H}_1(t_{k},t_l) \quad \text{:} \quad S_0(x)=S_{0,t_{k},\theta,t_l,\lambda}(x),
$$
where $S_{0,t_{k},\theta}(x)$ is given by \eqref{sem-mod}, $S_0(x)=S_{0,t_{k},\theta,t_l,\lambda}(x)$ 
is given by \eqref{cpmod}, and  $t_l$ can change between $t_1$ and $t_{k}$. 
If we choose all the $t_l$ between $t_1$ and $t_{k}$, some bias is introduced by the observations 
too close to $t_1$ and $t_k$. 
To overcome this problem, we introduce two parameters $\zeta'$ and $\zeta''$ satisfying 
$0<\zeta',\zeta''<0.5$ which are empirically calibrated. 
Now $t_l$ will be varying between $t_{(1-\zeta'')k}$ and $t_{\zeta'k}$. 

The log-likelihood ratio test statistic used to test the null hypothesis $\mathcal{H}_0(t_{k})$ against the alternative $\mathcal{H}_1(t_{k},t_{l})$ is given by
\begin{equation}
\label{LR}
LR(t_{k},t_l)=\widehat n_{t_k,t_l} \mathcal{K}(\widehat\lambda_{t_k,t_l},\widehat\theta_{t_k}) 
+ \widehat n_{t_l} \mathcal{K}(\widehat\theta_{t_l},\widehat\theta_{t_k}),
\end{equation}
where $\mathcal{K}$ is the Kullback-Leibler divergence defined in Section \ref{ConsTheta}. The test statistic is compared to a critical value $D$, 
which is also empirically calibrated. 
We test, for every $t_l \in [t_{(1-\zeta'')k};t_{\zeta'k}]$,
the hypothesis
$\mathcal{H}_0(t_{k})$ against the alternative $\mathcal{H}_1(t_{k},t_l)$
and, if the critical value is not exceeded, we increase the subscript on the grid $K$ and 
preform the test with the next subscript on the grid $K.$
This will be repeated until the critical value is exceeded.

We denote by $\hat k$ the first subscript $k\in K$ for which the critical value $D$ is exceeded. 
Set  $\widehat s= t_{\hat k}$, which is called in the sequel the \textit{breaking point}. 
We aim to choose the adaptive threshold $\widehat \tau$ by maximizing the quasi-log-likelihood function 
\begin{align*} 
&\underset{\theta}{\text{max}}\mathcal{L}^{part}(\given{\theta}{\mathbf{z}})  - \text{Pen}(\given{\widehat\theta_{\hat s}}{\mathbf{z}}) \\
&\qquad = \mathcal{L}^{part}(\given{\widehat\theta_\tau}{\mathbf{z}})  - \text{Pen}(\given{\widehat\theta_{\hat s}}{\mathbf{z}}) ,
\end{align*}
where $\text{Pen}(\given{\widehat\theta_{\hat s}}{\mathbf{z}})$ is the penalty function defined by
\begin{equation}
\label{pen}
\text{Pen}(\given{\theta}{\mathbf{z}}) = \mathcal{L}^{part}(\given{\theta}{\mathbf{z}}).
\end{equation}
Taking into account (\ref{pen}), it follows that the second term of (\ref{LR}) can be viewed as the penalized quasi-log-likelihood
\begin{equation}\label{penLik}
\mathcal{L}^{Pen}(s,\tau)= \mathcal{L}^{part}(\given{\widehat\theta_\tau}{\mathbf{z}})-\text{Pen}(\given{\widehat\theta_{s}}{\mathbf{z}}) .
\end{equation}
We find the subscript which maximize the penalized quasi-log-likelihood
\begin{align} \label{choice-lll001}
\widehat l = \underset{\zeta'\hat k\leq l \leq (1-\zeta'')\hat k}{\mbox{argmax} } 
\mathcal{L}^{Pen}\left( \hat s , t_l \right).
\end{align}
Finally, we set the adaptive threshold 
\begin{align} \label{choice-tau001}
\widehat \tau = t_{\hat l}
\end{align} 
and its associated parameter $\widehat \theta_{\hat \tau}$.

%

\section{Aggregation}\label{Aggreg}

The transition between the non-parametric part and the parametric part in the model \eqref{sem-mod} can sometimes be rough,
especially when the sample size is small.  
We propose two ways of smoothing the transition relying on a aggregating estimators corresponding to different thresholds.

\subsection{Simple aggregation}\label{SimpleAggreg}
The first aggregation we describe can be called 'simple aggregation' as we aggregate the estimated cumulative hazard function from multiple thresholds. The procedure can be resumed with 3 simple steps, where the observations $t_i$ are ordered in the decreasing order  such that $t_1>...>t_n$.
\begin{itemize}
	\item \textbf{Step 1.} Choose $M\geq1$ thresholds $\tau_1=t_{m_0},\dots, \tau_m=t_{m_0+M-1}$ from the observed values, where $m_0\geq 1.$  
	\item \textbf{Step 2.} For each chosen threshold $\tau_k$ compute the estimated cumulative hazard function $\widehat H_{z,\tau_k,\theta} (x)$.
	\item \textbf{Step 3.} Compute the simple aggregation estimator by 
\begin{equation}\label{sae}
	\widehat S_{\mbox{\scriptsize sa}}(\given{x}{z}) = \exp \left( - \frac{1}{M}\sum_{k=1}^{M} \widehat H_{z,\tau_k,\theta} (x) \right).
\end{equation}
\end{itemize}
For the algorithm to work, we need to choose $m_0$ as the first observation (censored or not) having at least one non-censored observation above it:
$m_0 \geq \min \{ m: \sum_{t_i > t_m}\delta_i =1\}.$ 
With $M=1$ and $m_0=k$, the procedure becomes the estimation of the semi-parametric model \eqref{sem-mod} with a fixed threshold $\tau=t_{k}$.

\subsection{Adaptive aggregation}\label{AddAggreg}

The second aggregation we describe can be called 'adaptive aggregation' as we aggregate cumulative hazard functions from the adaptive procedure described in Section \ref{autsel}. Let $l_1,\dots,l_{N_{\hat s}}$ be the reversed ranks of the sequence
$$ 
\mathcal{L}^{Pen}\left( t_{\hat s}, t_l \right), \ \zeta'\hat s\leq l \leq (1-\zeta'')\hat s,
$$
where $N_{\hat s}$ is its cardinality
and $\hat s$ is the breaking point computed by the adaptive procedure described in Section \ref{autsel}. 
For the adaptive aggregation we proceed in the same way as in the case of the simple aggregation described above.  
It can be resumed with the following steps: 
\begin{itemize}
	\item \textbf{Step 1.} 
	Choose $M\geq1$ thresholds $\tau_1=t_{l_1},\dots, \tau_M=t_{l_M}$ from the observed values.
	\item \textbf{Step 2.} For each chosen threshold $\tau_{l_k}$, compute the estimated cumulative hazard function $\widehat H_{z,\tau_{l_k},\theta} (x)$.
	\item \textbf{Step 3.} Compute the weights on the estimated cumulative hazard functions from the value of the penalized likelihood \eqref{penLik} by
	$$
	w_{l_k} = \mathcal{L}^{Pen}\left( t_{\hat s}, t_{l_k} \right) / \sum_{i=1}^{M} \mathcal{L}^{Pen}\left( t_{\hat s}, t_{l_i} \right).
	$$
	\item \textbf{Step 4.} Compute the adaptive aggregation estimator by
	\begin{equation}\label{Aae}
	\widehat S_{\mbox{\scriptsize aa}}(\given{x}{z}) = \exp \left( - \sum_{k=1}^{M} w_{l_k} \widehat H_{z,\tau_{l_k},\theta} (x) \right).
	\end{equation}
\end{itemize}
Note that with $M=1$, the procedure becomes the estimation of the semi-parametric model \eqref{sem-mod} with the adaptive threshold $\tau$ chosen from the procedure described in Section \ref{autsel}.

\section{Simulation study}\label{sim}

We carry a simulation study to evaluate how 
the proposed estimators 
behave 
against the usual Nelson-Aalen estimator. 
We are interested in the values of the baseline survival probability $S_0(x)$ when $x$ is large. To compare the estimations, we use the relative mean square error (RelMSE), which we define as $RelMSE_{\widehat{S_0}(x)} = \mathbb{E} \left(\ln^2\frac{\widehat S_0 (x)}{S_0 (x)}\right)$. One can compare the estimated survival function and the true survival function for any $z$ by multiplying the error for the baseline by $e^{2\beta . z}$. We also look at the ratio between the estimators proposed in this paper and the Nelson-Aalen estimator. The parameters of the adaptive procedure in Section \ref{autsel} are set to the following values $n_{\text{grid}}=100$, $\zeta'=0.25$, $\zeta''=0.05$.
A simulation study has been performed in \cite{DGPT2015} concerning the choice of these parameters and led to these values.

The study of the properties of the adaptive estimators introduced in Sections \ref{autsel} and \ref{Aggreg} 
is outside the scope of this paper.   
However, we shall discus briefly the choice of the critical value $D$, as this seems to be one of the sensitive points of the proposed adaptive procedure in Section \ref{autsel}. 
In our paper the critical value is calibrated under the hypothesis that the true distribution $F$ is fully parametric and follows a Pareto distribution with some parameter $\theta$ and so does not depend on $F_T$ nor $F_C$.
We note in addition that, as it is shown in \cite{GramaSpokoiny2008}, 
the law of the test statistic \eqref{LR} does not depend on the parameter $\theta$, 
and therefore, we can choose $\theta=1$. It is also argued in \cite{GramaSpokoiny2008} that the obtained critical value remains stable with the change of the number of observations. 
If we suppose for a moment that we introduce censoring in the model, at least from the intuitive point of view, it will act as a reducer of the number of extreme observations, so by the previous remark the critical value will remain stable.
Moreover, it was also observed that the adaptive choice $\widehat \tau$ remains stable with respect to some variations of $D$,
since $\widehat \tau$ is defined by   
maximizing the penalized quasi-log-likelihood, see \eqref{choice-lll001} and \eqref{choice-tau001} which stabilize the value of $\widehat\tau$.

The Figure \ref{TSsim} 
shows the empirical distribution function of the test statistic computed from $10000$ Monte-Carlo simulations
with standard Pareto law ($\theta=1$) as true baseline distribution without censoring (continuous line).
We performed an additional study of the critical value in the same conditions under a Pareto type censoring with parameters 
$\theta =\frac{1}{2}$ and $\theta=2$. 
The behaviour of the critical value under the censoring is given in Figure \ref{TSsim} as dashed and dotted lines, respectively. 
The Figure \ref{TSsim} shows that the censoring does not affect significantly the choice of the critical value $D$.

Let be the transformed Cauchy distribution with location parameter $x_0$ and scale $\gamma$ defined as: 
\begin{equation*}
F(x) = 1 - \frac{1 - \frac{1}{\pi} \arctan{\left(\frac{x-x_0}{\gamma}\right)} + \frac{1}{2}}{1- \frac{1}{\pi} \arctan{\left(\frac{0-x_0}{\gamma}\right)} + \frac{1}{2}}
\end{equation*}

\begin{figure}[h]
	\begin{center} 
		\includegraphics[width=90mm,height=60mm]{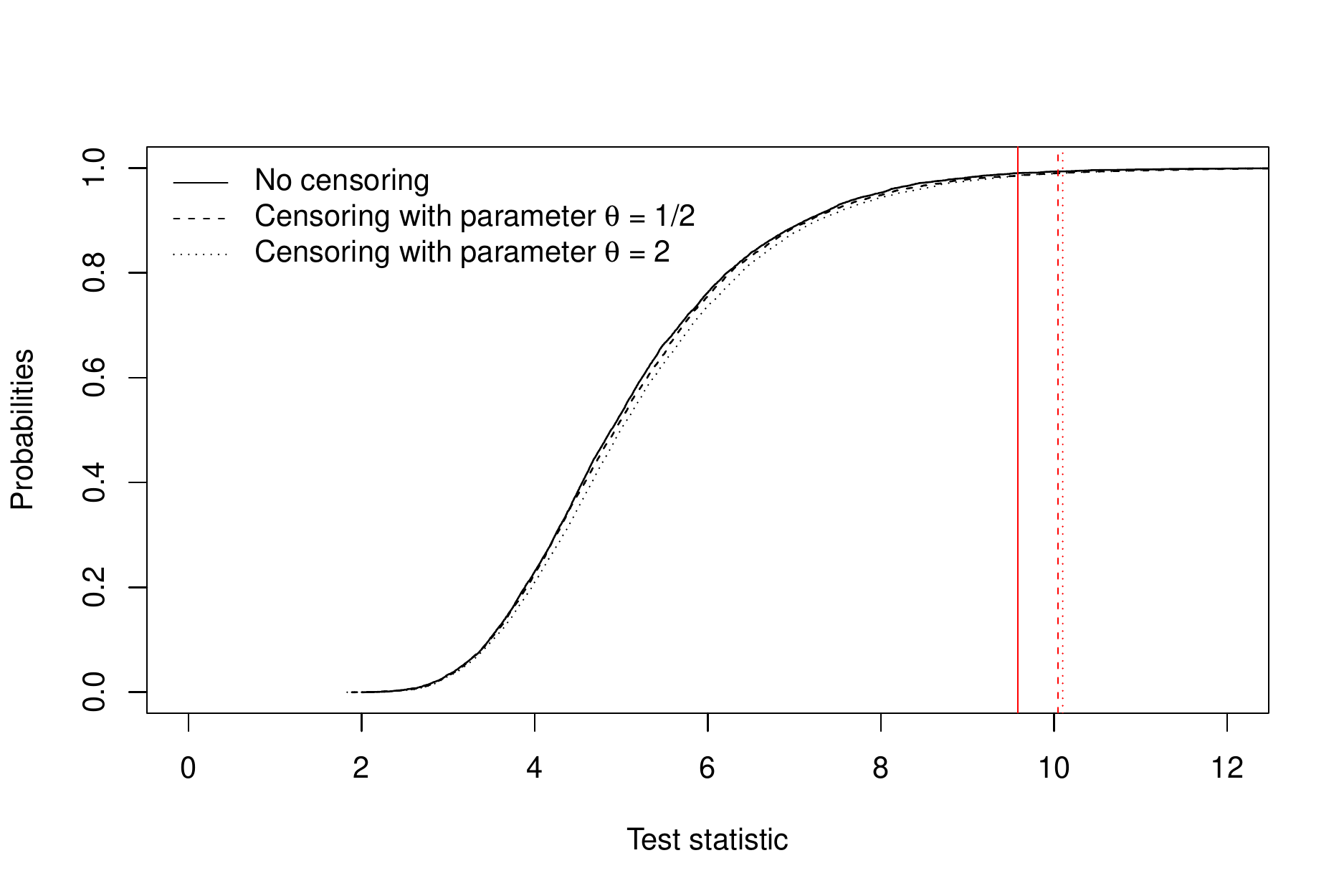} 
		\caption{Empricial distribution function of the test statistic. 
		The three $0.99$-quantiles are marked by the corresponding vertical lines.}
		\label{TSsim}
	\end{center}
\end{figure}

The modified Cauchy model that we use for our simulation is one of the difficult models. It represents a typical model for which only the tail of the 
observed data (beyond the threshold $\tau$) is of type "heavy tailed", while the "beginning" and the "middle" (the part before the threshold $\tau$) 
of the distribution is not. 
In our case the part before the threshold $\tau$ can be of any form and is estimated non-parametrically (by the non-parametric Nelson-Aalen estimator). The tail is detected by our threshold selection procedure formulated in Section \ref{autsel}.

We compare the estimator defined by \eqref{EMSP} with the usual non-parametric Nelson-Aalen  estimator. Moreover, we have also compared both of them with the estimators described in \eqref{SimpleAggreg} and \eqref{AddAggreg}.
The number of aggregation is set by simulations to $M=40$ in the procedure described in Section \ref{SimpleAggreg} and \ref{AddAggreg}.

\begin{figure}[h]
	\begin{center} 
		\includegraphics[width=90mm,height=60mm]{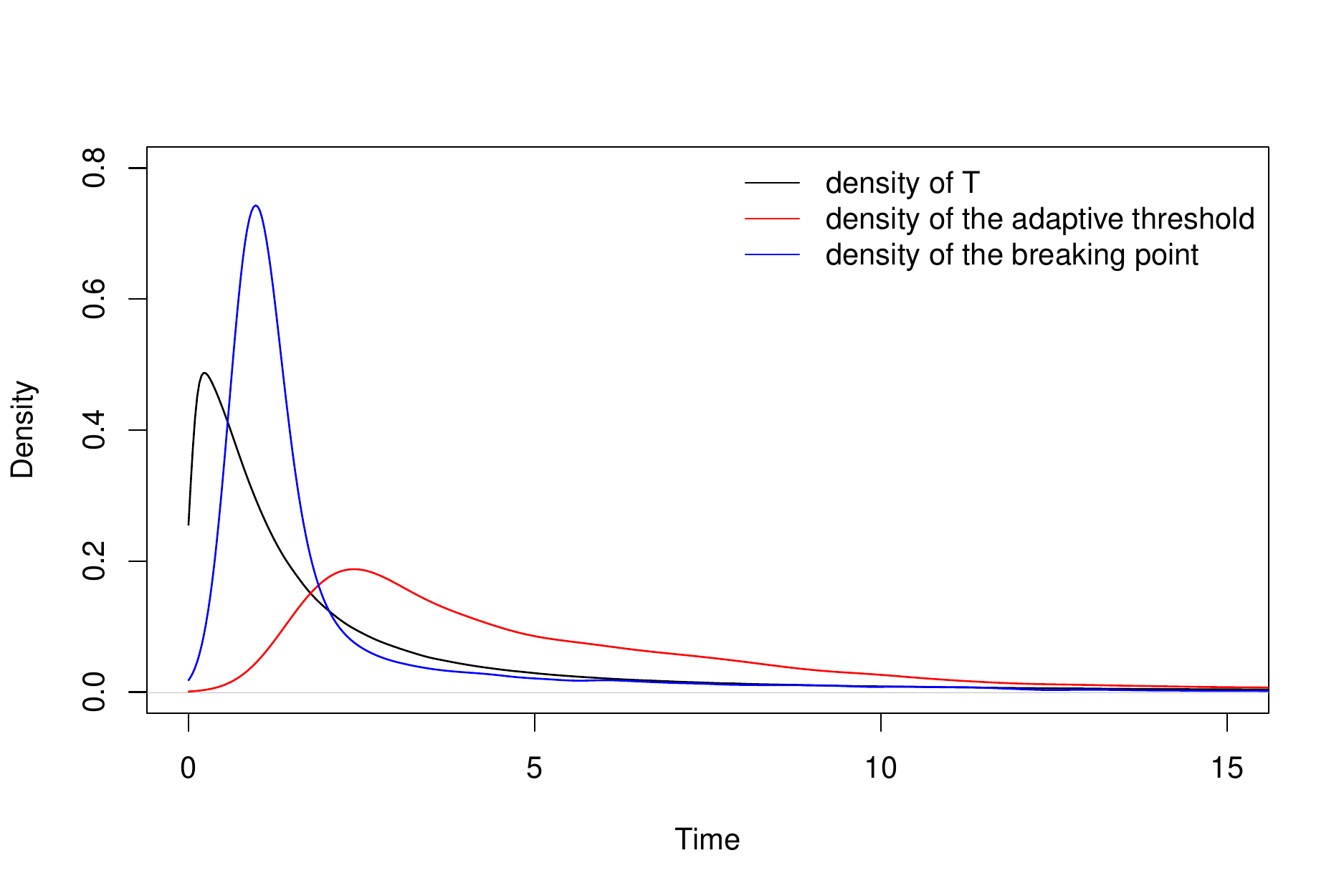} 
		\caption{Density of the threshold $\widehat{\tau}$, the breaking point $\widehat{s}$ and the observations $T$ computed on $10000$ simulations of the transformed Cauchy distribution with parameters $x_0=0$ and $\gamma=1$.}
		\label{DensEx}
	\end{center}
\end{figure}

In order to illustrate the choice of the transformed Cauchy distribution with location $x_0=0$ and scale $\gamma=1$, we show in Figure \ref{DensEx} the density of the failure time $X$, the density of the threshold $\widehat{\tau}$ and the breaking point $\widehat{s}$ chosen by the adaptive procedure presented in Section \ref{autsel}. One can see that the breaking point is chosen after the location parameter of the transformed Cauchy distribution and the threshold is therefore set to a value greater than it.
\begin{table}[h]
	\begin{center} 
		\begin{tabular}{|c|c|c|c|c|c|}
			\hline
			\multicolumn{6}{|c|}{$n=100$}\\
			\hline
			\hline
			$x$ & $100$ & $200$ & $300$ & $400$ & $500$\\
			\hline
			RelMSE of $\widehat{S}_0(x)$ & 4.6750&  7.8920& 10.2147& 12.0651& 13.6145 \\
			RelMSE of $\widehat{S}_{0,\widehat \tau ,\beta}(x)$ & 1.5908& 2.1641& 2.5413& 2.8277 &3.0605\\
			RelMSE with simple aggregation &1.0306 &1.4164 &1.6714 &1.8654& 2.0234\\
			RelMSE with adaptive aggregation&1.2361& 1.6976 &2.0023 &2.2341 &2.4229 \\
			\hline
			\hline
			\multicolumn{6}{|c|}{$n=500$}\\
			\hline
			\hline
			$x$ & $100$ & $200$ & $300$ & $400$&$500$\\
			\hline
			RelMSE of $\widehat{S}_0(x)$ &2.0733& 4.1024& 5.7537& 7.1133& 8.2953 \\
			RelMSE of $\widehat{S}_{0,\widehat \tau ,\beta}(x)$ & 0.4209 &0.5844& 0.6928& 0.7754& 0.8427\\
			RelMSE with simple aggregation & 0.4149 &0.5771 &0.6846& 0.7666 &0.8335\\
			RelMSE with adaptive aggregation& 0.3763& 0.5253& 0.6244& 0.6999& 0.7616\\
			\hline
		\end{tabular}
	\end{center}
	\caption{$1000$ Monte-Carlo simulations where the parameters of the censorship distribution are $x_0=0$ and $\gamma=2$.}
	\label{SimCauch1}
\end{table}
Assume that we have decided on the sample size $n$, the parameter $\beta$, the covariate distribution of $Z$, the baseline survival 
function $S_0(\cdot)$ and the censoring distribution $S_C(\given{\cdot}{z})$. We generated data sets in our simulation study following the pattern : 
we consider $F_0$ to follow the transformed Cauchy distribution with parameters $x_0=0$ and $\gamma=1$. 
In the first study, the survival distribution function $S_C$ is assumed following the Cauchy distribution with parameter $x_0=0$ and $\gamma=2$. 
In a second study, $S_C$ is assumed following the transformed Cauchy distribution with parameter $x_0=10$ and $\gamma=0.1$.
In both cases, the censoring survival distribution function doesn't depend on the covariates.
The mean censoring rate is around $50\%$ for the first distribution of $S_C$ and around $20\%$ for the second distribution of $S_C$.
\begin{figure}[h]
	\makebox[\textwidth][c]{\begin{tabular}{cc}
			\includegraphics[width=79mm,height=52.5mm]{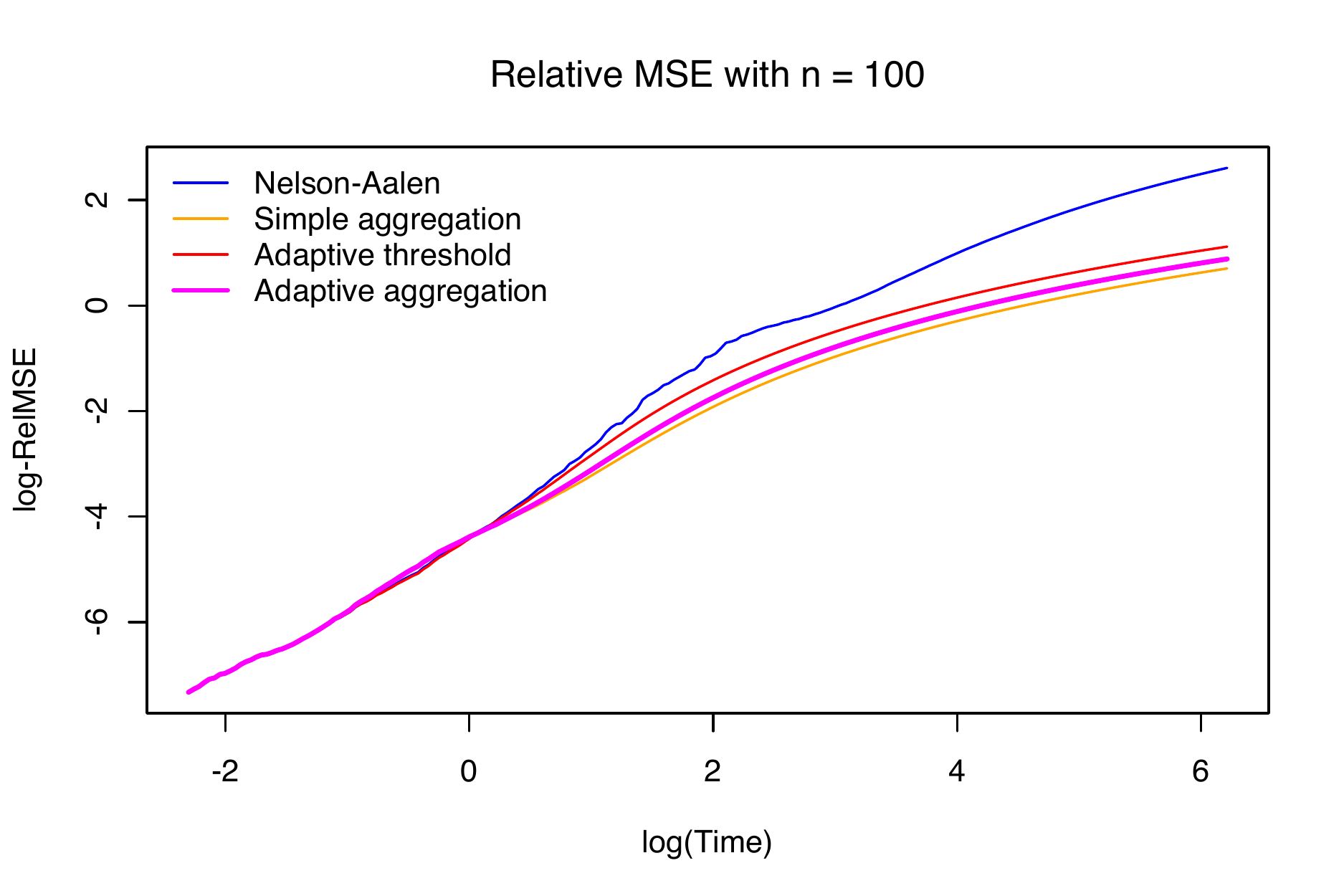}&\includegraphics[width=79mm,height=52.5mm]{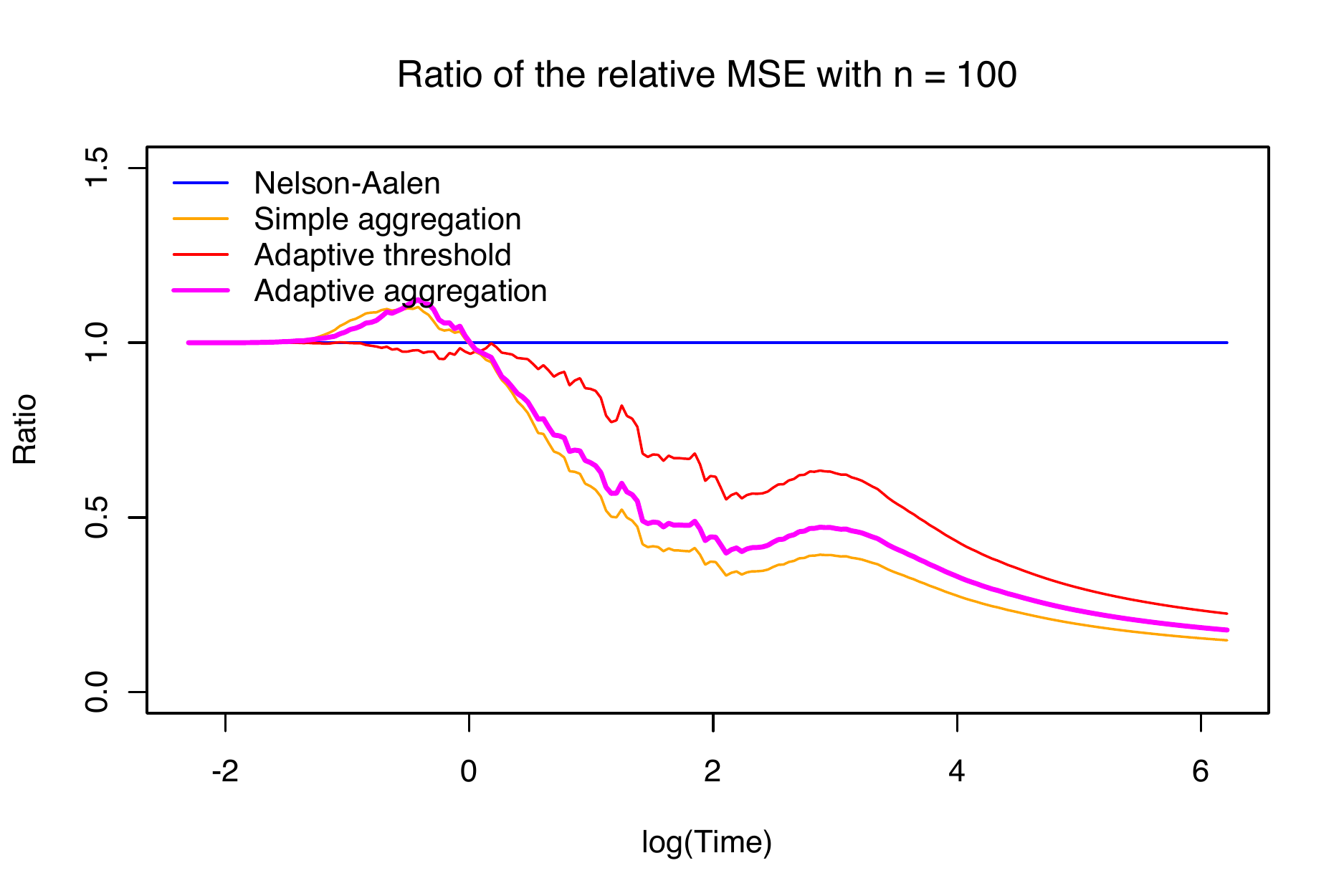}\\
			\includegraphics[width=79mm,height=52.5mm]{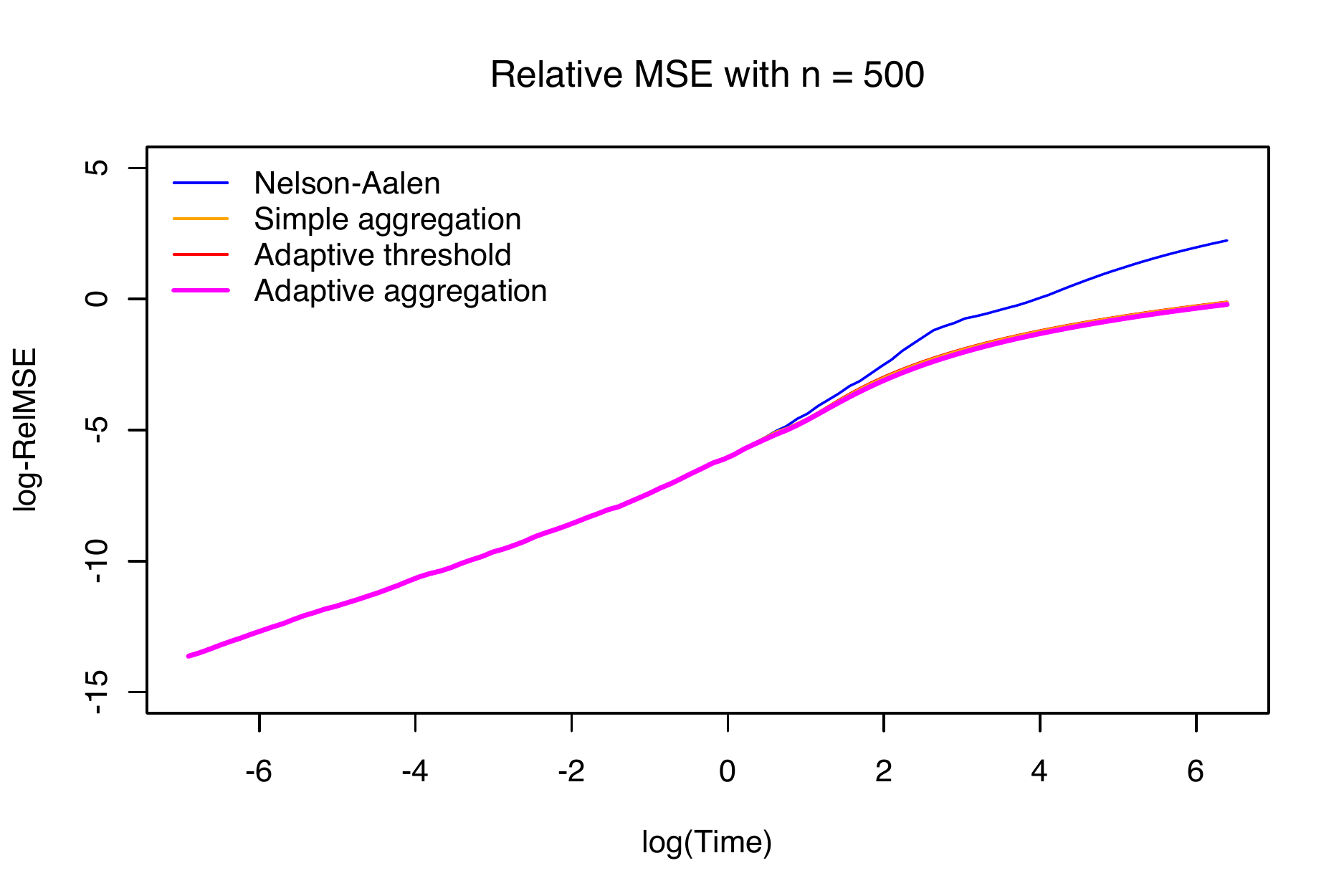}&\includegraphics[width=79mm,height=52.5mm]{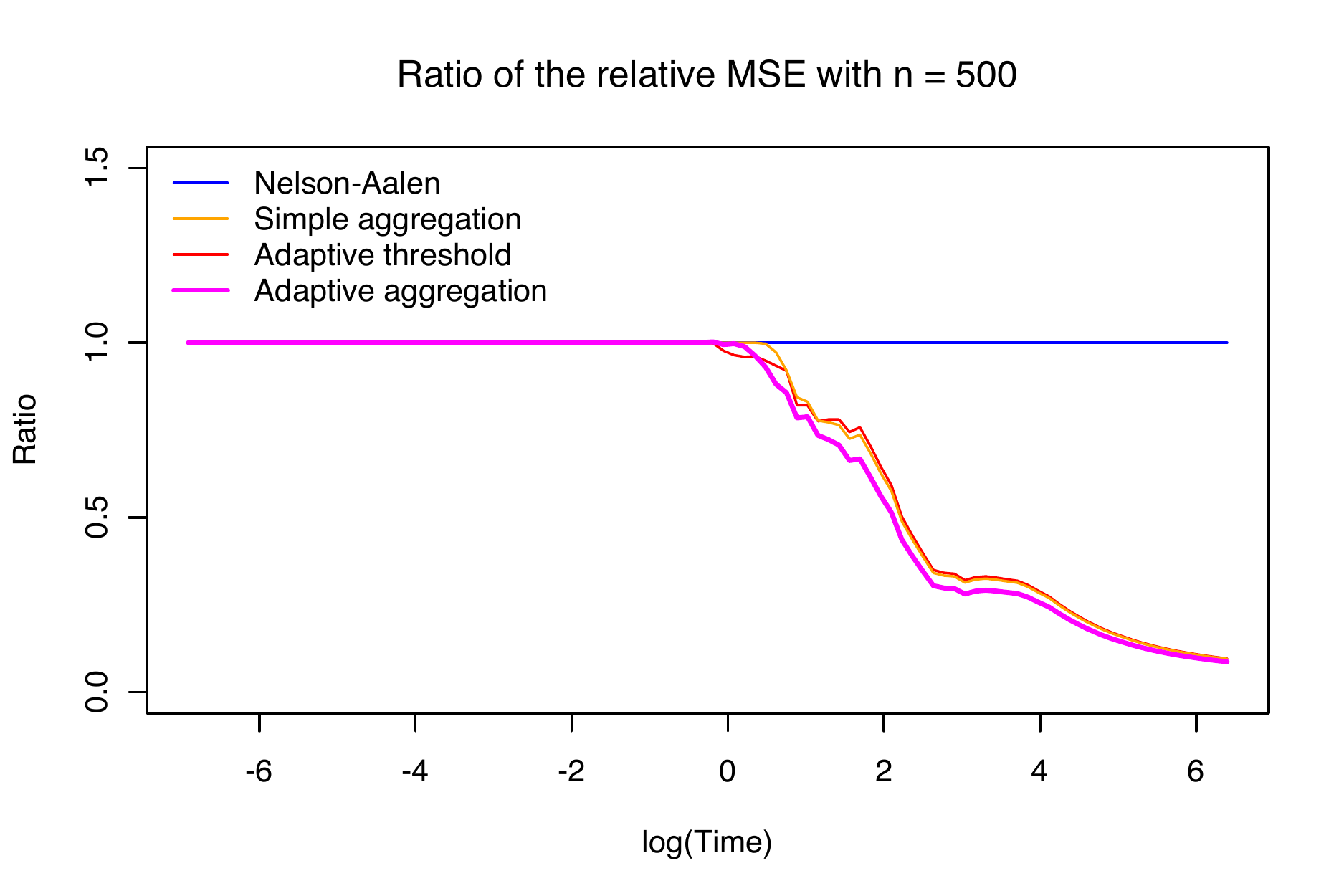}
	\end{tabular}}
	\caption{Relative MSE (left) and ratio of the relative MSE (right) for $n=100$ (top) and $n=500$ (bottom) and $NMC = 1000$. The parameters of the censorship distribution are $x_0=0$ and $\gamma=2$.}
	\label{SimCauchFig}
\end{figure}
The covariate is supposed to be a random uniform variable in $[-1,1]$ and the parameter $\beta$ is set to $-0.5$.
For both $n=100$ and $n=500$, we chose $M=10$ thresholds for the aggregation procedure (simple and adaptive). 
For the simple aggregation, we chose $m_0$ 
corresponding to approximately
$6\%$ of the observed values.

Table \ref{SimCauch1} gives the results of a Monte-Carlo simulation for the estimation of the baseline function with $S_C$ following a transformed Cauchy distribution with parameters $x_0=0$ and $\gamma=2$.
\begin{table}[H]
	\begin{center} 
		\begin{tabular}{|c|c|c|c|c|c|}
			\hline
			\multicolumn{6}{|c|}{$n=100$}\\
			\hline
			\hline
			$x$ & $100$ & $200$ & $300$ & $400$ & $500$\\
			\hline
			RelMSE of $\widehat{S}_0(x)$ &5.1196 & 8.6637 &11.1824 &13.1688& 14.8236\\
			RelMSE of $\widehat{S}_{0,\widehat \tau ,\beta}(x)$ &  0.9857& 1.3812 &1.6447 &1.8460& 2.0104\\
			RelMSE with simple aggregation&0.7266 &1.0534 &1.2749 &1.4456 &1.5857\\
			RelMSE with adaptive aggregation&0.5970 &0.8203& 0.9681 &1.0807 &1.1725 \\
			\hline
			\hline
			\multicolumn{6}{|c|}{$n=500$}\\
			\hline
			\hline
			$x$ & $100$ & $200$ & $300$ & $400$&$500$\\
			\hline
			RelMSE of $\widehat{S}_0(x)$ & 4.4576&  7.7855& 10.1777 &12.0743& 13.6595\\
			RelMSE of $\widehat{S}_{0,\widehat \tau ,\beta}(x)$ & 0.2162& 0.3128& 0.3780& 0.4281& 0.4691\\
			RelMSE with simple aggregation& 0.4499 &0.7033 &0.8787& 1.0154& 1.1283\\
			RelMSE with adaptive aggregation&0.1597& 0.2257 &0.2698 &0.3036& 0.3313\\
			\hline
		\end{tabular}
	\end{center}
	\caption{$1000$ Monte-Carlo simulations with the following parameters of the censorship distribution $x_0=10$ and $\gamma=0.1$.}
	\label{SimCauch2}
\end{table}
\begin{figure}[h]
	\begin{center} 
		\includegraphics[width=90mm,height=60mm]{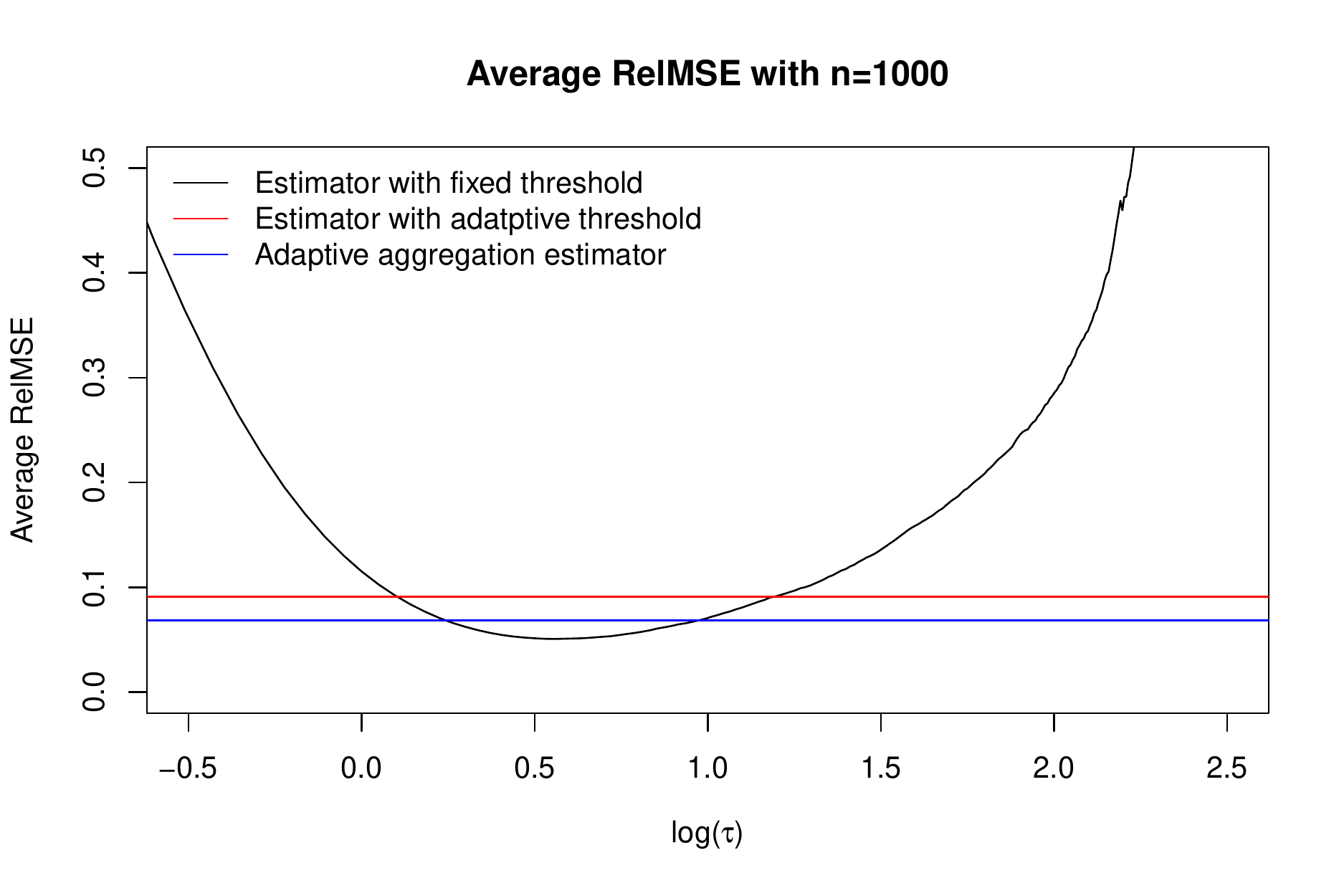} 
		\caption{Average RelMSE's of the estimator with fixed threshold $\tau$ (black line), of the estimator with the adaptive choice of the threshold (red line) and of the adaptively aggregated estimator (blue line). All estimators are computed on $10000$ 
			simulations of the transformed Cauchy distribution with parameters $x_0=0$ and $\gamma=1$ and with parameters $x_0=0$ and $\gamma=2$ for the censorship .}
		\label{TauChoice}
	\end{center}
\end{figure}
Table \ref{SimCauch2} gives the results of a Monte-Carlo simulation for the estimation of the baseline function with $S_C$ following a transformed Cauchy distribution with parameters $x_0=10$ and $\gamma=0.1$.
The performance of the adaptive choice of the threshold is shown 
on Figure \ref{TauChoice}. We plot the average RelMSE of the estimated survival function \eqref{EMSP} with fixed threshold 
$\tau$, where $\tau$ is 
running on the uniform grid on $[0.1, 20]$ (black line) and the average
RelMSE of the estimated survival function \eqref{EMSP} with the adaptive threshold $\widehat\tau$ (red line) chosen by the procedure described in Section \ref{autsel}. 
The average RelMSE is defined by : $ARelMSE_{\widehat{S_0}} = 1/n_{grid} \sum_{j=1}^{n_{grid}} RelMSE_{\widehat{S_0}(x_j)}$, where $x_j, j=1,...,n_{grid}$ is a geometric grid on $[0.1,100]$.
The average RelMSE of the aggregated estimated survival function $\widehat S_{0,\mbox{\scriptsize aa}}$ is also shown in blue line.
One can see that the adaptive choice of the threshold is not optimal but it is close to the best one. Recall that the choice of the adaptive threshold is done without any prior knowledge of the 'best' choice of the threshold (which sometimes is also called 'oracle' choice).

We have added another model based on the $\log$-Gamma law, which has a rather "bad" slowly varying part of type $\log x$. The results are reported in Figure \ref{SimLogGammaFig} and Table \ref{SimLogGamma}. These results show that the introduction of the aggregated estimator improves the estimation of the tails significantly with respect to standard Nelson-Aalen estimator. In particular, for low sample sizes, the aggregation improves the error estimation over the simple adaptive choice. The simulated distribution of the survival and censoring time follow a log-gamma distribution where the density function is defined for $x>1$ by:
$$
f(x) = \frac{b^a \ln(x)^{a - 1}}{\Gamma(a)x^{b+1}},
$$
where $a>0$ is the shape parameter, $b >0$ is the rate parameter and $\Gamma(x) = \int_0^\infty t^{x-1} \exp(-t) dt$ is the Gamma function. We considered $F_0$ to follow a log-Gamma distribution with parameters $a=2$ and $b=2$ and $S_C$ to follow a log-Gamma distribution with parameters $a=5$ and $b=3.5$. The mean censoring rate is around $60\%$.

\begin{figure}[h]
	\makebox[\textwidth][c]{\begin{tabular}{cc}
			\includegraphics[width=79mm,height=52.5mm]{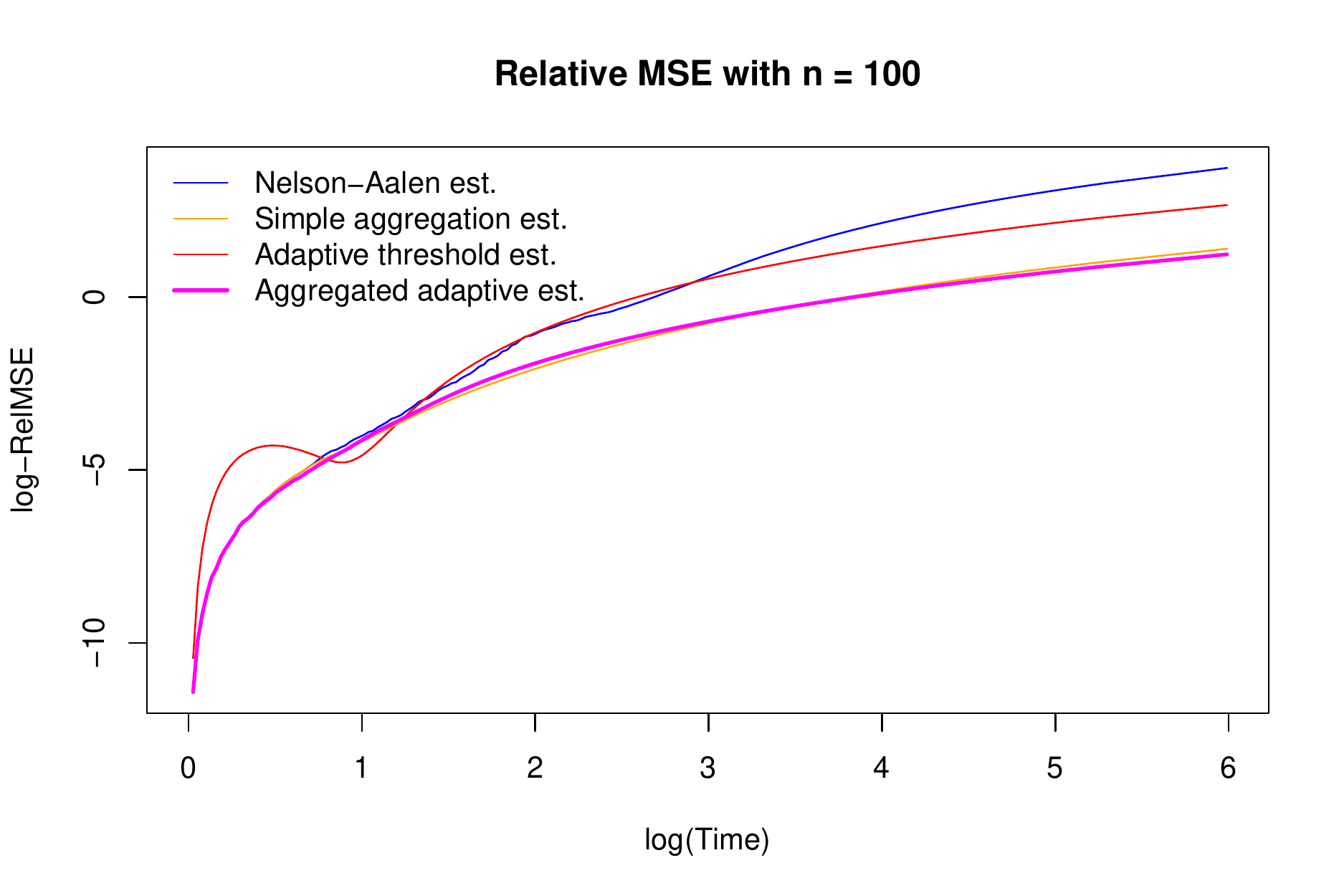}&\includegraphics[width=79mm,height=52.5mm]{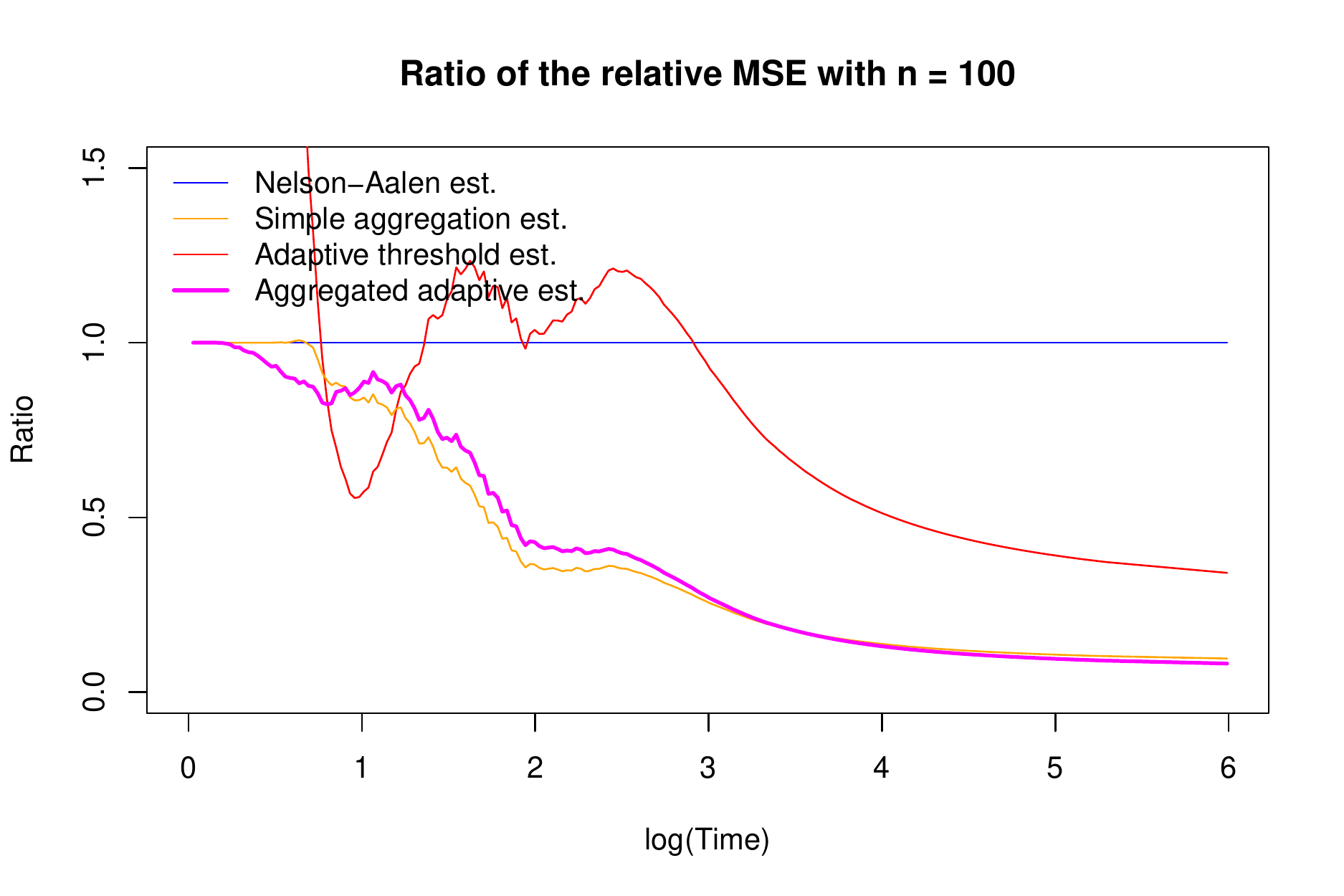}\\
			\includegraphics[width=79mm,height=52.5mm]{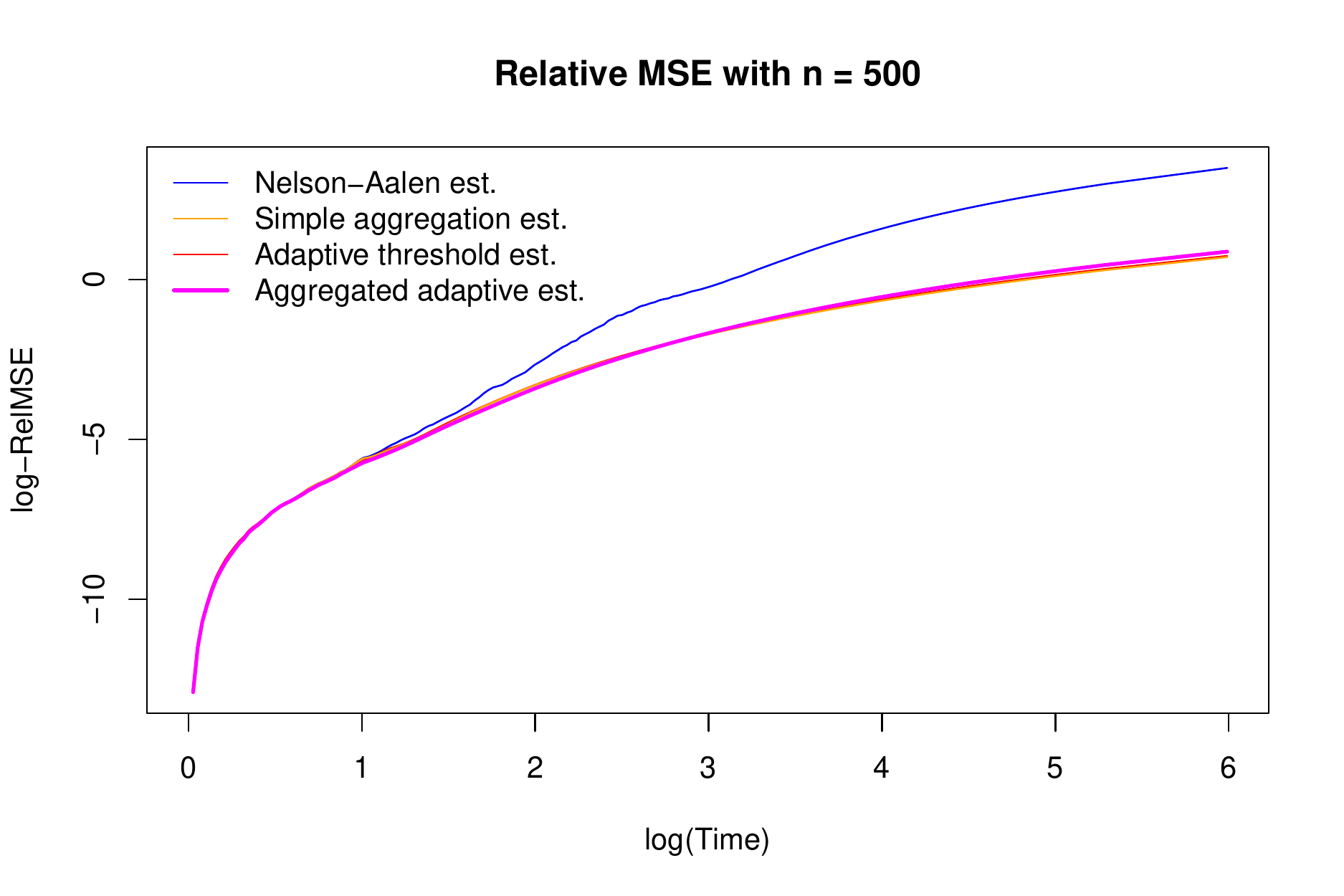}&\includegraphics[width=79mm,height=52.5mm]{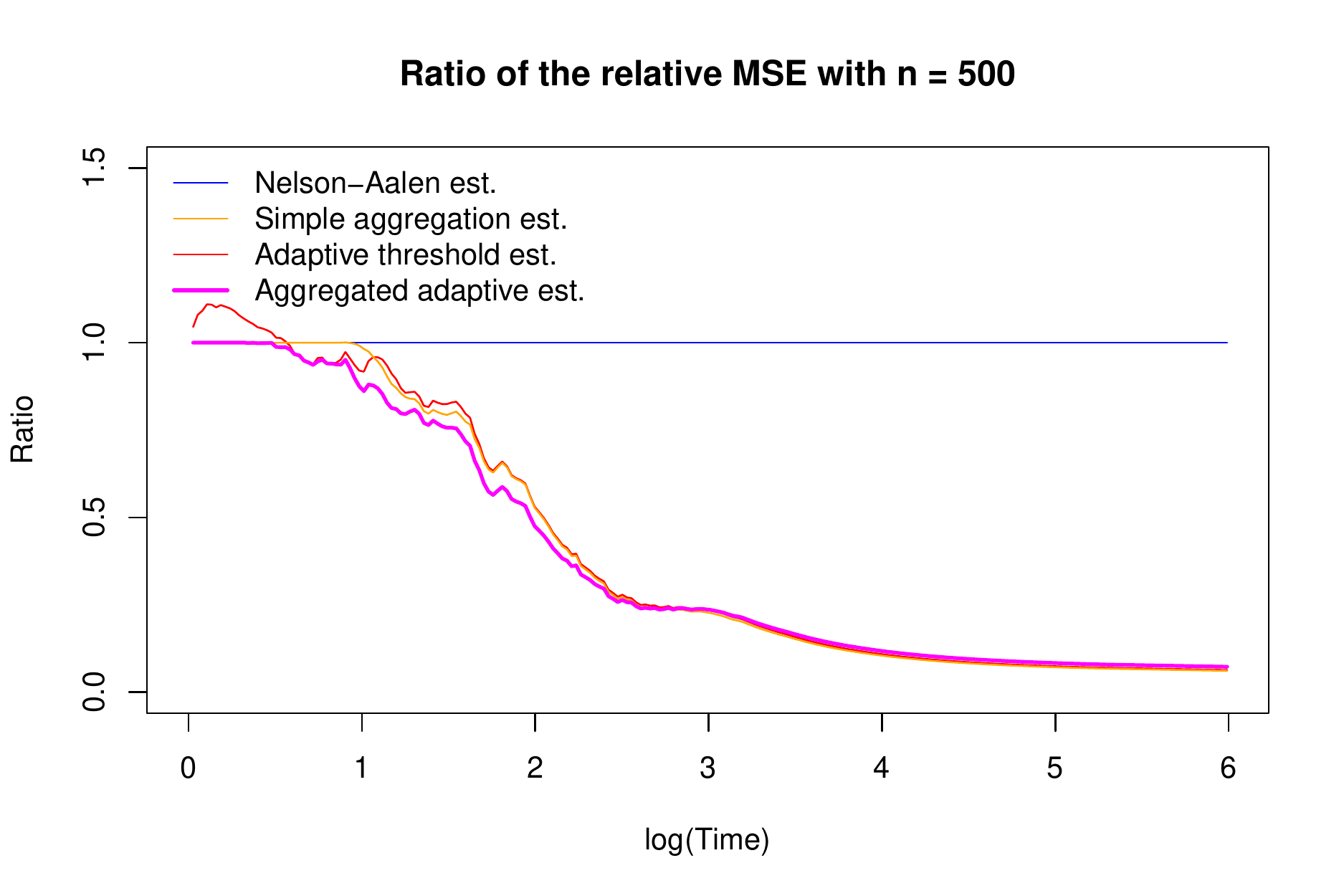}
	\end{tabular}}
	\caption{Relative MSE (left) and ratio of the relative MSE (right) for $n=100$ (top) and $n=500$ (bottom) and $NMC = 1000$. The censorship follows a log-Gamma distribution with $a=5$ and $b=3.5$.}
	\label{SimLogGammaFig}
\end{figure}
Figure \ref{SimLogGammaFig} shows the $RelMSE$ and the ratio of $RelMSE$ of the baseline function compared to the Nelson-Aalen estimator where the baseline function follows a log-Gamma distribution with parameters $a=2$ and $b=2$.
In Table \ref{SimLogGamma}, we compared the $RelMSE$ of estimated survival probabilites for extremes values.
\begin{table}[h]
	\begin{center} 
		\begin{tabular}{|c|c|c|c|c|c|}
			\hline
			\multicolumn{6}{|c|}{$n=100$}\\
			\hline
			\hline
			$x$ & $100$ & $200$ & $300$ & $400$ & $500$\\
			\hline
			RelMSE of $\widehat{S}_0(x)$ &3.7469&  7.6762&15.7922& 27.1767& 41.9159\\
			RelMSE of $\widehat{S}_{0,\widehat \tau ,\beta}(x)$ & 2.5996&  4.0775&  6.7142&10.1159& 14.3147\\
			RelMSE with simple aggregation&0.7026& 1.0984& 1.8291& 2.8042& 4.0424\\
			RelMSE with adaptive aggregation&0.7087 &1.0536 &1.6645& 2.4551& 3.4391 \\
			\hline
			\hline
			\multicolumn{6}{|c|}{$n=500$}\\
			\hline
			\hline
			$x$ & $100$ & $200$ & $300$ & $400$&$500$\\
			\hline
			RelMSE of $\widehat{S}_0(x)$ & 1.7343&  4.3083& 10.5811& 20.0979& 32.9481\\
			RelMSE of $\widehat{S}_{0,\widehat \tau ,\beta}(x)$ &0.2985& 0.4961& 0.8774 &1.4057 &2.0957\\
			RelMSE with simple aggregation& 0.2876& 0.4771& 0.8439& 1.3538& 2.0217\\
			RelMSE with adaptive aggregation&0.3110 &0.5347 &0.9744& 1.5906 &2.4005\\
			\hline
		\end{tabular}
	\end{center}
	\caption{$1000$ Monte-Carlo simulations with the censorship distribution following a log-Gamma distribution with $a=5$ and $b=3.5$.}
	\label{SimLogGamma}
\end{table}

\section{Applications}\label{app}

\subsection{Bladder data set}\label{appBlad}

As a second example, we consider the data set \verb+bladder+ included in the R package \verb+survival+ (\url{https://cran.r-project.org/web/packages/survival/index.html}). This data set concerns the comparison of different treatments on the recurrence  of Stage I bladder tumor (see \cite{byar1980} for more details). We study here only the difference between the placebo and the thiotepa treatment. The initial purpose behind the study of this data set was to determine if the treatment had an effect on the recurrence of the bladder tumor. This study has been done in \cite{wei1989} using the usual Cox model. We want to extend the problem by determining the probability of having the first recurrence of the bladder tumor (the first recurrence is the most important to examine the treatment effect) at the end of the study for the placebo and treatment groups or at what time does the estimated probability of having the first recurrence fall below $0.3$.

We consider the observed time as the time between two recurrences or between the last recurrence and the censoring time. The covariate includes the treatment, the number of initial tumors, the size of the initial tumor and the number of recurrences.

\begin{figure}[h]
	\makebox[\textwidth][c]{\begin{tabular}{cc}
			\includegraphics[width=70mm,height=52.5mm]{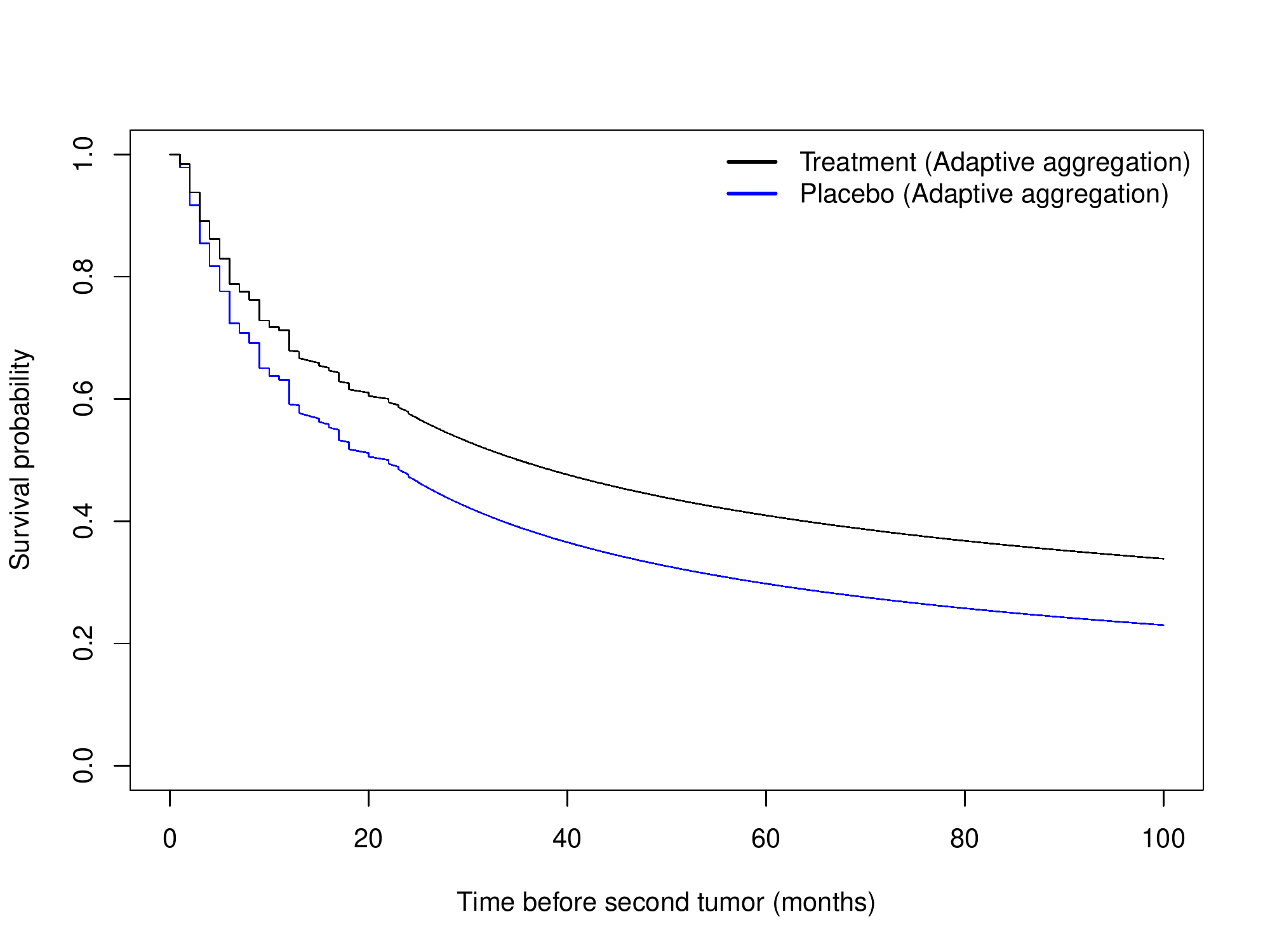}&\includegraphics[width=70mm,height=52.5mm]{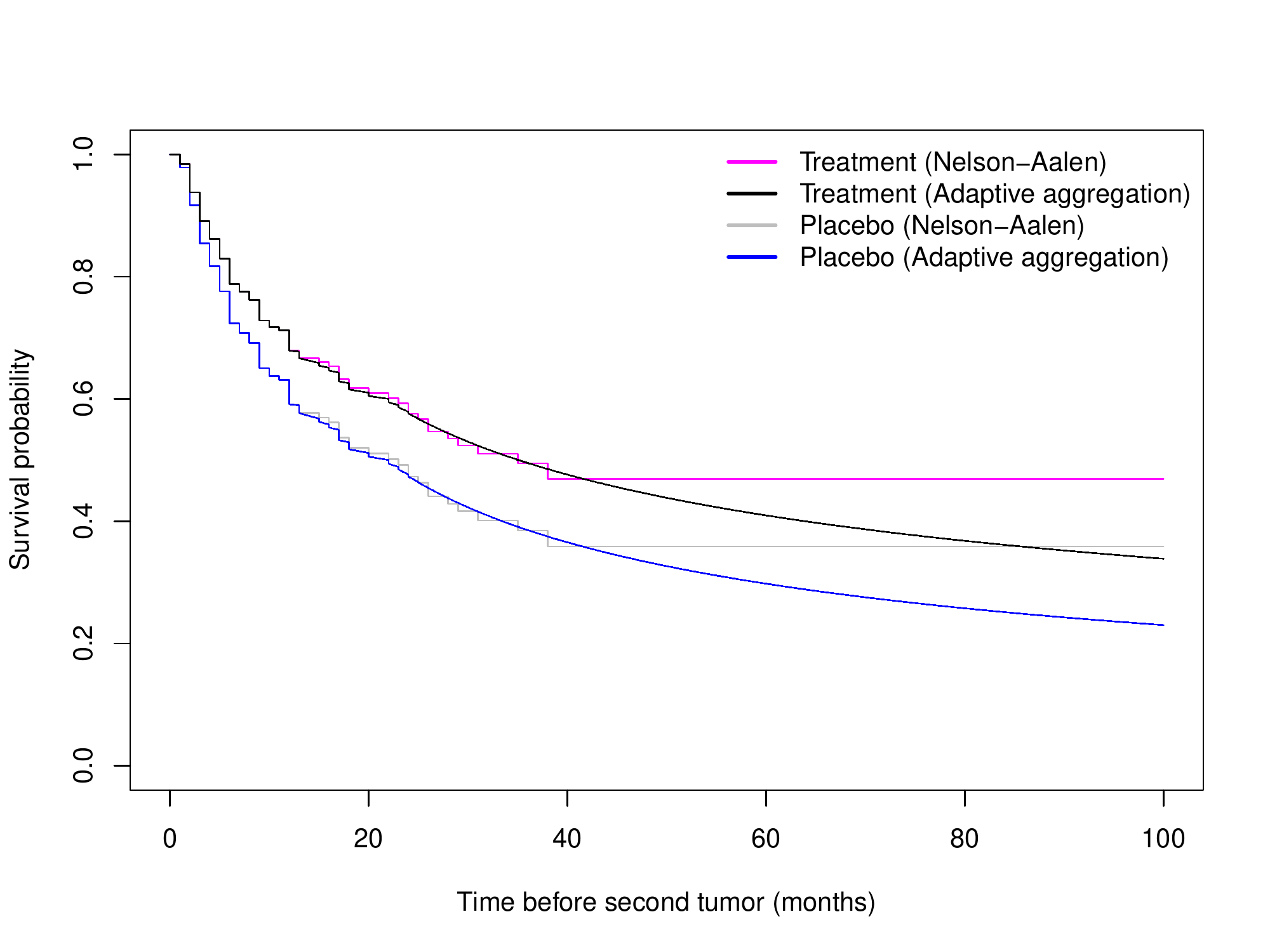}
	\end{tabular}}
	\caption{Estimated survival probabilities of the time of having the first recurrence of the bladder tumor when the initial size of the tumor is 1 and the initial number of tumors is 1.}
	\label{bladder1}
\end{figure}

One can see on the Figure \ref{bladder1} that our model fits the tail of the distribution and the recurrence time is estimated after the last observed time. The estimated survival probability of having the first recurrence of the tumor beyond 3 and 4 years are given in the following table.

\begin{table}[H]
	\begin{center}
		\begin{tabular}{|c|c|c|}
			\hline
			\textbf{Time (months)} &$72$ &$96$\\
			\hline
			\textbf{Nelson-Aalen estimator (Placebo)} &$0.3586$&$0.3586$\\
			\hline
			\textbf{Nelson-Aalen estimator (Treatment)}  & $0.4697$&$0.4697$\\
			\hline
			\textbf{Adaptive aggregation estimator (Placebo)} &$0.2715$&$0.2348$\\
			\hline
			\textbf{Adaptive aggregation estimator (Treatment)}  & $0.3826$&$0.3438$\\
			\hline
		\end{tabular}
	\end{center}
	\caption{Estimated probabilities of having the first recurrence of the tumor beyond 3 and 4 years when the initial size of the tumor is 1, the initial number of tumors is 1.}
	\label{Bladd1}
\end{table}

The next table gives the estimated time at which a patient has a survival probability of having a first recurrence of $0.3$ and $0.4$.

\begin{table}[H]
	\begin{center}
		\begin{tabular}{|c|c|c|}
			\hline
			\textbf{Survival probability} &$0.3$ &$0.4$\\
			\hline
			\textbf{Nelson-Aalen estimator (Placebo)} &$NA$&$35$\\
			\hline
			\textbf{Nelson-Aalen estimator (Treatment)}  & $NA$&$NA$\\
			\hline
			\textbf{Adaptive aggregation estimator (Placebo)} &$95.6232$&$43.3311$\\
			\hline
			\textbf{Adaptive aggregation estimator (Treatment)}  & $194.9080$&$66.58699$\\
			\hline
		\end{tabular}
	\end{center}
	\caption{Estimated time (months) of having the first recurrence of the tumor with a survival probabilities of $0.3$ and $0.4$ when the initial size of the tumor is 1, the initial number of tumors is 1.}
	\label{Bladd2}
\end{table}

One can see that with the model proposed in this paper, the initial problem of analyzing the estimated regression parameter to observe an effect of the treatment has been extended. It is possible to give an estimated probability of having the recurrence before a certain time and it's possible to give an estimated time of recurrence for a given probability.

\subsection{Application to electric consumption prediction}\label{appElec}

In order to offer an alternative to load shedding in case of an electric constraint, a research project has been conducted in Lorient, France, to study the electric consumption of households. 
The idea is to allocate the available electricity among all the consumers in the concerned area by lowering their usual power and so, avoiding the black out. This action can be done remotely from the smart meter. One of the objectives of the experiment is to study the behaviour of the consumers and the consequences on their consumption.
Of course, the process will be used only when the market offers can not cope anymore.  
This action is known as 'active power modulation'.
The data are collected on selected houses to study the effect during the electric constraint. For example, if a house with a maximal electric power contract of $9$ kiloVolt Ampere has a constraint of $50\%$, the maximal electric power becomes $4.5$ kVA. The goal of this study is to minimise the number of houses without electricity during a major power outages. If the electric power requested by the house exceeds the maximal permitted power, the breaker cuts off and the house has no electricity. In this section, we predict the electric power level for one random house during the time of the constraint and compare the measured level with what really happened.

The data used in this application are the electric power of a house with a maximal power contract of $9$ kVA. A measurement of the electric power is made every 10 minutes and corresponds to the mean load power requested in 10 minutes. The outside temperature is collected at the same times. The study period started on the $23$rd December, 2015 and finished on the $21$st March, 2016. Figure \ref{ExTempElec} shows the consumption of the studied house during the period and the measured outside temperature.

\begin{figure}[h]
	\begin{center} 
		\begin{tabular}{c}
			\includegraphics[width=120mm,height=60mm]{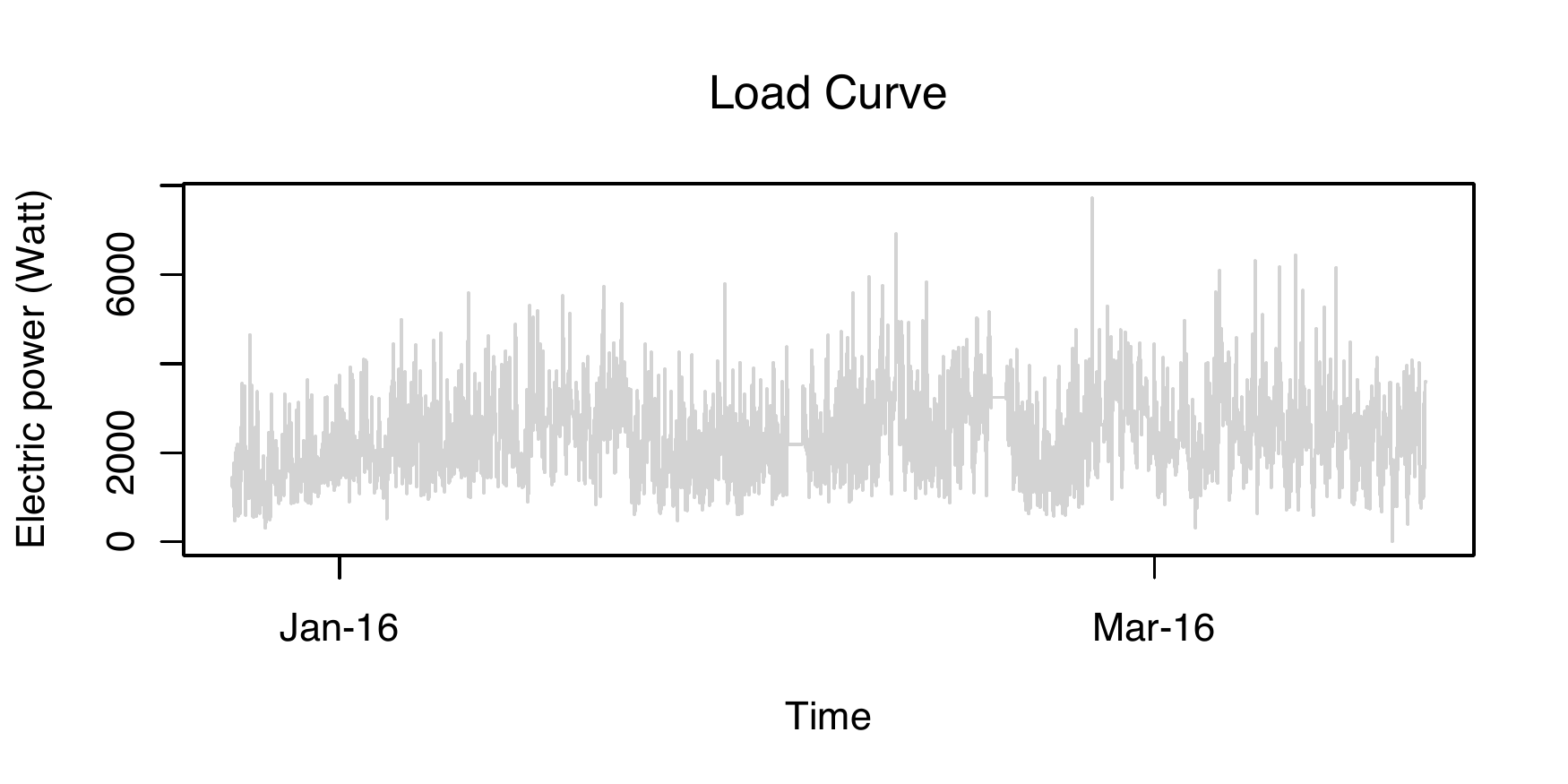} \\ \includegraphics[width=120mm,height=60mm]{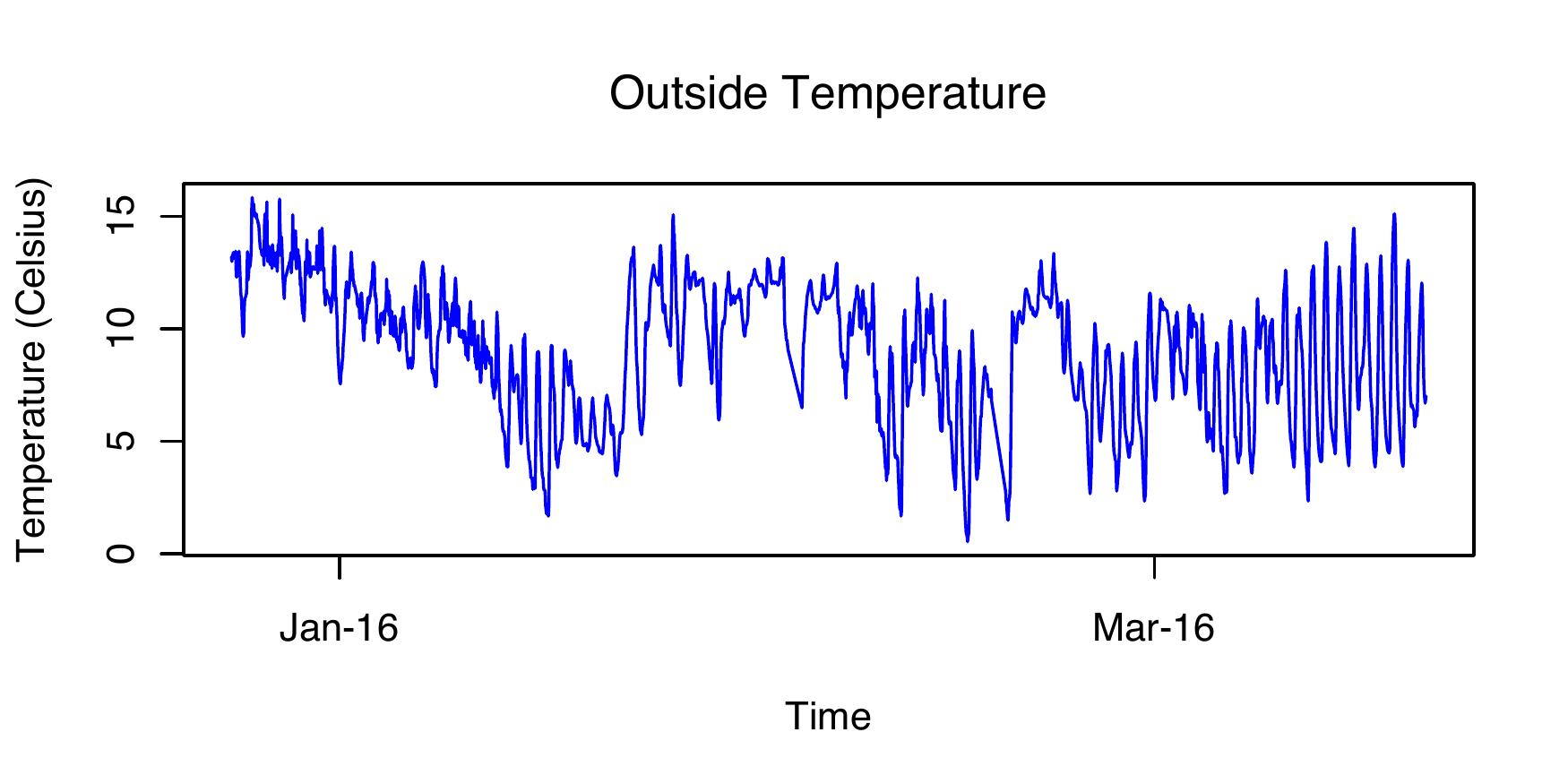}
		\end{tabular}
		\caption{Load curve (top) and outside temperature (bottom) for the studied period.}
		\label{ExTempElec}
	\end{center}
\end{figure}

As one can see in the Figure \ref{ExTempElec}, we deal with a time series. We decided to remove the dependence of the time by discretized the data by hour, under the hypothesis that during the winter, the distribution of the electric power during the same hours remains similar over the days. The hours become part of the covariate and a binary information is given, e.g. a measurement between $2$h et $3$h will have a $1$ in during this hour and $0$ elsewhere.

Moreover, the temperature is included in the covariate with a subtle transformation. Indeed, we separated the temperature into $4$ linear covariates as we assume that the parameter of the temperature will not be constant over the scope of the temperature.


For this data set, the cumulative hazard rate functions are not proportionals. We decided to separate the data into five classes to improve the estimation of extreme probabilities. For each group corresponds a period during the day. The hour classes are from $22$h until $6$h which corresponds to the night. From $6$h until $10$h, which corresponds to the morning. From $10$h until $14$h, which corresponds to the lunch time. From $14$h until $18$h, which corresponds to the afternoon and finally from $18$h until $22$h, which corresponds to the evening.


The proportional hazards assumption almost holds for each groups. We are interested in the impact of the temperature onto this assumption, but the size of the data is not big enough to verify this.

Using the hypothesis from which the distribution of the electric power during the same hours is similar over the days during the winter, we estimate the survival functions. The goal is to predict the probability to exceed the maximal authorized power during the time of the constraint. Table \ref{ecrtable} shows the different time of the constraint, the value of  the maximal power during the constraint and the average outside temperature during the period starting the $23$rd December, 2015 and finishing the $21$st March, 2016. Recall that the maximal power of this house when there is no constraint is $9$ kVA. For each constraint, we give the estimated survival probability to exceed the maximal power given the time and the outside temperature.

\begin{table}[h]
	\begin{center}
		\makebox[\textwidth][c]{\begin{tabular}{|c|c|c|c|c|c|c|c|}
				\hline
				\textbf{Constraint day} &$11$th January &$13$th January&$18$th January&$25$th February\\
				\textbf{Constraint hours} &$17$-$19$h &$14$-$18$h&$14$-$18$h&$14$-$18$h\\
				\hline
				\textbf{Maximal power}  &  $6.3$ kVA & $4.5$ kVA& $4.5$ kVA& $4.5$ kVA\\
				\hline
				\textbf{Outside temperature} &$8.56$ T°C&$8.85$ T°C&$6.82$ T°C&$9.85$ T°C \\
				\hline
				\textbf{Estimated survival}  &  \multirow{ 2}{*}{$0.0015$} & \multirow{ 2}{*}{$0.09$} & \multirow{ 2}{*}{$0.1188$} & \multirow{ 2}{*}{$0.0778$} \\
				\textbf{probability}&&&&\\
				\hline
				\textbf{Number of cut off}  &  \multirow{ 2}{*}{$0$} & \multirow{ 2}{*}{$0$} & \multirow{ 2}{*}{$0$} & \multirow{ 2}{*}{$0$} \\
				\textbf{of the breaker}&&&&\\
				\hline
				\hline
				\textbf{Constraint day} &$1$st March&$7$th March& $18$th March&\\
				\textbf{Constraint hours} &$17$-$19$h&$14$-$18$h& $10$-$12$h&\\
				\hline
				\textbf{Maximal power}  & $3.6$ kVA& $2.8$ kVA& $2.8$ kVA&\\
				\hline
				\textbf{Outside temperature} &$10.92$ T°C&$9.70$ T°C&$9.98$ T°C& \\
				\hline
				\textbf{Estimated survival}  & \multirow{ 2}{*}{$0.0526$} & \multirow{ 2}{*}{$0.5018$} & \multirow{ 2}{*}{$0.7101$}& \\
				\textbf{probability}&&&&\\
				\hline
				\textbf{Number of cut off}  &  \multirow{ 2}{*}{$1$} & \multirow{ 2}{*}{$2$} & \multirow{ 2}{*}{$8$} &  \\
				\textbf{of the breaker}&&&&\\
				\hline
		\end{tabular}}
	\end{center}
	\caption{Date of the electric constraint, maximal power, outside temperature and estimated survival probabilities during the electric constraint. The number of cut off of the breaker represents the number of time the breaker cut off during the constraint period.}
	\label{ecrtable}
\end{table}

The probability corresponding to a return period of $4$ hours (happen once in any given $4$ hours period) is $\frac{1}{24} \simeq 0.04$. For a return period of $2$ hours, the corresponding probability is $0.08$. We can see on Table \ref{ecrtable} that we detect five estimated probabilities exceeding the probability associated to the return period. These probabilities correspond to the constraints of the $13$th of January, the $18$th of January, the $15$th of February, the $7$th of March and the $18$th of March. This house is therefore considered at risk during these five constraints. 

We are now interested to see what really happened for this house during these constraints. This house had one opening of its breaker during the constraint of the $1$st of March, two during the constraint of the $7$th of March, eight during the constraint of the $18$th of March and none during the other constraints. We can see that we have very high estimated probabilities of having one observation exceeding the maximal power during constraint for the $7$th and $18$th of March and that the house had multiple cuts off during these constraints. We can suppose that during the other constraints, the house anticipated the constraint period and reduced its electric power by changing its behavior.

\section{Conclusion}

In this article, we propose an extension of the Cox model in order to estimate probabilities of rare events and extreme quantiles. The model is semi-parametric and composed of the Nelson-Aalen estimator for the non-parametric part and the parametric part is described by a Pareto distribution. We prove the consistency of the estimator of the Pareto parameter and give an explicit convergence rate for the Hall model.

A data-driven choice of the threshold motivated by a goodness-of-fit test is proposed. 
An aggregated estimator and an adaptive aggregated estimator are suggested and studied in order
to improve the fitness of the model onto the data.
The performance of the proposed estimators is demonstrated on artificial data.

Two applications on real data sets are given. The application on the bladder data shows the motive of the model as it allows an estimation of extreme quantiles which was not possible with the usual Cox model. The application on the electric consumption gives an application onto data where the main purpose is to estimate survival probabilities and we are not interested to test if there is an effect of a treatment.

\appendix

\section{Proofs of the results.}

\subsection{Proof of Theorem \ref{thp1}.}
In the sequel we denote quasi-log-likelihood ratio by 
\begin{align*} 
\mathcal{L}(\given{\theta',\theta}{\mathbf z})
	:=\mathcal{L}(\given{\theta'}{\mathbf z}) - \mathcal{L}(\given{\theta}{\mathbf z}) 
	= \sum_{i=1}^n \ln \frac{dP_{S_{0,\tau,\theta'}}}{dP_{S_{0,\tau,\theta}}} \left(\given{t_i,\delta_i}{z_i} \right).
\end{align*}
Recall that we denoted by $\mathbb P$ the probability measure corresponding to the "true" model which has 
the baseline survival function $S_0$. 
\begin{lemma}
	\label{L1}
	For any $\theta, \theta' \in \mathbb{R}$, $\mathbf{z} \in \mathbb{Z}$ and any $x>0$, it holds
	\begin{equation*}
	\mathbb{P}(
	\mathcal{L}(\given{\theta',\theta}{\mathbf z})
	>x+\sum_{i=1}^n\chi^2(P_{S_0}(\given{.}{z_i}),P_{S_{0,\tau,\theta}}(\given{.}{z_i}))) \leq e^{-x/2}
	\end{equation*}
\end{lemma}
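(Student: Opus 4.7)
The plan is to apply the exponential Markov inequality with parameter $1/2$ to the quasi-log-likelihood ratio $Y := \mathcal{L}(\given{\theta',\theta}{\mathbf z})$, which gives $\mathbb{P}(Y > a) \leq e^{-a/2}\,\mathbb{E}_{\mathbb{P}}[e^{Y/2}]$ for any $a > 0$. Setting $a = x + \sum_{i=1}^n \chi^2(P_{S_0}(\given{\cdot}{z_i}), P_{S_{0,\tau,\theta}}(\given{\cdot}{z_i}))$, the lemma will follow provided I can establish the moment bound
$$
\mathbb{E}_{\mathbb{P}}[e^{Y/2}] \leq \exp\!\left(\tfrac{1}{2}\sum_{i=1}^n \chi^2(P_{S_0}(\given{\cdot}{z_i}), P_{S_{0,\tau,\theta}}(\given{\cdot}{z_i}))\right),
$$
since then the $\sum \chi^2$ terms cancel exactly between the $-a/2$ exponent and the moment bound, leaving the advertised $e^{-x/2}$.

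To get this moment bound, I will use the independence of the triples $(t_i,\delta_i)$ under $\mathbb{P}$ (conditional on the fixed covariates $z_i$) to factor
$$
\mathbb{E}_{\mathbb{P}}[e^{Y/2}] \;=\; \prod_{i=1}^n \int \sqrt{\frac{dP_{S_{0,\tau,\theta'}}}{dP_{S_{0,\tau,\theta}}}}\, dP_{S_0}(\given{\cdot}{z_i}),
$$
and then bound each factor by $\sqrt{1 + \chi^2(P_{S_0}(\given{\cdot}{z_i}), P_{S_{0,\tau,\theta}}(\given{\cdot}{z_i}))}$. The key computation for a single factor is a carefully chosen Cauchy-Schwarz: rewrite the integrand as the product $\sqrt{dP_{S_{0,\tau,\theta'}}/dP_{S_{0,\tau,\theta}}} \cdot (dP_{S_0}/dP_{S_{0,\tau,\theta}})$ integrated against $dP_{S_{0,\tau,\theta}}$, and apply Cauchy-Schwarz in $L^2(P_{S_{0,\tau,\theta}})$. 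The first squared factor integrates to $\int dP_{S_{0,\tau,\theta'}} = 1$, because $p_{S_{0,\tau,\theta'}}$ is a bona fide joint density for $(T,\Delta)$ given $Z=z_i$ (the censoring factors $f_C^{1-\delta}S_C^\delta$ are precisely what makes the normalization work); the second squared factor equals $\int (dP_{S_0}/dP_{S_{0,\tau,\theta}})\, dP_{S_0} = 1 + \chi^2(P_{S_0}(\given{\cdot}{z_i}), P_{S_{0,\tau,\theta}}(\given{\cdot}{z_i}))$ directly from the definition \eqref{Xi2 entropy}.

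Applying the elementary inequality $1 + u \leq e^u$ to each square root and multiplying across $i$ yields the target moment bound, after which the Markov step completes the argument. The main delicate point is the Cauchy-Schwarz split: one must recognize that the censoring-related factors in $p_{S_{0,\tau,\theta'}}$ and $p_{S_{0,\tau,\theta}}$ cancel in the likelihood ratio, so that the first factor in the Cauchy-Schwarz genuinely integrates to $1$, and one must verify that $P_{S_0}(\given{\cdot}{z_i}) \ll P_{S_{0,\tau,\theta}}(\given{\cdot}{z_i})$ so that the $\chi^2$-entropy is well defined; both hold under the present model since the relevant densities are positive on the common support $[x_0,\infty)$.
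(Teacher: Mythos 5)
Your proposal is correct, and its overall skeleton coincides with the paper's proof: exponential Chebyshev with parameter $1/2$, factorization of $\mathbb{E}\bigl[e^{\mathcal{L}/2}\bigr]$ using the independence of the triples, a per-observation bound of the form $\sqrt{1+\chi^2}$, the elementary inequality $1+u\le e^u$, and finally the choice $y=x+\sum_{i}\chi^2\bigl(P_{S_0}(\,\cdot\mid z_i),P_{S_{0,\tau,\theta}}(\,\cdot\mid z_i)\bigr)$. Where you genuinely differ is in how the per-observation bound is obtained. You rewrite $\int\sqrt{dP_{S_{0,\tau,\theta'}}/dP_{S_{0,\tau,\theta}}}\,dP_{S_0}$ as the integral against $dP_{S_{0,\tau,\theta}}$ of the product $\sqrt{dP_{S_{0,\tau,\theta'}}/dP_{S_{0,\tau,\theta}}}\cdot\bigl(dP_{S_0}/dP_{S_{0,\tau,\theta}}\bigr)$ and apply Cauchy--Schwarz in $L^2(P_{S_{0,\tau,\theta}})$: the first squared factor integrates to $1$ because $p_{S_{0,\tau,\theta'}}$ is a bona fide joint density of $(T,\Delta)$ given $Z=z_i$, and the second gives $1+\chi^2$ directly from \eqref{Xi2 entropy}. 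The paper instead first applies Jensen under the true law, $\mathbb{E}\sqrt{X}\le\sqrt{\mathbb{E}X}$, and then asserts $\mathbb{E}_{P_{S_0}}\bigl[dP_{S_{0,\tau,\theta'}}/dP_{S_{0,\tau,\theta}}\bigr]\le 1+\chi^2\bigl(P_{S_0}(\,\cdot\mid z_i),P_{S_{0,\tau,\theta}}(\,\cdot\mid z_i)\bigr)$; that last step is not an immediate consequence of the definition of the $\chi^2$-entropy (the numerator there involves $\theta'$, not $P_{S_0}$), so your Cauchy--Schwarz split is actually the more self-contained route to the same bound $\sqrt{1+\chi^2}$, and your explicit checks of the normalization of $p_{S_{0,\tau,\theta'}}$ (the censoring factors $f_C^{1-\delta}S_C^{\delta}$) and of absolute continuity are precisely what make it airtight. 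In short: same architecture, with a cleaner justification of the key moment bound than the one displayed in the paper.
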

\begin{proof}
	Let $P_{S_{0,\tau,\theta'}}(\given{.}{z_i})$ and $P_{S_{0,\tau,\theta}}(\given{.}{z_i})$ be the conditional cumulative distribution function of $( T, \Delta )$ given $Z=z_i$ where the survival function has a Pareto tail with parameter $\theta'$ and $\theta$ respectively for a threshold $\tau$. 
	By Chebychev's exponential inequality, we have for any $y>0$:
	\begin{align*}
	\mathbb{P}\left(
	\mathcal{L}(\given{\theta',\theta}{\mathbf z}) 
	>y \right) 
	&\leq e^{-y/2} \mathbb{E}\left( e^{\frac{1}{2} 
	\mathcal{L}(\given{\theta',\theta}{\mathbf z})
	   }     \right)  \\
	&\leq \exp\left[-y/2 + \ln \left( \mathbb{E}\left( e^{\frac{1}{2}
	\mathcal{L}(\given{\theta',\theta}{\mathbf z})}
	\right) \right)\right].
	\end{align*}
	As the triplet $\{\given{t_1,\delta_1}{z_1} \},...,\{\given{t_n,\delta_n}{z_n}\}$ are independent, we can write the term 
$\ln \left(\mathbb{E}\left( e^{\frac{1}{2}
\mathcal{L}(\given{\theta',\theta}{\mathbf z})}
 \right)\right)$ 
as
	\begin{align*}
	\ln \left(\mathbb{E}\left( e^{\frac{1}{2}
	\mathcal{L}(\given{\theta',\theta}{\mathbf z})}
	 \right)\right) 
	&= \ln \left(\mathbb{E}\left( e^{\frac{1}{2}\sum_{i=1}^n \ln \frac{dP_{S_{0,\tau,\theta'}}}{dP_{S_{0,\tau,\theta}}} \left(\given{t_i,\delta_i}{z_i} \right)} \right)\right)\\
	&= \ln \left(\mathbb{E}\left( \prod_{i=1}^n  \sqrt{\frac{dP_{S_{0,\tau,\theta'}}}{dP_{S_{0,\tau,\theta}}} \left(\given{t_i,\delta_i}{z_i} \right)} \right)\right)\\
	&= \sum_{i=1}^n  \ln \left(\mathbb{E}\left( \sqrt{\frac{dP_{S_{0,\tau,\theta'}}}{dP_{S_{0,\tau,\theta}}} \left(\given{t_i,\delta_i}{z_i} \right)} \right)\right).
	\end{align*}
	By H\"older's inequality, we have
	\begin{align*}
	\mathbb{E}\left( \sqrt{\frac{dP_{S_{0,\tau,\theta'}}}{dP_{S_{0,\tau,\theta}}} \left(\given{t_i,\delta_i}{z_i} \right)} \right) &\leq \sqrt{\mathbb{E}\left( \frac{dP_{S_{0,\tau,\theta'}}}{dP_{S_{0,\tau,\theta}}} \left(\given{t_i,\delta_i}{z_i} \right)\right)}\\ 
	&\leq \sqrt{1+\chi^2 (P_{S_0} \left(\given{\cdot}{z_i} \right),P_{S_{0,\tau,\theta}} \left(\given{\cdot}{z_i} \right) )},
	\end{align*}
	where the $\chi^2$ entropy between two equivalent probability measure is defined by \eqref{Xi2 entropy}. 
	Then,
	\begin{align*}
	\mathbb{P}\left(
	\mathcal{L}(\given{\theta',\theta}{\mathbf z})
	>y \right) &\leq \exp\left[-\frac{y}{2} + \frac{1}{2}\sum_{i=1}^n \ln\left(1+\chi^2 (P_{S_0} \left(\given{\cdot}{z_i} \right),P_{S_{0,\tau,\theta}} \left(\given{\cdot}{z_i} \right) )\right)\right]\\
	&\leq \exp\left[-\frac{y}{2} + \frac{1}{2}\sum_{i=1}^n \chi^2 (P_{S_0} \left(\given{\cdot}{z_i} \right),P_{S_{0,\tau,\theta}} \left(\given{\cdot}{z_i} \right) )\right].
	\end{align*}
	Setting $y = x + \sum_{i=1}^n \chi^2 (P_{S_0} \left(\given{\cdot}{z_i} \right),P_{S_{0,\tau,\theta}} \left(\given{\cdot}{z_i} \right) )$, gives Lemma \ref{L1}.
\end{proof}

Recall that the Kullback-Leibler divergence $\mathcal{K}(\theta',\theta)$ 
between two Pareto distribution with parameters $\theta'$ and $\theta$ is defined by \eqref{KL entropy}.
The following lemma gives 
the rate of convergence for $\widehat \theta_\tau$. 
We adapt the proof from \cite{grama2014} to the case of the Cox model under consideration in this paper. 
\begin{lemma}
	\label{L2}
	For any $\theta >0$, $\tau > x_0$ and $v>0$, it holds
	$$
	\mathbb{P}\left(\widehat{n}_\tau\mathcal{K}(\widehat \theta_\tau,\theta) > v+\sum_{i=1}^n \chi^2 (P_{S_0} \left(\given{\cdot}{z_i} \right),P_{S_{0,\tau,\theta}} \left(\given{\cdot}{z_i} \right) )+2\ln(n)\right) \leq 2\exp(-v/2),
	$$
	where $\widehat{n}_\tau = \sum_{t_i>\tau}\delta_i$.
\end{lemma}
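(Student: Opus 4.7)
The plan is to combine a closed-form identity for the Pareto-tail quasi-log-likelihood ratio with Lemma \ref{L1}, applied at finitely many deterministic thresholds; the random $\widehat\theta_\tau$ is handled by partitioning on the integer-valued $\widehat n_\tau$.

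First, I would derive an exact algebraic identity. Since $h_{0,\tau,\theta}(t)=1/(\theta t)$ for $t>\tau$, the only $\theta$-dependent contributions to $\mathcal{L}(\given{\theta}{\mathbf z})$ come from observations with $t_i>\tau$, and a direct computation gives
\begin{equation*}
\mathcal{L}(\given{\theta',\theta}{\mathbf z}) = -\widehat n_\tau \ln(\theta'/\theta) + A\bigl(1/\theta-1/\theta'\bigr),
\end{equation*}
where $A=\sum_{t_i>\tau} e^{\beta\cdot z_i}\ln(t_i/\tau)=\widehat n_\tau\widehat\theta_\tau$. Substituting $A=\widehat n_\tau\widehat\theta_\tau$ and using the definition \eqref{KL entropy} of $\mathcal{K}$ then rearranges to the key identity
\begin{equation*}
\mathcal{L}(\given{\theta',\theta}{\mathbf z}) \;=\; \widehat n_\tau\,\mathcal{K}(\theta',\theta) \;+\; \frac{\widehat n_\tau\,(\widehat\theta_\tau-\theta')(\theta'-\theta)}{\theta\,\theta'}.
\end{equation*}
The correction term is nonnegative whenever $\theta'$ lies between $\theta$ and $\widehat\theta_\tau$ (either $\theta\leq\theta'\leq\widehat\theta_\tau$ or $\widehat\theta_\tau\leq\theta'\leq\theta$), and vanishes at $\theta'=\widehat\theta_\tau$, recovering $\mathcal{L}(\given{\widehat\theta_\tau,\theta}{\mathbf z})=\widehat n_\tau\mathcal{K}(\widehat\theta_\tau,\theta)$.

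Next, I would convert the random quantity $\widehat n_\tau\mathcal{K}(\widehat\theta_\tau,\theta)$ into something Lemma \ref{L1} can reach, by conditioning on the integer value of $\widehat n_\tau\in\{0,1,\dots,n\}$. Set
\begin{equation*}
u=v+\sum_{i=1}^n\chi^2\bigl(P_{S_0}(\given{\cdot}{z_i}),P_{S_{0,\tau,\theta}}(\given{\cdot}{z_i})\bigr)+2\ln n,
\end{equation*}
and, for each $m\in\{1,\dots,n\}$, let $\theta_-^{(m)}<\theta<\theta_+^{(m)}$ denote the two solutions of $\mathcal{K}(\theta_\pm^{(m)},\theta)=u/m$; these exist and are unique because $r\mapsto r-1-\ln r$ is strictly convex, vanishes at $r=1$, and tends to $\infty$ at $0^+$ and $\infty$. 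On the event $\{\widehat n_\tau=m,\ \widehat\theta_\tau\geq\theta_+^{(m)}\}$ the deterministic point $\theta_+^{(m)}$ lies in $[\theta,\widehat\theta_\tau]$, so the identity above yields
\begin{equation*}
\mathcal{L}(\given{\theta_+^{(m)},\theta}{\mathbf z})\geq \widehat n_\tau\,\mathcal{K}(\theta_+^{(m)},\theta)=m\cdot (u/m)=u,
\end{equation*}
and a symmetric argument handles $\{\widehat n_\tau=m,\ \widehat\theta_\tau\leq\theta_-^{(m)}\}$ using $\theta_-^{(m)}$.

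Finally, the case $m=0$ contributes nothing (the left-hand side of the inequality in the lemma vanishes), and for each $m\geq 1$ Lemma \ref{L1} applied at the fixed deterministic parameters $\theta'=\theta_\pm^{(m)}$ gives
\begin{equation*}
\mathbb{P}\bigl(\mathcal{L}(\given{\theta_\pm^{(m)},\theta}{\mathbf z})\geq u\bigr)\leq \exp\!\Bigl(-\tfrac{1}{2}\bigl(u-\textstyle\sum_{i}\chi^2\bigr)\Bigr)=n^{-1}e^{-v/2}.
\end{equation*}
A union bound over the $2n$ pairs (indices $m=1,\dots,n$ and the two signs $\pm$) then produces the announced $2e^{-v/2}$. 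The single nontrivial step is the identity and the sign analysis of the correction term; everything else is bookkeeping, and the $2\ln n$ in the statement is exactly the price paid for the union bound over the possible values of the random sample size $\widehat n_\tau$, which is the device that replaces the random threshold $u/\widehat n_\tau$ by the deterministic thresholds $\theta_\pm^{(m)}$.
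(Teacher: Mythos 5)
Your proof is correct and follows essentially the same route as the paper: the same closed-form expression for the quasi-log-likelihood ratio, the same replacement of the random $\widehat\theta_\tau$ by deterministic surrogate parameters indexed by the possible values of $\widehat n_\tau$, and the same union bound over $2n$ events combined with Lemma \ref{L1}, which is exactly where the $2\ln n$ term comes from. In fact your points $\theta_\pm^{(m)}$, defined by $\mathcal{K}(\theta_\pm^{(m)},\theta)=u/m$, coincide with the paper's $\theta^{\mp}(k)$ obtained by extremizing $g(u,k)$, so the two arguments differ only in how these surrogate values are derived (your explicit correction-term identity versus the paper's monotonicity analysis of $g$), plus a harmless boundary issue of strict versus non-strict inequalities when invoking Lemma \ref{L1}.
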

\begin{proof}
	It is easy to see that the assertion of lemma follows from the following bound:
	\begin{equation}
	\label{PL2}
	\mathbb{P}\left( \widehat n_\tau \mathcal{K}(\widehat \theta_\tau,\theta) >v + \sum_{i=1}^n \chi^2 (P_{S_0} \left(\given{\cdot}{z_i} \right),P_{S_{0,\tau,\theta}} \left(\given{\cdot}{z_i} \right) ) \right) \leq 2n\exp\left( -\frac{v}{2} \right).
	\end{equation}
	The likelihood ratio is given by
	\begin{align}
	\mathcal{L}(\given{\theta',\theta}{\mathbf z})
	&:=\mathcal{L}(\given{\theta'}{\mathbf z}) - \mathcal{L}(\given{\theta}{\mathbf z}) \label{likdiffer001} \\
	&= \sum_{i=1}^n \ln p_{S_{0,\tau,\theta'}}(\given{t_i,\delta_i}{z_i}) - \sum_{i=1}^n \ln p_{S_{0,\tau,\theta}}(\given{t_i,\delta_i}{z_i}). \nonumber
	\end{align}
	Then, removing the censoring part from the likelihood ratio, we have 
	\begin{align*}
	\mathcal{L}(\given{\theta',\theta}{\mathbf z})
	& =  \sum_{i=1}^n \ln(h_{z_i,\tau,\theta'}(t_i)^{\delta_i}S_{{z_i,\tau,\theta'}}(t_i)) - \sum_{i=1}^n \ln(h_{{z_i,\tau,\theta}}(t_i)^{\delta_i}S_{{z_,\tau,\theta}}(t_i))\\
	& =  \sum_{i=1}^n \delta_i\ln(h_{0,\tau,\theta'}(t_i)e^{\beta \cdot z_i})+ \ln(S_{z_i,\tau,\theta'}(t_i))\\
	& - \delta_i\ln(h_{0,\tau,\theta}(t_i)e^{\beta \cdot z_i}) - \ln(S_{z_i,\tau,\theta}(\given{t_i}{z_i}))).
	\end{align*}
	Developing the terms, we obtain	
	\begin{align*}
	\mathcal{L}(\given{\theta',\theta}{\mathbf z})
	&=  \sum_{i=1}^n \delta_i \ln(h_0(t_i))\mathbb{1}_{t_i \leq \tau} + \delta_i \ln(\frac{1}{\theta' t_i})\mathbb{1}_{t_i > \tau} 
	+ \delta_i\beta \cdot z_i \\
	& \qquad\qquad\qquad\qquad\qquad\qquad\qquad\qquad  + e^{\beta \cdot z_i}\ln(S_{0,\tau,\theta'}(t_i))\\
	&   - \delta_i \ln(h_0(t_i))\mathbb{1}_{t_i \leq \tau} -  \delta_i \ln(\frac{1}{\theta t_i})\mathbb{1}_{t_i > \tau} - \delta_i\beta \cdot z_i - e^{\beta \cdot z_i}\ln(S_{0,\tau,\theta}(t_i)),
	\end{align*}
	and further
	\begin{align*}
	\mathcal{L}(\given{\theta',\theta}{\mathbf z})
	& =  \sum_{i=1}^n \left[  \delta_i \ln \left( \frac{\theta}{\theta'} \right) - e^{\beta \cdot z_i} \frac{1}{\theta'}\ln \left( \frac{t_i}{\tau}\right) + e^{\beta \cdot z_i} \frac{1}{\theta}\ln \left( \frac{t_i}{\tau}\right) \right]\mathbb{1}_{t_i > \tau} \\
	& =  \sum_{i=1}^n \left[  \delta_i \ln \left( \frac{\theta}{\theta'} \right) + e^{\beta \cdot z_i} \ln \left( \frac{t_i}{\tau}\right) \left( \frac{1}{\theta} - \frac{1}{\theta'} \right) \right]\mathbb{1}_{t_i > \tau}.
	\end{align*}
	Since $$ \widehat \theta_\tau = \frac{\sum_{i:t_i>\tau} e^{\beta \cdot z_i \ln\left( \frac{t_i}{\tau} \right)}}{\sum_{i:t_i>\tau} \delta_i},$$ 
	using \eqref{likdiffer001}, it follows that
	\begin{align}
	\mathcal{L}(\given{\theta',\theta}{\mathbf z})
	&:=\mathcal{L}(\given{\theta'}{\mathbf z}) - \mathcal{L}(\given{\theta}{\mathbf z}) \nonumber  \\
	& =  \sum_{i=1}^n \mathbb{1}_{t_i > \tau} \delta_i \ln \left( \frac{\theta}{\theta'} \right) + \sum_{i=1}^n \mathbb{1}_{t_i > \tau} \delta_i \widehat \theta_\tau \left( \frac{1}{\theta} - \frac{1}{\theta'} \right) \nonumber  \\
	& =  \sum_{i=1}^n \mathbb{1}_{t_i > \tau} \delta_i \left[ \ln \left( \frac{\theta}{\theta'} \right) + \widehat \theta_\tau \left( \frac{1}{\theta} - \frac{1}{\theta'} \right) \right] \nonumber \\
	& =  \widehat n_\tau \left( \ln \left( \frac{\theta}{\theta'} \right) +  \left( \frac{1}{\theta} - \frac{1}{\theta'} \right)\widehat \theta_\tau \right) \nonumber \\
	&= \widehat n_\tau \Lambda (\theta'),\label{LLL001}
	\end{align}
	where we  denoted for brevity $\Lambda (\theta') = \ln \left( \frac{\theta}{\theta'} \right) - \left( \frac{1}{\theta'} - \frac{1}{\theta} \right)\widehat \theta_\tau$.
	Since $\mathcal{K}(\theta',\theta)=\frac{\theta'}{\theta} - 1 - \ln(\frac{\theta'}{\theta})$, 
	we then have the identity 
\begin{align} \label{Kullback001}
	\mathcal{K}(\widehat \theta_\tau,\theta) = \Lambda (\widehat \theta_\tau).
\end{align}
Using \eqref{LLL001} and \eqref{Kullback001} we hope to bound $\mathcal{K}(\widehat \theta_\tau,\theta)$ in probability by Lemma \ref{L1}.
The problem is that $\widehat \theta_\tau$ is random and therefore we cannot apply Lemma \ref{L1} directly. 
We shall circumvent this difficulty in the following way.
	For any $y>0$ and any $k\geq 1$, the inequality  $k\mathcal{K}(u,\theta) > y$ can be equivalently written as 
	\begin{align*}
	\left( \frac{1}{u} - \frac{1}{\theta}  \right)\widehat \theta_\tau & <  - \frac{y}{k} + \ln \left( \frac{\theta}{u} \right).
	\end{align*}
	Setting $g(u,k) = \frac{\ln \left( \frac{\theta}{u} \right) - \frac{y}{k}}{\left( \frac{1}{u} - \frac{1}{\theta}  \right)}$, we have $g(u,k)>\widehat \theta_\tau$, when $0<u<\theta$ and $g(u,k)<\widehat \theta_\tau$, when $u>\theta$. 
	Moreover, the function $g(u,k)$ has a maximum for $0<u<\theta$ and a minimum for $u>\theta$. 
	When $0<u<\theta$, we have $\lim\limits_{u \rightarrow 0^+} \frac{dg(u,k)}{du} = + \infty$ and  $\lim\limits_{u \rightarrow \theta^-} \frac{dg(u,k)}{du} = - \infty$. 
	When $u> \theta$, we have $\lim\limits_{u \rightarrow \theta^+} \frac{dg(u,k)}{du} = + \infty$ and $\lim\limits_{u \rightarrow +\infty} \frac{dg(u,k)}{du} = 0^-$. 
	Let 
	$$\theta^+ (k) = \underset{0<u<\theta}{\operatorname{ {\mbox{argmax} } }}\ g(u,k) \ \ \text{and} \ \  
	\theta^- (k) = \underset{u>\theta}{\operatorname{\mbox{argmin} }}\ g(u,k), 
	$$ 
	then 
	\begin{align*}
	\{ \widehat n_\tau \Lambda(\widehat \theta_\tau)>y, \widehat \theta_\tau < \theta \}& =  \{ g(\widehat \theta_\tau,\widehat n_\tau) >\widehat \theta_\tau, \widehat \theta_\tau < \theta \} \\
	& \subset  \{ g(\theta^+(\widehat n_\tau),\widehat n_\tau) >\widehat \theta_\tau, \widehat \theta_\tau < \theta \}\\
	& =  \{ \widehat n_\tau \Lambda(\theta^+(\widehat n_\tau))>y, \widehat \theta_\tau < \theta \}\\
	& \subset  \{ \widehat n_\tau \Lambda(\theta^+(\widehat n_\tau))>y \}.
	\end{align*}
	In the same way, we have $\{ \widehat n_\tau \Lambda(\widehat \theta_\tau)>y, \widehat \theta_\tau > \theta \} \subset \{ \widehat n_\tau \Lambda(\theta^-(\widehat n_\tau))>y \}$. Since $\mathcal{K}(\widehat \theta_\tau,\theta) = \Lambda (\widehat \theta_\tau)$, this implies 
	$$\{ \widehat n_\tau \Lambda(\widehat \theta_\tau)>y \} \subset \{ \widehat n_\tau \Lambda(\theta^+(\widehat n_\tau))>y \} \cup \{ \widehat n_\tau \Lambda(\theta^-(\widehat n_\tau))>y \}.$$
	We then have
	\begin{align}
	\mathbb{P}(\widehat n_\tau \mathcal{K}(\widehat \theta_\tau,\theta) >y) 
	&\leq \mathbb{P}(\widehat n_\tau \Lambda(\theta^+(\widehat n_\tau))>y)  + \mathbb{P}(\widehat n_\tau \Lambda(\theta^-(\widehat n_\tau))>y) \nonumber \\
	&\leq \sum_{k=1}^n \mathbb{P}(\widehat n_\tau \Lambda(\theta^+(k))>y)  + \sum_{k=1}^n \mathbb{P}(\widehat n_\tau \Lambda(\theta^-(k))>y). \label{FFF001}
	\end{align}
	Now we can apply Lemma \ref{L1}, 
	with 	
	\begin{align} \label{YYYY001}
	y=v + \sum_{i=1}^n \chi^2 (P_{S_0} \left(\given{\cdot}{z_i} \right),P_{S_{0,\tau,\theta}} \left(\given{\cdot}{z_i} \right) )  
	\end{align}
	from which it follows that, for $k=1,...,n$,
	\begin{align}
	\mathbb{P}(\widehat n_\tau \Lambda(\theta^{\pm}(k))>y) \leq e^{-v/2}. \label{FFF002}
	\end{align}
From \eqref{FFF001} and \eqref{FFF002},
	\begin{align}
	\mathbb{P}(\widehat n_\tau \mathcal{K}(\widehat \theta_\tau,\theta) >y) \leq 
	2n e^{-v/2}.	\end{align}
This and \eqref{YYYY001} yields \eqref{PL2}, thus concluding the proof of Lemma \ref{L2}.
\end{proof}
Theorem \ref{thp1} follows, if we set $v = 2\ln(n)$ in Lemma \ref{L2}.

%

\subsection{Verification of Condition \ref{Crate}}

\begin{lemma}
	\label{LCrate}
	Assume that the distribution functions given the covariate $z$
	of both the survival and censoring time are in the maximal domain of attraction of the Fr\'echet law with parameters $\theta(z)$ and $\theta_C(z)$ respectively. Then, for any z,
	$$
	q_{F}(\given{\tau}{z}) \to \frac{\theta(z)} {\theta_C(z) + \theta(z)} \ \text{as} \ \tau \to \infty.
	$$
	where $\theta(z)>0$ and $\theta_C(z) >0$.
\end{lemma}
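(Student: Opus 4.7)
The plan is to reduce $q_{F}(\given{\tau}{z})$ to an integral on a fixed interval via the rescaling $t=\tau u$, obtaining
\begin{equation*}
q_{F}(\given{\tau}{z}) = \int_{1}^{\infty}\frac{S(\given{\tau u}{z})}{S(\given{\tau}{z})}\,\frac{\tau f_{C}(\given{\tau u}{z})}{S_{C}(\given{\tau}{z})}\,du,
\end{equation*}
and then pass to the limit under the integral sign. The first factor, by the Fisher--Tippett--Gnedenko characterisation recalled in \eqref{FTP theorem} applied to the conditional distribution of $X$ given $Z=z$, converges pointwise on $(1,\infty)$ to $u^{-1/\theta(z)}$. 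For the second factor I would use that $S_{C}(\given{\cdot}{z})$ is regularly varying at infinity with index $-1/\theta_{C}(z)$; writing $\tau f_{C}(\given{\tau u}{z})=-\frac{d}{du}S_{C}(\given{\tau u}{z})$ and combining with the analogous FTP limit for $S_{C}$, one obtains
\begin{equation*}
\frac{\tau f_{C}(\given{\tau u}{z})}{S_{C}(\given{\tau}{z})} \longrightarrow \frac{1}{\theta_{C}(z)}\,u^{-1/\theta_{C}(z)-1}\quad\text{as } \tau\to\infty.
\end{equation*}

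Next I would justify the interchange of limit and integral. The standard tool here is Potter's bounds for regularly varying functions: for every $\varepsilon>0$ there is $\tau_{0}$ such that, for $\tau\ge\tau_{0}$ and $u\ge 1$,
\begin{equation*}
\frac{S(\given{\tau u}{z})}{S(\given{\tau}{z})}\le C\,u^{-1/\theta(z)+\varepsilon},\qquad \frac{S_{C}(\given{\tau u}{z})}{S_{C}(\given{\tau}{z})}\le C\,u^{-1/\theta_{C}(z)+\varepsilon}.
\end{equation*}
Choosing $\varepsilon<\tfrac12(1/\theta(z)+1/\theta_{C}(z))$, the product is dominated by a function of the form $C' u^{-1/\theta(z)-1/\theta_{C}(z)-1+2\varepsilon}$, which is integrable on $[1,\infty)$. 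Lebesgue's dominated convergence theorem then gives
\begin{equation*}
\lim_{\tau\to\infty} q_{F}(\given{\tau}{z}) = \frac{1}{\theta_{C}(z)}\int_{1}^{\infty} u^{-1/\theta(z)-1/\theta_{C}(z)-1}\,du = \frac{1/\theta_{C}(z)}{1/\theta(z)+1/\theta_{C}(z)} = \frac{\theta(z)}{\theta(z)+\theta_{C}(z)},
\end{equation*}
which is the claim.

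The main obstacle, and the step I would spend most care on, is the passage from survival-function level regular variation to the density-level limit $\tau f_{C}(\given{\tau u}{z})/S_{C}(\given{\tau}{z})\to \theta_{C}(z)^{-1}u^{-1/\theta_{C}(z)-1}$. Membership in the Fréchet domain of attraction does not by itself control the density, since $f_{C}$ need not inherit the regular variation of $S_{C}$ without extra regularity. I would address this either by invoking a Von Mises--type hypothesis on the hazard $h_{C}(\given{\cdot}{z})$ analogous to \eqref{CvonMises} (already imposed elsewhere in the paper), which gives the density-level asymptotics directly, or by rewriting the integral as a Lebesgue--Stieltjes integral with respect to $-dS_{C}(\given{\tau u}{z})/S_{C}(\given{\tau}{z})$ and applying Karamata's theorem to the tail integral, so that only the regular variation of $S_{C}$ is needed. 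Either route produces the same limit, and the computation of the resulting Beta-type integral in the last display is immediate.
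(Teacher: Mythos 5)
Your proposal follows essentially the same route as the paper's proof: rescale by $t=\tau u$, use the Fr\'echet domain-of-attraction (regular variation) limits for the survival-function ratios, pass to the limit under the integral by dominated convergence, and evaluate the resulting Beta-type integral to obtain $\theta(z)/(\theta(z)+\theta_C(z))$. You are in fact more careful than the paper, which invokes dominated convergence and the density-level limit for $f_C$ without justification, whereas you supply Potter-type bounds for the domination and explicitly handle the density-level step via a von Mises condition on $h_C$ or a Karamata/Stieltjes argument.
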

\begin{proof}
	We have for any z,
	\begin{equation*}
	q_{F}(\given{\tau}{z}) = \int_\tau^{\infty} \frac{S(\given{t}{z})}{S(\given{\tau}{z})} \frac{f_{C}(\given{t}{z})}{S_{C}(\given{\tau}{z})}dt \in [0,1],
	\quad \tau \geq x_0.
	\end{equation*}
	Since $F$ and $F_{C}$ are in the maximal domain of attraction of the Fr\'echet law, for some $\theta(z)>0$ and $\theta_C(z)>0$, we have
	$$
	\frac{S(\given{\tau t}{z})}{S(\given{\tau}{z})} \to t^{-1/\theta(z)}, \ \text{as} \ \tau \to \infty,
	$$
	and
	$$
	\frac{S_{C}(\given{\tau t}{z})}{S_{C}(\given{\tau}{z})} \to t^{-1/\theta_C(z)}, \ \text{as} \ \tau \to \infty.
	$$
	Therefore,
	\begin{equation*}
	\lim_{\tau \to \infty} q_{F}(\given{\tau}{z}) = \lim_{\tau \to \infty}  \int_\tau^{\infty} \frac{S(\given{t}{z})}{S(\given{\tau}{z})} \frac{f_{C}(\given{t}{z})}{S_{C}(\given{\tau}{z})}dt.
	\end{equation*}
	By the Lebesgue theorem of dominated convergence,
	\begin{align*}
	\lim_{\tau \to \infty} q_{F}(\given{\tau}{z}) &=  \int_1^{\infty} t^{-1/\theta(z)} \frac{1}{\theta_C(z) \tau} t^{-1/\theta_C(z) - 1}d\tau t \\
	& = \frac{1}{\theta_C(z)} \int_1^{\infty} t^{-\frac{\theta_C(z) + \theta(z)}{\theta(z) \theta_C(z)} -1} dt\\
	& = \frac{1}{\theta_C(z)} \frac{\theta(z) \theta_C(z)} {\theta_C(z) + \theta(z)}\\
	& = \frac{\theta(z)} {\theta_C(z) + \theta(z)}.
	\end{align*}
\end{proof}

\subsection{Proof of Theorem \ref{th2}}

We begin with an auxiliary theorem.
\begin{theorem}
	\label{th1}
	Assume condition \eqref{Crate} and \eqref{CBound}.
	Then, there exists a constant $c>0$ such that,
	$$
	\lim_{n\to \infty}  \mathbb{P}  \left(\mathcal{K}(\widehat \theta_{\tau},\theta)  
	\leq c \frac{\sum_{i=1}^n \chi^2(P_{S_0}(\given{.}{z_i}),P_{S_{0,\tau,\theta}}(\given{.}{z_i})) + 4\ln(n)}{\sum_{i=1}^nS_C(\given{\tau}{z_i})S(\given{\tau}{z_i})}     \right) 
	= 1,
	$$
	where $P_{S_0}(\given{.}{z_i})$ is the cumulative distribution function of \eqref{dens} and $P_{S_{0,\tau,\theta}}(\given{.}{z_i})$ is the cumulative distribution function of the joint density of the model \eqref{sem-mod}.
\end{theorem}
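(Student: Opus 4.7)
\textbf{Proof plan for Theorem \ref{th1}.} The plan is to combine Theorem~\ref{thp1}, which already controls $\mathcal{K}(\widehat\theta_\tau,\theta)$ by a stochastic upper bound of the form $(\text{bias}+4\ln n)/\widehat n_\tau$, with a matching high-probability lower bound on the random denominator $\widehat n_\tau$ in terms of the deterministic quantity $\Sigma_n:=\sum_{i=1}^n S(\given{\tau}{z_i})S_C(\given{\tau}{z_i})$. The essential new work is establishing the lower bound on $\widehat n_\tau$.

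First I would compute $\mathbb{E}[\widehat n_\tau]=\sum_i\mathbb{P}(T_i>\tau,\Delta_i=1\mid Z_i=z_i)$. Using the conditional independence of $X$ and $C$ given $Z$, one has $\mathbb{P}(T>\tau\mid Z=z)=S(\given{\tau}{z})S_C(\given{\tau}{z})$, and by the very definition of $q_F(\given{\tau}{z})$, the conditional censoring rate $\mathbb{P}(\Delta=0\mid T>\tau,Z=z)$ equals $q_F(\given{\tau}{z})$. Consequently,
$$
\mathbb{P}(T_i>\tau,\Delta_i=1\mid Z_i=z_i)=\bigl(1-q_F(\given{\tau}{z_i})\bigr)S(\given{\tau}{z_i})S_C(\given{\tau}{z_i}),
$$
and condition \eqref{Crate} then yields $\mathbb{E}[\widehat n_\tau]\geq(1-q_0)\Sigma_n$. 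Next, since the $z_i$ are nonrandom, $\widehat n_\tau$ is a sum of independent Bernoullis so $\operatorname{Var}(\widehat n_\tau)\leq\mathbb{E}[\widehat n_\tau]$, and Chebyshev's inequality gives
$$
\mathbb{P}\!\left(\widehat n_\tau<\tfrac{1}{2}(1-q_0)\Sigma_n\right)\leq\mathbb{P}\!\left(|\widehat n_\tau-\mathbb{E}\widehat n_\tau|>\tfrac{1}{2}\mathbb{E}\widehat n_\tau\right)\leq\frac{4}{(1-q_0)\Sigma_n}\longrightarrow 0,
$$
the divergence $\Sigma_n\to\infty$ being guaranteed by condition \eqref{CBound}, under which each summand $S(\given{\tau}{z_i})S_C(\given{\tau}{z_i})$ is bounded away from $0$ uniformly in $i$.

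I would then extract from Theorem~\ref{thp1} an explicit constant $c_1>0$ and an event $A_n$ with $\mathbb{P}(A_n)\to1$ on which the deterministic inequality $\mathcal{K}(\widehat\theta_\tau,\theta)\leq c_1\bigl[\sum_i\chi^2(P_{S_0}(\given{\cdot}{z_i}),P_{S_{0,\tau,\theta}}(\given{\cdot}{z_i}))+4\ln n\bigr]/\widehat n_\tau$ holds. Writing $B_n$ for the event of the previous paragraph, on $A_n\cap B_n$ (which has probability tending to $1$) we can substitute the lower bound on $\widehat n_\tau$ into the numerator of the Theorem~\ref{thp1} inequality, which produces the claim with constant $c=2c_1/(1-q_0)$. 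The only mild subtlety is the careful bookkeeping of the $O_{\mathbb{P}}$ notation into an honest event of high probability; beyond that, the argument is routine, and I do not anticipate any genuinely hard step.
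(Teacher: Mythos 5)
Your proposal follows essentially the same route as the paper: Theorem \ref{thp1} is combined with a lower bound $\mathbb{E}[\widehat n_\tau]\geq(1-q_0)\sum_{i=1}^n S(\given{\tau}{z_i})S_C(\given{\tau}{z_i})$ derived from condition \eqref{Crate} (the paper's Lemma \ref{L4}, proved by the same identity relating the censored-exceedance probability to $q_F$), plus a concentration bound showing $\widehat n_\tau\geq\mathbb{E}[\widehat n_\tau]/2$ with probability tending to one. The only difference is that you use Chebyshev's inequality where the paper invokes a Chernoff-type exponential bound for binomial sums, and both suffice since \eqref{CBound} makes the expectation diverge; the argument is correct.
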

\begin{proof}
	Theorem \ref{th1} follows from Theorem \ref{thp1} and the following Lemma \ref{L4}.
\end{proof}
\begin{lemma}
	\label{L4}
	Assume condition (\ref{Crate}). Then, for every $\tau \geq x_0$, 
	$$
	\mathbb{E}(\widehat n_\tau) \geq \sum_{i=1}^nS(\given{\tau}{z_i})S_C(\given{\tau}{z_i}) (1-q_0),
	$$
	and
	$$
	\mathbb{P}(\widehat n_\tau \leq \mathbb{E}(\widehat n_\tau)/2) \leq e^{-\mathbb{E}(\widehat n_\tau)/8}.
	$$
\end{lemma}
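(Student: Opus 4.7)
The quantity $\widehat n_\tau = \sum_{i=1}^n Y_i$ is a sum of independent Bernoulli random variables $Y_i = \mathbb{1}_{t_i>\tau}\delta_i = \mathbb{1}_{X_i>\tau,\,X_i\leq C_i}$, one for each observation $(t_i,\delta_i)$ with fixed covariate $z_i$. Because the $Y_i$ are independent (the triples are independent) and bounded in $[0,1]$, both the mean bound and the concentration bound reduce to standard computations once we identify $\mathbb{E}(Y_i)$ explicitly.

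For the first assertion, I would compute $\mathbb{E}(Y_i)$ by conditioning on $Z=z_i$ and using the conditional independence of $X$ and $C$ given $Z$. Writing
\[
\mathbb{E}(Y_i) = \mathbb{P}(X_i>\tau,\,X_i\leq C_i \mid z_i) = \mathbb{P}(X_i>\tau,\,C_i>\tau \mid z_i)\cdot \mathbb{P}(X_i\leq C_i \mid X_i>\tau,\,C_i>\tau,\,z_i),
\]
the first factor equals $S(\tau\mid z_i)S_C(\tau\mid z_i)$ by the independence of $X_i$ and $C_i$ given $z_i$, and the second factor equals $1-q_F(\tau\mid z_i)$ after a short computation that matches the definition of $q_F$ in the text (rewriting the integral as $\int_\tau^\infty f_C(t\mid z_i)S(t\mid z_i)\,dt$ divided by $S(\tau\mid z_i)S_C(\tau\mid z_i)$). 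Summing over $i$ and applying Condition \ref{Crate} to bound $1-q_F(\tau\mid z_i)\geq 1-q_0$ then gives the first inequality.

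For the second assertion I would invoke the classical lower-tail multiplicative Chernoff bound: if $W=\sum_{i=1}^n Y_i$ is a sum of independent $\{0,1\}$ variables with $\mathbb{E}(W)=\mu$, then for every $\delta\in(0,1)$,
\[
\mathbb{P}\bigl(W \leq (1-\delta)\mu\bigr) \leq \exp\!\left(-\frac{\delta^2 \mu}{2}\right).
\]
Setting $W=\widehat n_\tau$, $\mu = \mathbb{E}(\widehat n_\tau)$ and $\delta=\tfrac12$ yields exactly $\mathbb{P}(\widehat n_\tau\leq \mathbb{E}(\widehat n_\tau)/2)\leq e^{-\mathbb{E}(\widehat n_\tau)/8}$.

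Neither step is subtle: the main (mild) obstacle is just bookkeeping in the first step, namely identifying the second conditional probability with $1-q_F(\tau\mid z_i)$; once this identification is made, the rest is the invocation of Chernoff. No further conditions beyond \eqref{Crate} are needed.
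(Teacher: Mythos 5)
Your proposal is correct and follows essentially the same route as the paper: both establish the exact identity $\mathbb{E}(\widehat n_\tau)=\sum_{i=1}^n S(\given{\tau}{z_i})S_C(\given{\tau}{z_i})\bigl(1-q_F(\given{\tau}{z_i})\bigr)$ (the paper by integrating $\int_\tau^\infty f(\given{x}{z_i})S_C(\given{x}{z_i})\,dx$ by parts, you by the equivalent probabilistic conditioning on $\{X_i>\tau,\,C_i>\tau\}$) and then apply Condition \eqref{Crate}, while the second inequality is in both cases the standard lower-tail Chernoff bound for sums of independent Bernoulli variables with $\delta=\tfrac12$.
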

\begin{proof}
	By the density of the model (\ref{dens}), we have 
	$$
	\mathbb{E}(\widehat n_\tau) =  \sum_{i=1}^n \int_\tau^\infty f(\given{x}{z_i})S_C(\given{x}{z_i})dx,
	$$
	where $\widehat{n}_\tau = \sum_{t_i>\tau}\delta_i$. Therefore, 
	\begin{align*}
	\mathbb{E}(\widehat n_\tau) &= \sum_{i=1}^n\left( \left[- S(\given{x}{z_i})S_C(\given{x}{z_i}) \right]_\tau^\infty - \int_\tau^\infty S(\given{x}{z_i})f_C(\given{x}{z_i})dx\right)\\
	&= \sum_{i=1}^n\left( S(\given{\tau}{z_i})S_C(\given{\tau}{z_i}) - S(\given{\tau}{z_i})S_C(\given{\tau}{z_i})\int_\tau^\infty \frac{S(\given{x}{z_i})}{S(\given{\tau}{z_i})}\frac{f_C(\given{x}{z_i})}{S_C(\given{\tau}{z_i})}dx\right)\\
	&= \sum_{i=1}^nS(\given{\tau}{z_i})S_C(\given{\tau}{z_i}) (1 - q_{F}(\given{\tau}{z_i})).
	\end{align*}
	using $q_{F}(\given{\tau}{z_i}) \leq q_0$ for any $z_i$ proves the first part of the Lemma. 

The second inequality follows from the exponential bound for binomial random variables, and is established by using standard techniques going back to Chernoff \cite{chernoff1952}.	
\end{proof}

\begin{lemma}
	\label{L6}
	Assume that $Q$ and $Q_0$ are two equivalent probability measures on a measurable space. Then, 
	$$
	\chi^2(Q,Q_0) \leq \int \left( \ln\frac{dQ}{dQ_0} \right)^2 \exp\left| \frac{dQ}{dQ_0} \right| dQ.
	$$
\end{lemma}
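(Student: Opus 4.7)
\textbf{Proof plan for Lemma \ref{L6}.}
Set $r := dQ/dQ_0$. The equivalence of $Q$ and $Q_0$ gives $r>0$ a.s., and the probability normalization gives $\int r\, dQ_0 = 1$. The plan is to reduce the statement to a deterministic pointwise inequality in $r$ and then integrate. Three steps.

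The first step is to rewrite $\chi^2(Q,Q_0)$ in a form ready to be compared with the right-hand side. Using \eqref{Xi2 entropy} together with $\int r\,dQ_0 = 1$, one has
\begin{align*}
\chi^2(Q, Q_0) = \int r\, dQ - 1 = \int r^2\, dQ_0 - 1 = \int (r-1)^2\, dQ_0 = \int \frac{(r-1)^2}{r}\,dQ,
\end{align*}
where the third equality absorbs the $-1$ into the square via $\int r\,dQ_0 = 1$, and the fourth uses $dQ_0 = r^{-1} dQ$.

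The second, and main, step is the pointwise estimate controlling $(r-1)^2/r$ by $(\ln r)^2$ times the exponential factor appearing in the statement. Via the substitution $x = \ln r$ this reduces to an elementary analytic inequality between $(e^x-1)^2$ and $x^2$ with an exponential correction, which follows from the classical bound $|e^x - 1| \leq |x|\,e^{|x|}$ combined with a separate analysis in the regimes $x\geq 0$ (i.e.\ $r\geq 1$) and $x<0$ (i.e.\ $0<r<1$). Near $r=1$ the bound is tight up to constants since $(r-1)^2 \sim (\ln r)^2$; for $r$ far from $1$ the exponential factor buys the necessary slack.

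The third step is simply to integrate the pointwise bound against $dQ$, which yields the stated inequality. The only real obstacle is the analytic pointwise estimate in step two; the surrounding manipulations use nothing more than $r>0$ a.e.\ and the fact that $Q$ is a probability measure.
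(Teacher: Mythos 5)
Your plan is correct in substance, and it differs from the paper mainly in that the paper offers no proof of Lemma \ref{L6} at all: it simply defers to \cite{grama2014}. Your self-contained route --- rewrite $\chi^2(Q,Q_0)=\int r\,dQ-1=\int (r-1)^2\,dQ_0=\int \frac{(r-1)^2}{r}\,dQ$ with $r=dQ/dQ_0$, prove a pointwise bound, and integrate against $dQ$ --- is exactly the elementary argument the citation hides, and the first step is carried out correctly.

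Two clarifications are needed to make it complete. First, the lemma as printed contains a typo: the factor should be $\exp\left|\ln\frac{dQ}{dQ_0}\right|$ rather than $\exp\left|\frac{dQ}{dQ_0}\right|$; this is the form actually used in the display immediately following the lemma in the proof of Theorem \ref{th2}, and it is the form your substitution $x=\ln r$ implicitly addresses. With the literal factor $e^{|r|}$ the pointwise bound fails near $r=0$ (there $(r-1)^2\to 1$ while $r\ln^2(r)\,e^{|r|}\to 0$ against $dQ_0$), and the integrated inequality is false in general: for instance, if $Q_0$ puts mass $0.1$ on a point where $r=10^{-3}$ and mass $0.9$ where $r\approx 1.11$, then $\chi^2(Q,Q_0)\approx 0.11$ while $\int(\ln r)^2e^{|r|}dQ\approx 0.04$. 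So your step two only goes through for the corrected statement, which is the one the paper needs. Second, in the regime $x=\ln r<0$ the classical bound $|e^x-1|\le|x|e^{|x|}$ is not enough: it only gives $(e^x-1)^2/e^x\le x^2e^{3|x|}$. The "separate analysis" you allude to must instead use $e^x\ge 1+x$, i.e.\ $1-e^x\le -x$, which gives $(1-e^x)^2\le x^2$ and hence $(e^x-1)^2/e^x\le x^2e^{|x|}$; for $x\ge0$ the bound $e^x-1\le xe^x$ (equivalently your classical inequality) suffices. With these two points made explicit, your argument is a complete and correct proof of the lemma in the form in which it is actually applied.
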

\begin{proof}
	The proof can be found in the article \cite{grama2014}.
\end{proof}
Now we proceed to prove Theorem \ref{th2}. We start by providing the following bound:
$$
\max_{z \in \mathbb{Z}} \chi^2(P_{S_0}(\given{\cdot}{z}),P_{S_{0,\tau,\theta}}(\given{\cdot}{z})) 
\leq O_{\mathbb{P}}\left( \rho_\tau^2 \max_{z \in \mathbb{Z}} S(\given{\tau}{z}) S_C(\given{\tau}{z}) \right).
$$
By Lemma \ref{L6}, we have
\begin{align*}
&\max_{z \in \mathbb{Z}} \chi^2(P_{S_0}(\given{.}{z}),P_{S_{0,\tau,\theta}}(\given{.}{z})) \\
&\leq \max_{z \in \mathbb{Z}} \int  \ln^2\frac{dP_{S_0}(\given{t,\delta}{z})}{dP_{S_{0,\tau,\theta}}(\given{t,\delta}{z})} 
\exp\left| \ln \frac{dP_{S_0}(\given{t,\delta}{z})}{dP_{S_{0,\tau,\theta}}(\given{t,\delta}{z})} \right| P_{S_0}(\given{dt,d\delta}{z}).
\end{align*}
For any $x>\tau$, we have
\begin{align*}
\ln\frac{dP_{S_0}(\given{t,\delta}{z})}{dP_{S_{0,\tau,\theta}}(\given{t,\delta}{z})} 
&= \ln \frac{h(\given{t}{z})^\delta S(\given{t}{z})}{h_\theta(\given{t}{z})^\delta S_{\theta,\tau}(\given{t}{z})} \\
&= \ln \frac{h_0(t)^\delta (e^{\beta \cdot z})^\delta  e^{-\int_\tau^t h_0(u)e^{\beta \cdot z}du}}{h_{0,\tau,\theta}(t)^\delta 
	(e^{\beta \cdot z})^\delta e^{-\int_\tau^t h_{0,\tau,\theta}(u)e^{\beta \cdot z}du}}\\
&= \delta \ln \frac{h_0(t)}{(\theta t)^{-1}} - e^{\beta \cdot z} \int_\tau^t(h_0(u) - \frac{1}{\theta u})du.
\end{align*}
It follows
\begin{align*}
\left| \ln\frac{dP_{S_0}(\given{t,\delta}{z})}{dP_{S_{0,\tau,\theta}}(\given{t,\delta}{z})} \right| &= \left|\delta \ln \frac{h_0(t)}{(\theta t)^{-1}} - \int_\tau^t\left(h(\given{u}{z}) - \frac{1}{\theta u}e^{\beta \cdot z}\right)du \right|\\
&= \left| \delta \ln( \frac{th_0(t)}{\theta^{-1}} -1+1) - \int_\tau^t\left(uh(\given{u}{z}) - \frac{1}{\theta}e^{\beta \cdot z}\right)\frac{du}{u}  \right|\\
& \leq   \delta 2\theta \left|th_0(t) - \frac{1}{\theta} \right| + \left| \int_\tau^t\left(uh(\given{u}{z}) - \frac{1}{\theta}e^{\beta \cdot z}\right)\frac{du}{u} \right|.
\end{align*}
Let $\rho_\tau = \sup_{t > \tau} \left| th_0(t) - \frac{1}{\theta} \right|$, then
$$
\left| \ln\frac{dP_{S_0}(\given{t,\delta}{z})}{dP_{S_{0,\tau,\theta}}(\given{t,\delta}{z})} \right| \leq \rho_\tau \left( \delta 2\theta + e^{\beta \cdot z} \ln\frac{t}{\tau} \right).
$$

We have 
\begin{align*}
&\max_{z \in \mathbb{Z}} \chi^2(P_{S_0}(\given{\cdot}{z}),P_{S_{0,\tau,\theta}}(\given{\cdot}{z})) \\
&\leq \max_{z \in \mathbb{Z}} \int \left( \ln\frac{dP_{S_0}(\given{\cdot}{z})}{dP_{S_{0,\tau,\theta}}(\given{\cdot}{z})} \right)^2 
\exp\left| \ln \frac{dP_{S_0}(\given{\cdot}{z})}{dP_{S_{0,\tau,\theta}}(\given{\cdot}{z})} \right| dP_{S_0}(\given{\cdot}{z})\\
&\leq \max_{z \in \mathbb{Z}} \int_{\tau}^\infty \sum_{\delta \in \{0,1\}}  \rho_\tau^2 \left( \delta 2\theta +  e^{\beta \cdot z}\ln\frac{t}{\tau} \right)^2 \exp( \rho_\tau \left( \delta 2\theta + e^{\beta \cdot z} \ln\frac{t}{\tau} \right)) P_{S_0}(\given{t,\delta}{z}) dt\\
&\leq \max_{z \in \mathbb{Z}} \int_{\tau}^\infty  \rho_\tau^2e^{2\beta \cdot z} \ln^2\frac{t}{\tau} \exp\left( \rho_\tau e^{\beta \cdot z}\ln\frac{t}{\tau} \right) S(\given{t}{z})f_C(\given{t}{z}) \\
& + \rho_\tau^2\left( 2\theta + e^{\beta \cdot z}\ln\frac{t}{\tau} \right)^2\exp\left(  \rho_\tau(2\theta+ e^{\beta \cdot z}\ln\frac{t}{\tau} ) \right)f(\given{t}{z})S_C(\given{t}{z}) dt\\
&\leq \max_{z \in \mathbb{Z}} \int_{\tau}^\infty \rho_\tau^2 \left( \frac{t}{\tau}  \right)^{\rho_\tau} 
\left( e^{2\beta \cdot z}\ln^2\frac{t}{\tau} + \left( 2\theta + e^{\beta \cdot z}\ln\frac{t}{\tau} \right)^2e^{2\rho_\tau\theta} \right)\\
&\qquad\qquad\qquad\qquad \times \left( f(\given{t}{z})S_C(\given{t}{z}) + S(\given{t}{z})f_C(\given{t}{z}) \right) dt.
\end{align*}
Let $g(u)=\left( e^{2\beta \cdot z}u^2 + (2\theta+e^{\beta \cdot z}u)^2e^{2\rho_\tau\theta}\right)e^{\rho_\tau u}$.
Then 
\begin{align*}
&\max_{z \in \mathbb{Z}}\chi^2(P_{S_0}(\given{.}{z}),P_{S_{0,\tau,\theta}}(\given{.}{z})) \\
&\leq \max_{z \in \mathbb{Z}}\int_{\tau}^\infty \rho_\tau^2 g\left(\ln\frac{t}{\tau}\right)(f(\given{t}{z}) S_C(\given{t}{z}) + S(\given{t}{z}) f_C(\given{t}{z})) dt.
\end{align*}
Since $S(\given{t}{z}) \leq S(\given{\tau}{z})$ and $S_C(\given{t}{z}) \leq S_C(\given{\tau}{z})$ for every $t > \tau$, we obtain
\begin{align*}
&\max_{z \in \mathbb{Z}}\chi^2(P_{S_0}(\given{.}{z}),P_{S_{0,\tau,\theta}}(\given{.}{z})) \\
&\leq \max_{z \in \mathbb{Z}} 
\rho_\tau^2  
\int_{\tau}^\infty g\left(\ln\frac{t}{\tau}\right)\\
& \qquad\qquad\times  \left( \frac{f(\given{t}{z}) }{S(\given{\tau}{z}) } S(\given{\tau}{z}) S_C(\given{\tau}{z}) +  \frac{f_C(\given{t}{z}) }{S_C(\given{\tau}{z}) } S(\given{\tau}{z}) S_C(\given{\tau}{z}) \right) dt\\
& \leq \max_{z \in \mathbb{Z}} \rho_\tau^2 S(\given{\tau}{z}) S_C(\given{\tau}{z}) \int_{\tau}^\infty g\left(\ln\frac{t}{\tau}\right)\left( \frac{f(\given{t}{z}) }{S(\given{\tau}{z}) }  +  \frac{f_C(\given{t}{z}) }{S_C(\given{\tau}{z}) } \right) dt.
\end{align*}
We can rewrite the ratio $\frac{S(\given{t}{z}) }{S(\given{\tau}{z}) }$ as $e^{-\int_{\tau}^t h(\given{u}{z})du} = e^{-\int_{\tau}^t uh(\given{u}{z})\frac{du}{u} }$. We know that $th(\given{t}{z})$ is bounded below for $t$ large enough by : $th(\given{t}{z}) \geq \frac{e^{\beta \cdot z}}{2 \theta}$. Then, 
\begin{equation*}
\frac{S(\given{t}{z}) }{S(\given{\tau}{z}) } \leq e^{-\frac{e^{\beta \cdot z}}{2\theta} \ln\frac{t}{\tau}}.
\end{equation*}
We know that :
\begin{align*}
\frac{d}{dt} g\left(\ln\frac{t}{\tau}\right)
&= \frac{d}{dt}  \left( \ln^2\frac{t}{\tau}+ \left(  2\theta + \ln\frac{t}{\tau} \right)^2 e^{2\rho_\tau\theta} \right) e^{\rho_\tau \ln\frac{t}{\tau}}\\
& = \left[ \frac{2}{t}\ln\frac{t}{\tau} + e^{2\rho_\tau\theta} \frac{2}{t} \left( 2\theta + \ln\frac{t}{\tau} \right) \right] e^{\rho_\tau \ln\frac{t}{\tau}}\\
&+ \left( \ln^2\frac{t}{\tau}+ \left(  2\theta + \ln\frac{t}{\tau} \right)^2 e^{2\rho_\tau\theta} \right)e^{\rho_\tau \ln\frac{t}{\tau}}\rho_\tau\frac{1}{t}\\
&= \left( \frac{t}{\tau} \right)^{\rho_\tau -1}\Bigg[ \frac{2}{\tau} \ln\frac{t}{\tau} + \frac{2}{\tau}e^{2\rho_\tau\theta}2\theta + \frac{2}{\tau}e^{2\rho_\tau\theta} \ln\frac{t}{\tau} + \frac{\rho_\tau}{\tau}\ln^2\frac{t}{\tau}\\
& \qquad+ \frac{\rho_\tau}{\tau}4\theta^2e^{2\rho_\tau\theta} +\frac{\rho_\tau}{\tau}4\theta\ln\frac{t}{\tau}e^{2\rho_\tau\theta} +\frac{\rho_\tau}{\tau}\ln^2\frac{t}{\tau}e^{2\rho_\tau\theta} \Bigg].
\end{align*}

Integrating by parts the term $\int_{\tau}^\infty g\left(\ln\frac{t}{\tau}\right)\left( \frac{f(\given{t}{z}) }{S(\given{\tau}{z}) } \right)dt$, we have :
\begin{align*}
&\int_{\tau}^\infty g \left(\ln\frac{t}{\tau}\right) \frac{f(\given{t}{z}) }{S(\given{\tau}{z}) } dt \\
& =  \left[ -g\left(\ln\frac{t}{\tau}\right) \frac{S(\given{t}{z}) }{S(\given{\tau}{z}) } \right]^\infty_\tau + \int_{\tau}^\infty \frac{S(\given{t}{z}) }{S(\given{\tau}{z}) } g'\left(\ln\frac{t}{\tau}\right) dx\\
& \leq 4\theta^2e^{2\rho_\tau\theta}+ \int_{\tau}^\infty e^{-\frac{e^{\beta \cdot z}}{2\theta} \ln\frac{t}{\tau}} g'( \ln\frac{t}{\tau}) dt\\
& \leq 4\theta^2e^{2\rho_\tau\theta}  + \int_{\tau}^\infty \left( \frac{t}{\tau} \right)^{\rho_\tau -\frac{e^{\beta \cdot z}}{2\theta} -1} \cdot \Bigg[ \frac{2}{\tau} \ln\frac{t}{\tau} + \frac{2}{\tau}e^{2\rho_\tau\theta}2\theta + \frac{2}{\tau}e^{2\rho_\tau\theta} \ln\frac{t}{\tau}\\ 
&\qquad\qquad\qquad + \frac{\rho_\tau}{\tau}\ln^2\frac{t}{\tau} + \frac{\rho_\tau}{\tau}4\theta^2e^{2\rho_\tau\theta} 
+\frac{\rho_\tau}{\tau}4\theta\ln\frac{t}{\tau}e^{2\rho_\tau\theta} +\frac{\rho_\tau}{\tau}\ln^2\frac{t}{\tau}e^{2\rho_\tau\theta} \Bigg] dt.
\end{align*}
We know that $\rho_\tau$ is supposed to be small for large values of $t$. It is safe to say that $t^{\rho_\tau -\frac{e^{\beta \cdot z}}{2\theta} -1} \leq t^{-\frac{1}{4\theta}}$. Then, $\int_{\tau}^\infty g\left(\ln\frac{t}{\tau}\right) \frac{f(\given{t}{z}) }{S(\given{\tau}{z}) } dt $ can be bounded by a constant.
$$
\int_{\tau}^\infty g\left(\ln\frac{t}{\tau}\right)  \frac{f(\given{t}{z}) }{S(\given{\tau}{z}) } dt = O_{\mathbb{P}}(1).
$$
In the same way, for $h_C(\given{t}{z})\geq c''$, for $t\geq \tau$, we have
$$
\int_{\tau}^\infty g\left(\ln\frac{t}{\tau}\right) \frac{f_C(\given{t}{z}) }{S_C(\given{t}{z}) } dt
= O_{\mathbb{P}}(1).$$
Then:
$$
\max_{z \in \mathbb{Z}} \chi^2(P_{S_0}(\given{\cdot}{z}),P_{S_{0,\tau,\theta}}(\given{\cdot}{z})) 
\leq O_{\mathbb{P}}\left( \rho_\tau^2 \max_{z \in \mathbb{Z}} S(\given{\tau}{z}) S_C(\given{\tau}{z}) \right).
$$
Using the previous bound with $\tau=\tau_n$, we have, as $n \rightarrow \infty$,
\begin{align}\label{proofeq001}
\sum_{i=1}^{n}\chi^2(P_{S_0}(\given{\cdot}{z_i}),P_{S_{0,\tau_n,\theta}}(\given{\cdot}{z_i})) \leq n \rho_{\tau_n}^2 
\max_{z \in \mathbb{Z}} S(\given{\tau_n}{z}) S_C(\given{\tau_n}{z}) \rightarrow 0,
\end{align}
since $\max_{z \in \mathbb{Z}} S(\given{\tau_n}{z}) S_C(\given{\tau_n}{z})\leq 1$ and by condition \eqref{condMises00A}, 
$$
n \rho_{\tau_n}^2 \rightarrow 0 \quad \text{as} \quad n \rightarrow \infty.
$$
On the other hand, from condition \eqref{denominator001}, we have 
\begin{align*} \label{proofeq002}
\sum_{i=1}^nS_C(\given{\tau_n}{z_i})S(\given{\tau_n}{z_i}) \rightarrow \infty \quad \text{as} \quad n \rightarrow \infty.
\end{align*}
Since, by Theorem \ref{th1}, we have:
$$\lim_{n\to \infty}  \mathbb{P}  \left(\mathcal{K}(\widehat \theta_{\tau_n},\theta)  
\leq c \frac{  \sum_{i=1}^{n}\chi^2(P_{S_0}(\given{\cdot}{z_i}),P_{S_{0,\tau_n,\theta}}(\given{\cdot}{z_i}))+ 4\ln(n)}{\sum_{i=1}^nS_C(\given{\tau_n}{z_i})S(\given{\tau_n}{z_i})}     \right) 
= 1,$$
using \eqref{KL entropy}, it is easy to see that $\mathcal{K}(\widehat \theta_{\tau_n},\theta)  \rightarrow 0$ as $n \rightarrow \infty$, which means that 
$$
\widehat \theta_{\tau_n} \xrightarrow[n\rightarrow \infty]{\mathbb{P}} \theta .
$$

\subsection{Proof of Theorem \ref{Th5}}

\begin{proof}
	Starting from the auxiliary result in the proof for the Theorem \ref{th2}, we have
	$$
	\max_{z \in \mathbb{Z}} \chi^2(P_{S_0}(\given{\cdot}{z}),P_{S_{0,\tau,\theta}}(\given{\cdot}{z})) 
	\leq O_{\mathbb{P}}\left(\rho_\tau^2 \max_{z \in \mathbb{Z}}  S(\given{\tau}{z}) S_C(\given{\tau}{z}) \right)\quad \text{as } \tau \rightarrow \infty.
	$$
	We now want to find a sequence of threshold $(\tau_n)$ such that 
	$$
	\max_{z \in \mathbb{Z}}  S(\given{\tau_n}{z})S_C(\given{\tau_n}{z}) \rho_{\tau_n}^2 \leq c_0 \left( \frac{\ln n}{n} \right).
	$$
	Suppose that there exists a sequence of $(\tau_n)$ and a constant $c_0$ such that 
	\begin{equation}
	\label{c1}
	\max_{z \in \mathbb{Z}}  \chi^2(P_{S_0}(\given{.}{z}),P_{S_{0,\tau_n,\theta_{\tau_n}}}(\given{.}{z})) = c_0 \frac{\ln n}{n} .
	\end{equation}
	Since the baseline hazard function is assumed to satisfy
	(\ref{cHall1}),
	we have, 
	\begin{equation*}
	|xh(\given{x}{z}) - \frac{1}{\theta}e^{\beta \cdot z}|\leq c_1'x^{-\frac{\alpha e^{\beta \cdot z}}{\theta}}.
	\end{equation*}
	From (\ref{cHall1}), we find the following lower bound for $xh(\given{x}{z})$:
	\begin{align*}
	xh(\given{x}{z}) &\geq \frac{e^{\beta \cdot z}}{\theta}-|xh(\given{x}{z}) - \frac{e^{\beta \cdot z}}{\theta}| \\
	&\geq \frac{e^{\beta \cdot z}}{\theta}-c_1x^{-\frac{\alpha e^{\beta \cdot z}}{\theta}} .
	\end{align*}
	We note that 
	\begin{align*}
	\max_{z \in \mathbb{Z}}  S(\given{\tau_n}{z}) &= \max_{z \in \mathbb{Z}}  \exp \left( - \int_{x_0}^{\tau_n} h(\given{t}{z})dt \right)\\
	& =\max_{z \in \mathbb{Z}}  \exp \left( - \int_{x_0}^{\tau_n} th(\given{t}{z})\frac{dt}{t} \right)\\
	& \leq\max_{z \in \mathbb{Z}}  \exp \left( - \frac{e^{\beta \cdot z}}{ \theta} \ln \frac{\tau_n}{x_0} - \frac{\theta c_1'}{\alpha e^{\beta \cdot z}}\left(\tau_n^{-\frac{\alpha e^{\beta \cdot z}}{\theta}} - x_0^{-\frac{\alpha e^{\beta \cdot z}}{\theta}}\right)\right)\\
	& \leq\max_{z \in \mathbb{Z}}  \exp \left( - \frac{e^{\beta \cdot z}}{\theta} \ln \frac{\tau_n}{x_0} \right) \exp \left(\frac{c_1'}{\alpha}x_0^{-\alpha}\right).
	\end{align*}
	We now find bounds for $xh_C(\given{x}{z})$.
	\begin{align*}
	xh_C(\given{x}{z})-  \frac{\gamma e^{\beta \cdot z}}{\theta} &\geq -|xh_C(\given{x}{z}) -  \frac{\gamma e^{\beta \cdot z}}{\theta}| \\
	xh_C(\given{x}{z}) &\geq  \frac{\gamma e^{\beta \cdot z}}{\theta}-|xh_C(\given{x}{z}) -  \frac{\gamma e^{\beta \cdot z}}{\theta}| \\
	&\geq  \frac{\gamma e^{\beta \cdot z}}{\theta}-c_2x^{-\mu} .
	\end{align*}
	We have 
	\begin{align*}
	\max_{z \in \mathbb{Z}}  S_C(\given{\tau_n}{z}) & =  \max_{z \in \mathbb{Z}} \exp \left( - \int_{x_0}^{\tau_n} h_C(\given{x}{z})dx \right)\\
	& = \max_{z \in \mathbb{Z}}  \exp \left( - \int_{x_0}^{\tau_n} xh_C(\given{x}{z}) \frac{dx}{x} \right)\\
	& \leq  \max_{z \in \mathbb{Z}} \exp \left( -  \frac{\gamma e^{\beta \cdot z}}{\theta} \ln \frac{\tau_n}{x_0} - \frac{c_2}{\mu}\left(\tau_n^{-\mu} - x_0^{-\mu}\right)\right)\\
	& \leq \max_{z \in \mathbb{Z}}  \exp \left( -  \frac{\gamma e^{\beta \cdot z}}{\theta} \ln \frac{\tau_n}{x_0} \right) \exp \left(\frac{c_2}{\mu}x_0^{-\mu}\right).
	\end{align*}
	Then,
	\begin{equation}\label{proofTau}
	\max_{z \in \mathbb{Z}}  S(\given{\tau_n}{z})S_C(\given{\tau_n}{z})\rho_{\tau_n}^2 \leq \max_{z \in \mathbb{Z}}  (c_1')^2 \tau_n^{-(\frac{e^{\beta \cdot z}}{ \theta}+ \frac{\gamma e^{\beta \cdot z}}{\theta}+2\frac{\alpha e^{\beta \cdot z}}{\theta} )}x_0^{\frac{e^{\beta \cdot z}}{ \theta}+ \frac{\gamma e^{\beta \cdot z}}{\theta}}c_3.
	\end{equation}
	Solving the equation (\ref{proofTau}) for $\tau_n$ yields
	\begin{align*}
	&\max_{z \in \mathbb{Z}}  (c_1')^2 \tau_n^{-(\frac{e^{\beta \cdot z}}{ \theta}+ \frac{\gamma e^{\beta \cdot z}}{\theta}+2\frac{\alpha e^{\beta \cdot z}}{\theta} )}x_0^{\frac{e^{\beta \cdot z}}{ \theta}+ \frac{\gamma e^{\beta \cdot z}}{\theta}}c_3= c_0\left( \frac{\ln n}{n} \right), \\
	&\tau_n =  n^{\frac{\theta}{\underset{z \in \mathbb{Z}}{\min} (e^{\beta \cdot z}) (1+\gamma +2\alpha)}} \ln^{-\frac{\theta}{\underset{z \in \mathbb{Z}}{\min} (e^{\beta \cdot z}) (1+\gamma +2\alpha)}} n, \\
	&\tau_n =  n^{\frac{\sfrac{\theta}{ \underset{z \in \mathbb{Z}}{\min} (e^{\beta \cdot z})}}{1+\gamma+2\alpha}} \ln^{-\frac{\sfrac{\theta}{ \underset{z \in \mathbb{Z}}{\min} (e^{\beta \cdot z})}}{1+\gamma +2\alpha}} n.
	\end{align*}
	Now, we search a lower bound for $\sum_{i=1}^nS_(\given{\tau}{z_i})S_C(\given{\tau_n}{z_i}) $. We have for any $z$,
	\begin{align*}
	S(\given{\tau_n}{z}) &= \exp\left( - \int_{x_0}^{\tau_n} th(\given{t}{z})\frac{dt}{t}\right) \\
	&\geq  \exp\left( - \frac{e^{\beta \cdot z}}{\theta} \ln\frac{\tau_n}{x_0} - \frac{\theta c_1}{-\alpha e^{\beta \cdot z}}\left(\tau_n^{-\frac{\alpha e^{\beta \cdot z}}{\theta}} - x_0^{-\frac{\alpha e^{\beta \cdot z}}{\theta}}  \right)\right)\\
	&\geq  \exp \left( - \frac{e^{\beta \cdot z}}{\theta} \ln \frac{\tau_n}{x_0} \right) \exp \left(-\frac{\theta c_1'}{\alpha e^{\beta \cdot z}}x_0^{-\frac{\alpha e^{\beta \cdot z}}{\theta}}\right),
	\end{align*}
	and
	\begin{align*}
	S_C(\given{\tau_n}{z}) &= \exp \left( - \int_{x_0}^{\tau_n} xh_C(x) \frac{dx}{x} \right) \\
	&\geq  \exp \left( -  \frac{\gamma e^{\beta \cdot z}}{\theta} \ln \frac{\tau_n}{x_0} \right) \exp \left(-\frac{c_2}{\mu}x_0^{-\mu}\right).
	\end{align*}
	Choosing $z$ such as minimizing $\sum_{i=1}^nS_(\given{\tau}{z_i})S_C(\given{\tau_n}{z_i}) $ yields the result of the theorem.
\end{proof}

\bibliographystyle{tfnlm}
\bibliography{Cox_adaptive_varXiv}

\end{document}